\newtheorem{theorem}{Theorem}
\newtheorem{corollary}{Corollary}
\newtheorem{lemma}{Lemma}
\newtheorem{proposition}{Proposition}
\newtheorem{definition}{Definition}
\newtheorem{example}{Example}
\newtheorem{remark}{Remark}
\def\bi{\begin{itemize}}
\def\ei{\end{itemize}}
\newenvironment{proof}[1][Proof]{\noindent\textbf{#1.} }{\ \rule{0.5em}{0.5em}}
\begin{document}

\title{Continuous time analysis of fleeting discrete price moves}
\author{\textsc{Neil Shephard} \\
\textit{Department of Economics and Department of Statistics,}\\
\textit{Harvard University}\\
\texttt{shephard@fas.harvard.edu} \and \textsc{Justin J. Yang} \\
\textit{Department of Statistics,}\\
\textit{Harvard University}\\
\texttt{juchenjustinyang@fas.harvard.edu}}
\maketitle

\begin{abstract}
This paper proposes a novel model of financial prices where: (i) prices are
discrete; (ii) prices change in continuous time; (iii) a high proportion of
price changes are reversed in a fraction of a second. Our model is
analytically tractable and directly formulated in terms of the calendar time
and price impact curve. The resulting c\`{a}dl\`{a}g price process is a
piecewise constant semimartingale with finite activity, finite variation and
no Brownian motion component. We use moment-based estimations to fit four
high frequency futures data sets and demonstrate the descriptive power of
our proposed model. This model is able to describe the observed dynamics of
price changes over three different orders of magnitude of time intervals.
\end{abstract}


\noindent Keywords: integer-valued stochastic process, L\'{e}vy basis, L\'{e}%
vy process, trawl process, market microstructure, realized variance,
variance signature plot

\baselineskip=20pt

\section{Introduction}

Extracting information from the order and trading flow in financial markets
is important for trading at high and low frequencies, formulating policy and
regulation and studying forensic finance. The distinctness about this area
is the frequent focus on the very short term, usually over time intervals
which may be much less than a second. At very short time scales, three
essential aspects dominate: (i) prices are discrete, due to the tick
structure of the market; (ii) prices change in continuous time; (iii) a high
proportion of price changes are fleeting, reversed in a fraction of a
second. However, the econometricians' cupboard is practically bare, for
there are nearly no models or techniques that focus on all of the three
features and put the role of the calendar time on center stage rather than
the tick time.

In this paper we develop a novel continuous calendar time framework for
prices out of a desire to capture these features in an analytically
tractable but potentially semi-parametric manner. We will show that our
model captures the serial dependence in price changes over three different
time scales: 0.1 seconds, 1 seconds, 10 seconds and 1 minute.

Although our work is a distinctive move away from the existing literature,
it will relate to a number of aspects that are often dealt with one at a
time. Here we discuss some of this material.

Most of the econometric work on the modelling of high frequency financial
data focuses on the times between trades and quote updates. This literature
splits into two: the modelling of the conditional mean duration between
events given past data and the modelling of the conditional intensity of
trade arrivals given past data. It is reviewed by, for example, \cite%
{Engle(00)}, \cite{EngleRussell(10)} and \cite{Hautsch(12)}. The former was
initiated in \cite{EngleRussell(98)} and contributions include \cite%
{ZhangRussellTsay(01)}, \cite{HamiltonJorda(02)} and \cite%
{CipolliniEngleGallo(07)}. The latter focuses around, for example, \cite%
{Russell(99)}, \cite{Bowsher(07)} and \cite{Hautsch(12)}, building on the
stochastic analysis of \cite{Hawkes(72)}.

There is much less econometric work on the discreteness of high frequency
data. Papers that focus on discreteness include \cite{RydbergShephard(03)},
\cite{RussellEngle(06)}, \cite{LiesenfeldNoltePohlmeier(06)}, \cite%
{Large(11)}, \cite{Oomen(05)}, \cite{Oomen(06)} and \cite{GriffinOomen(08)}.
Some of the early work on the impact of discreteness in practice includes
\cite{Harris(90)}, \cite{GottliebKalay(85)}, \cite{BallTorousTschoegl(85)}
and \cite{Ball(88)}. A significant approach to deal with discreteness is to
build continuous time models for prices on the positive half-line that are
then rounded to induce discreteness, sometimes with extra additive
measurement error. Examples include, for a variety of purposes, \cite%
{Hasbrouck(99bidask)}, \cite{Rosenbaum(09)}, \cite{DelattreJacod(97)}, \cite%
{Jacod(96)} and \cite{LiMykland(14)}. Also note the statistical work by \cite%
{KolassaMcCullagh(90)}.

The most comparable literatures to our own include \cite%
{BacryDelattreHoffmanMuzy(13)}, \cite{BacryDelattreHoffmanMuzy(13limit)},
\cite{FodraPham(13HFT)} and \cite{FodraPham(13)}. See also \cite%
{FauthTudor(12)}. Bacry et al. model the evolution of price changes as the
difference of two self-exciting and interacting simple counting processes.
These multivariate Hawkes processes have intensities that react to previous
moves, so an up move in the price will temporarily increase the intensity of
a down move, creating the chance that the move will turn out to be fleeting.
This elegant model only allows unit price moves, but could be extended,
while the dynamics is tightly parameterized. Fodra and Pham directly assume
an irreducible Markov chain structure on the sequence of price changes,
which is less flexible as only the current price direction will impact the
next jump direction.

Our paper has its intellectual roots in two papers. \cite%
{BarndorffNielsenPollardShephard(12)} build L\'{e}vy processes (continuous
time random walks) that are integer-valued. We are also inspired by the
stationary integer-valued processes of \cite%
{BarndorffNielsenLundeShephardVeraart(14)}. Their processes are related to
the up-stairs processes of \cite{WolpertTaqqu(05)} and the random measure
processes of \cite{WolpertBrown(11)}. Both of these processes are
stationary. \cite{BarndorffNielsenLundeShephardVeraart(14)} also bring out
the relationship between their processes and $\mathrm{M}/G/\infty $ queues
(e.g. \cite{Lindley(56)}, \cite{Reynolds(68)} and \cite[Ch. 6.31]%
{Bartlett(78)}). They also connect these models to the so-called mixed
moving average models of \cite{SurgailisRosinskiMandrekarCambanis(93)}. See
also the work of \cite{FuchsStelzer(13)}. None of these papers can be used
directly as a coherent model of high frequency data. Our paper fills this
essential gap.

Our new approach will involve events arriving in continuous time, whose
impacts on the prices may be fleeting and of variable size. The model is
directly formulated in terms of the price impact of news. Each fleeting move
is a temporary change in the price that has a random survival time until its
impact disappears. The model allows a decomposition of the discrete price
process into a continuous time random walk (due to permanent impacts) plus a
temporary fleeting component (due to market microstructure noise). The
resulting c\`{a}dl\`{a}g price process will be a piecewise constant
semimartingale with finite activity, finite variation and no Brownian motion
component. It is also capable of generating negative autocorrelations for
price changes that is consistent with the empirical observations. We have
non-parametric freedom in choosing the level of dependence in the
noises---which can even have long memory if this is needed in the data.
Alternatively, the applied researcher can tightly parameterize the model if
necessary.

In this paper our model is static: the parameters are time-invariant, not
adapting to past data. This is an important deficiency, but a stochastic
time-change can deal with most of these challenges. We will address them in
a follow up paper. Our goal here is to set down a framework that is both
empirically compelling and statistically scalable in the future work.

Finally, throughout our empirical work we have used trade prices. We could
have used our model on the best bid or ask prices. This would have had the
advantage that the best bid or ask are prices an investor can trade
immediately, while trade prices are those which have been traded by someone
in the past.

The outline of this paper is as follows. In Section \ref{sect:Levy bases} we
set up the probability structure of our model and review a couple of
building blocks from previous papers. In Section \ref{sect:price process} we
introduce the core of our contribution: defining our model for prices and
providing an analysis of this process and the corresponding return sequence.
In Section \ref{sect:moment based inference} we discuss the moment-based
estimations for these models, while in Section \ref{sect:futures data} we
apply these estimation methods to real data. Section \ref{sect:conclusion}
concludes. The Appendix has four sections. The first collects the proofs of
the various theorems given in the main text of the paper as well as the
details of some remarks. The second outlines how to compute probability mass
functions of price changes using the inverse fast Fourier transform. The
third details our data cleaning procedures. The fourth gives a
non-parametric estimator of a part of our model.

\section{Integer-valued stochastic process in continuous time\label%
{sect:Levy bases}}

\subsection{Poisson random measure}

Our framework will revolve around (i) events arriving in continuous time,
(ii) events whose impacts may be fleeting with a random survival time and
(iii) events of variable size and direction. To generate these events, it is
natural to base the underlying randomness on a three dimensional Poisson
random measure $N$ (see, e.g., \cite{Kingman(93)} for a review) with
intensity measure
\begin{equation*}
\mathbb{E}\left( N\left( \mathrm{d}y,\mathrm{d}x,\mathrm{d}s\right) \right)
=\nu (\mathrm{d}y)\mathrm{d}x\mathrm{d}s.
\end{equation*}%
Here $s$ is time (with arrivals randomly scattered on $%
\mathbb{R}
$), $x$ is random height (uniformly scattered over $\left[ 0,1\right] $),
which will be the random source for the survival time of the fleeting event,
and $y$ marks integer size (with direction) of events. These names will
become clearer later in Figure \ref{fig:Levybasis} and \ref{fig:fleeting}.
As with all Poisson random measures, the chance that there are two points
with common height or time is zero.

Price moves can be up or down, but zero is ruled out. Thus the size of the
events will be assumed to have a L\'{e}vy measure $\nu (\mathrm{d}y)$
concentrated on $y\in
\mathbb{Z}
\backslash \left\{ 0\right\} $, the non-zero integers. With no confusions,
we will sometimes abuse the notation $\nu \left( y\right) $ to denote the
mass of the L\'{e}vy measure centered at $y$, so
\begin{equation*}
\nu \left( \mathrm{d}y\right) =\sum_{y\in
\mathbb{Z}
\backslash \left\{ 0\right\} }\nu \left( y\right) \delta _{\left\{ y\right\}
}\left( \mathrm{d}y\right) ,
\end{equation*}%
where $\delta _{\left\{ y\right\} }\left( \mathrm{d}y\right) $ is the Dirac
point mass measure centered at $y$. Throughout this paper, we assume that%
\footnote{%
An equal sign with a triangle above $\triangleq $ means a definition.} $%
\left\Vert \nu \right\Vert \triangleq \int_{-\infty }^{\infty }\nu \left(
\mathrm{d}y\right) =\sum_{y\in
\mathbb{Z}
\backslash \left\{ 0\right\} }\nu \left( y\right) <\infty $.

\begin{remark}
We will see in a moment that the mass $\nu \left( y\right) $ represents the
intensity of events of size $y$, so in aggregate the L\'{e}vy measure $\nu $
will simultaneously controls the scope of all the possible jumping sizes in
addition to their individual intensity.
\end{remark}

\subsection{L\'{e}vy basis and L\'{e}vy process}

Our model will be based on the resulting homogeneous\footnote{%
Homogeneity here refers to the height and time as the points in the L\'{e}vy
basis are uniformly scatted on $\left[ 0,1\right] \times
\mathbb{R}
$.} L\'{e}vy basis on $[0,1]\times
\mathbb{R}
\longmapsto
\mathbb{Z}
\backslash \left\{ 0\right\} $, which records the size $y\in
\mathbb{Z}
\backslash \left\{ 0\right\} $ at each point in time $s\in
\mathbb{R}
$ and height $x\in \lbrack 0,1]$. It is given by%
\begin{equation*}
L(\mathrm{d}x,\mathrm{d}s)\triangleq \int_{-\infty }^{\infty }yN\left(
\mathrm{d}y,\mathrm{d}x,\mathrm{d}s\right) ,\ \ \ \ \left( x,s\right) \in %
\left[ 0,1\right] \times
\mathbb{R}
,
\end{equation*}%
and for any Borel measurable set $S\subseteq \left[ 0,1\right] \times
\mathbb{R}
$ we let%
\begin{equation*}
L\left( S\right) \triangleq \int_{\left[ 0,1\right] \times
\mathbb{R}
}1_{S}\left( x,s\right) L\left( \mathrm{d}x,\mathrm{d}s\right) ,
\end{equation*}%
where $1_{S}$ is the indicator function of $S$. To connect with the later
discussion, a L\'{e}vy process generated from $L$ can be defined as%
\begin{equation*}
L_{t}\triangleq L\left( D_{t}\right) =\int_{0}^{t}\int_{0}^{1}L(\mathrm{d}x,%
\mathrm{d}s),
\end{equation*}%
where $D_{t}\triangleq \lbrack 0,1]\times (0,t]$ is a rectangle that grows
with $t$, so $L_{t}$ just counts up the points in the L\'{e}vy basis from
time $0$ to time $t$.

\begin{example}
\label{ex: Levy basis} Suppose that $\nu (\mathrm{d}y)=\left\Vert \nu
\right\Vert \left( 0.5\times \delta _{\left\{ 1\right\} }\left( \mathrm{d}%
y\right) +0.5\times \delta _{\left\{ -1\right\} }\left( \mathrm{d}y\right)
\right) $. Then $\left\Vert \nu \right\Vert $ is the arrival rate of events
in time, each with a random height and having size $\pm 1$ with equal
probability. Figure \ref{fig:Levybasis} plots a Skellam L\'{e}vy basis $L$
using $\left\Vert \nu \right\Vert =7$, taking on $1,-1$ with black and white
dots respectively.
\begin{figure}[t]
\centering
\includegraphics{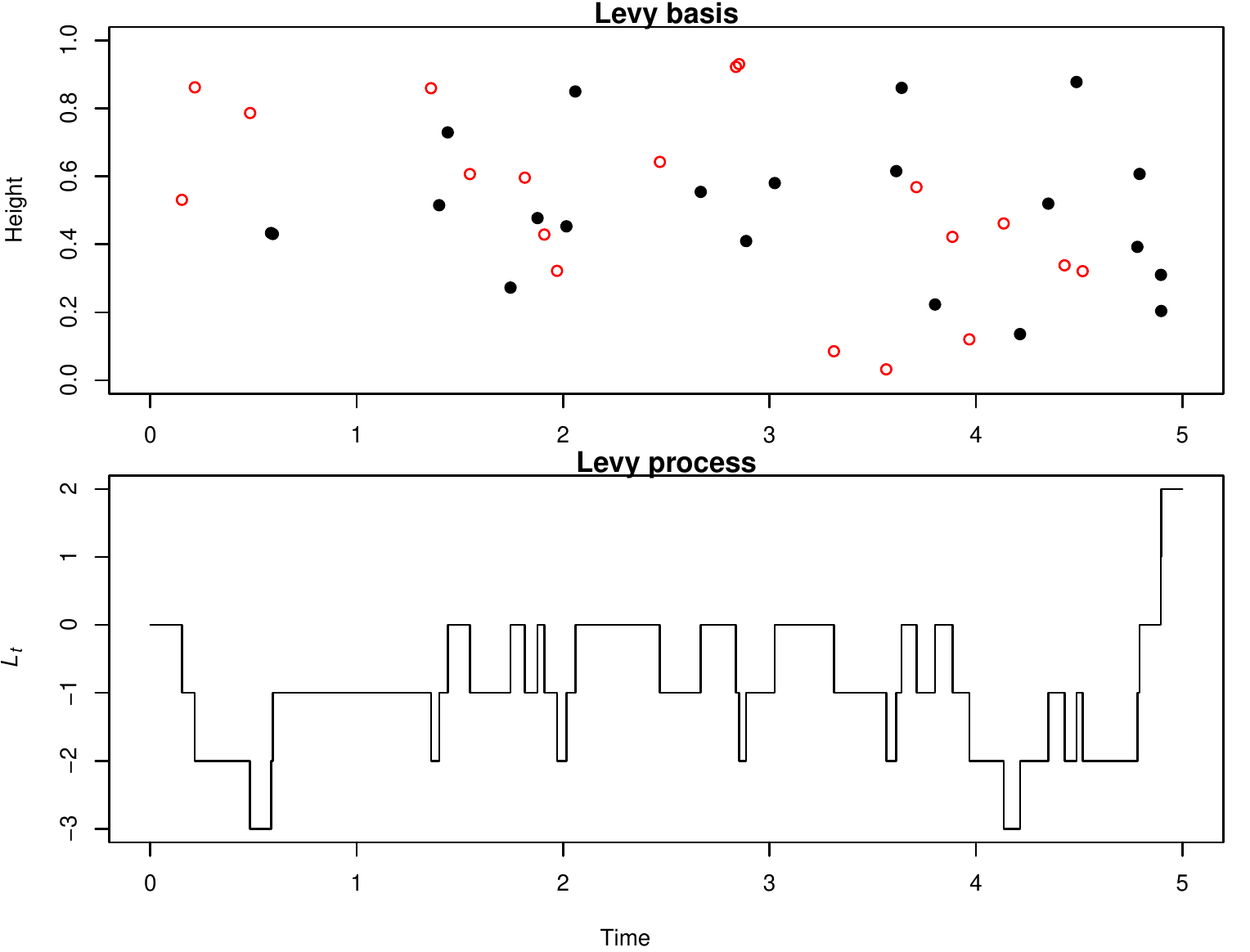}
\caption{Top: L\'{e}vy basis $L(\mathrm{d}x,\mathrm{d}s)$, where the
horizontal axis $s$ is time and the vertical axis $x$ is height, which plays
no rule in this construction of the L\'{e}vy process in the lower panel.
Black dots denote $1$, white ones $-1$. Bottom: The corresponding L\'{e}vy
process, which sums up all the effects in the L\'{e}vy basis (in the upper
panel) from time $0$ to time $t$, while the vertical axis here is the value
of the L\'{e}vy process, which jumps up by 1 by the effect of black dots and
down by 1 by white ones. Code: \texttt{LpTprocess\_Illustration.R}.}
\label{fig:Levybasis}
\end{figure}
The lower panel shows the corresponding Skellam L\'{e}vy process, which is
the difference of two independent Poisson processes with intensity $%
\left\Vert \nu \right\Vert /2$.
\end{example}

\subsection{Stationary trawl process}

To introduce fleeting moves, the random heights in the L\'{e}vy basis will
be exploited. We start from a fixed shape\footnote{%
For technical reasons, we need to assume that the fixed set $A$ is \emph{%
closed on the right} and \emph{open on the left}, that is, for every $x\in %
\left[ b,1\right] $, all the set $A\cap \left\{ \left( x,s\right) :s\leq
0\right\} $ must be a union of half-closed intervals of the form $(a,b]$.
This is enforced so the resulting jump process is \emph{c\`{a}dl\`{a}g}.
Besides, we need to assume that the projection of $A$ on the vertical axis
has Lebesgue measure $b$ so the parameter $b$ is well-defined and
statistically identifiable.} $A\subseteq \lbrack b,1]\times \left( -\infty ,0%
\right] $, where $b\in \lbrack 0,1]$ is called the permanence parameter.
Throughout we assume that the area of $A$, $leb\left( A\right) $, is finite.
\cite{BarndorffNielsenLundeShephardVeraart(14)} call $A$ a trawl for the
case of $b=0$, which is the core of their stationary integer-valued
processes. Here we call $A$ a squashed trawl, a minor variant on their idea.

\begin{definition}
\label{Def.: Squashed monotonic trawl}A squashed trawl $A$ defined by a
trawl function $d$ is obtained from%
\begin{equation*}
A\triangleq \left\{ (x,s):s\leq 0,b\leq x<d(s)\right\} ,
\end{equation*}%
where $d:\left( -\infty ,0\right] \longmapsto \lbrack b,1]$ is continuous
and monotonically increasing ($d\left( s_{1}\right) \leq d\left(
s_{2}\right) $ for all $s_{1}\leq s_{2}\leq 0$) and satisfies the following
regularity conditions: $d\left( -\infty \right) \triangleq
\lim\limits_{s\rightarrow -\infty }d\left( s\right) =b$, $d\left( 0\right)
=1 $ and $\int_{-\infty }^{0}\left( d\left( s\right) -b\right) \mathrm{d}%
s<\infty $.
\end{definition}

We now drag the set $A$ through time without changing its height%
\begin{equation*}
A_{t}\triangleq A+\left( 0,t\right) =\left\{ (x,s):s\leq t,b\leq
x<d(s-t)\right\} ,\ \ \ \ t\geq 0.
\end{equation*}%
Notice that $leb(A_{t})=leb(A)<\infty $ for all $t\geq 0$. Then the \emph{%
stationary (trawl) process} is defined as $L\left( A_{t}\right) $ for $t\geq
0$. In a moment this will be a component of our proposed price process.

\begin{example}
\label{Ex.: Stationary trawl process}The upper panel of Figure \ref%
{fig:fleeting} illustrates $A_{t}$ when $d\left( s\right) =0.5+\left(
1-0.5\right) e^{2s}$. The middle panel of Figure \ref{fig:fleeting} also
shows $L\left( A_{t}\right) $ when $L$ is a Skellam basis, which sums up all
the effects (both positive and negative) captured by (or surviving in) the
trawl. Dynamically, $L\left( A_{t}\right) $ will move up by $1$ if the
moving squashed trawl $A_{t}$ either captures one positive event that has
height above $b$ or releases a negative one; conversely, it will move down
by $1$ if vice versa. Also notice that $L\left( A_{0}\right) $ might not be
necessarily zero.
\begin{figure}[t]
\centering
\includegraphics{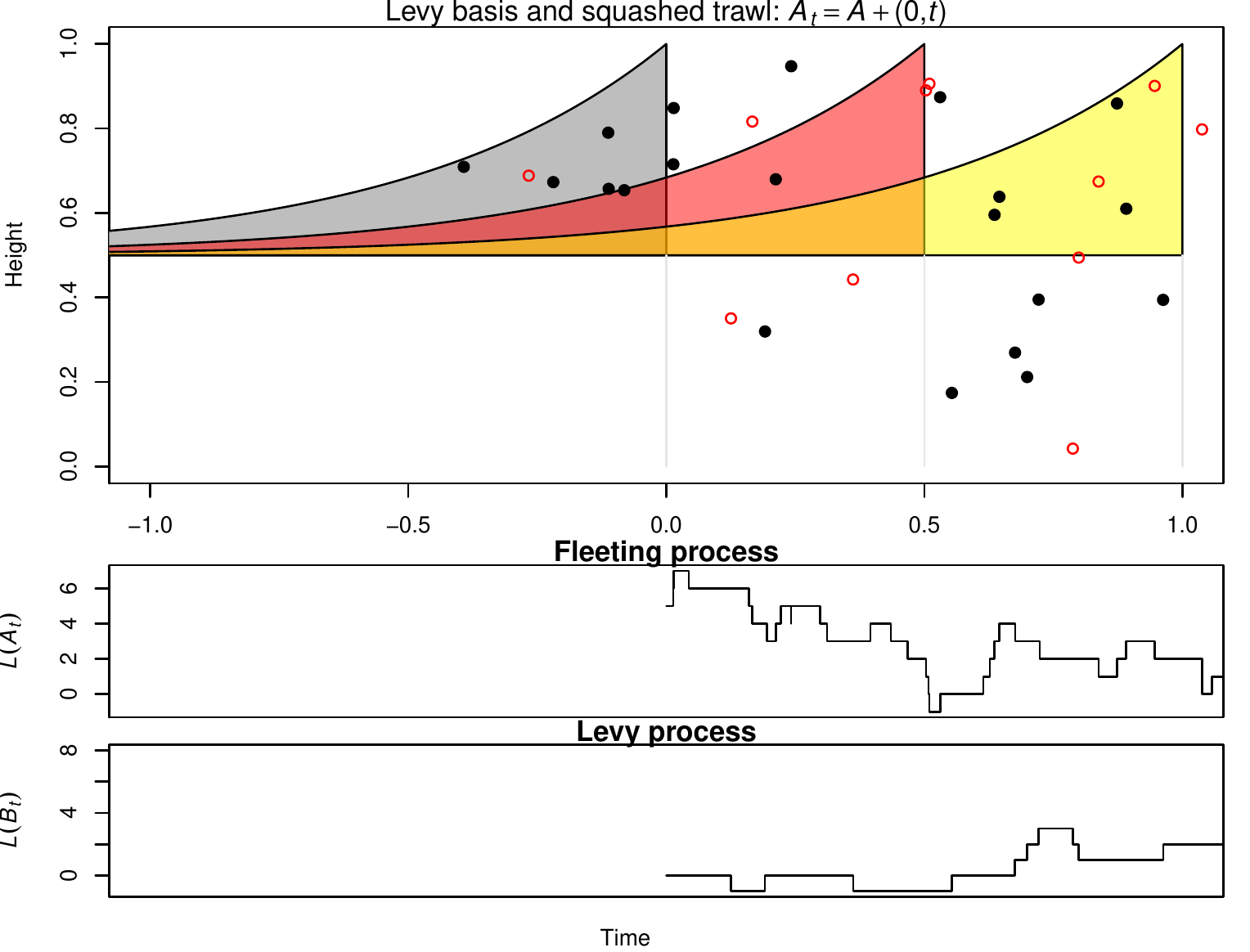}
\caption{A moving squashed trawl $A_{t}$ is joined by the L\'{e}vy basis $L(%
\mathrm{d}x,\mathrm{d}s)$, where the horizontal axis $s$ is time and the
vertical axis $x$ is height. The shaded area is an example of the trawl $A$
generated by the trawl function $d$, while we also show the outlines of $%
A_{t}$ when $t=1/2$ and $t=1$. Also shown below is the implied stationary
process $L(A_{t})$ and the L\'{e}vy process $L(B_{t})$ for $t\geq 0$, where $%
B_{t}=[0,b)\times (0,t]$. Code: \texttt{LpTprocess\_Illurstration.R}.}
\label{fig:fleeting}
\end{figure}
\end{example}

Throughout we use $\kappa _{j}(X)$ as a generic notation for the $j$-th
cumulant of an arbitrary random variable $X$. Recall that $L_{1}=L\left(
D_{1}\right) =\int_{0}^{1}\int_{0}^{1}L(\mathrm{d}x,\mathrm{d}s)$. In the
following Proposition, we rephrase the key properties of the stationary
process $L\left( A_{t}\right) $ mentioned in \cite%
{BarndorffNielsenLundeShephardVeraart(14)} under the squashed trawl variant.

\begin{proposition}
\label{prop: stat trawl}If $leb\left( A\right) <\infty $, then $L\left(
A_{t}\right) $ is well-defined and \emph{strictly stationary}. If $\kappa
_{2}\left( L_{1}\right) <\infty $ as well, then it is \emph{covariance
stationary} and for $t>s$%
\begin{equation*}
\func{Cov}\left( L\left( A_{t}\right) ,L\left( A_{s}\right) \right)
=leb\left( A_{t-s}\cap A\right) \kappa _{2}\left( L_{1}\right) ,\ \func{Cor}%
\left( L\left( A_{t}\right) ,L\left( A_{s}\right) \right) =\dfrac{leb\left(
A_{t-s}\cap A\right) }{leb\left( A\right) }.
\end{equation*}%
Furthermore, for any $t\geq 0$,
\begin{equation}
leb\left( A_{t}\cap A\right) =\int_{-\infty }^{-t}\left( d\left( s\right)
-b\right) \mathrm{d}s  \label{prop result}
\end{equation}%
is monotonically decreasing as $t$ increases.
\end{proposition}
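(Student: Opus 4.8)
The plan is to reduce every assertion to two structural properties of the homogeneous Lévy basis $L$. First, \emph{independent scattering}: if $S_1,\dots,S_k$ are pairwise disjoint Borel subsets of $[0,1]\times\mathbb{R}$, then $L(S_1),\dots,L(S_k)$ are mutually independent, because they are built from the restrictions of the Poisson random measure $N$ to disjoint regions. Second, a \emph{cumulant scaling} identity: since $\|\nu\|<\infty$, $L$ is of compound-Poisson type, and for any $S$ with $leb(S)<\infty$ the variable $L(S)$ is infinitely divisible with $\kappa_j\left(L(S)\right)=leb(S)\,\kappa_j(L_1)$, where $\kappa_j(L_1)=\int y^j\,\nu(\mathrm{d}y)$; in particular $\func{Var}(L(S))=leb(S)\kappa_2(L_1)$. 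Once these are in hand, each part of the Proposition follows by a short argument.

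For well-definedness, note that the number of points of $N$ landing in $A_t\times(\mathbb{Z}\setminus\{0\})$ is Poisson with finite mean $\|\nu\|\,leb(A)$, so $L(A_t)$ is an almost surely finite integer sum. For strict stationarity, the key observation is that the intensity measure $\nu(\mathrm{d}y)\,\mathrm{d}x\,\mathrm{d}s$ is invariant under the time shift $s\mapsto s+h$, so the law of $N$ is unchanged by this shift. Since $A_{t+h}=A_t+(0,h)$, shifting time by $h$ carries $\left(L(A_{t_1}),\dots,L(A_{t_n})\right)$ into $\left(L(A_{t_1+h}),\dots,L(A_{t_n+h})\right)$ while preserving the joint law; this is exactly strict stationarity. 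Covariance stationarity is then immediate once $\kappa_2(L_1)<\infty$, because this guarantees $\func{Var}(L(A_t))=leb(A)\kappa_2(L_1)<\infty$ for every $t$.

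For the covariance formula, fix $t>s$ and express $A_t$ and $A_s$ through the three pairwise-disjoint pieces $A_t\cap A_s$, $A_t\setminus A_s$ and $A_s\setminus A_t$, so that $L(A_t)=L(A_t\cap A_s)+L(A_t\setminus A_s)$ and $L(A_s)=L(A_t\cap A_s)+L(A_s\setminus A_t)$. By independent scattering all cross-covariances vanish, leaving $\func{Cov}\left(L(A_t),L(A_s)\right)=\func{Var}\left(L(A_t\cap A_s)\right)=leb(A_t\cap A_s)\,\kappa_2(L_1)$. Translation invariance of Lebesgue measure gives $leb(A_t\cap A_s)=leb\left(A_{t-s}\cap A\right)$, and dividing by $\func{Var}(L(A_t))=\func{Var}(L(A_s))=leb(A)\kappa_2(L_1)$ yields the stated correlation.

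The remaining work, and the step that needs the most care, is the explicit evaluation of $leb(A_t\cap A)$ in \eqref{prop result}. A point $(x,s)$ lies in $A_t\cap A$ precisely when $s\leq0$ and $b\leq x<\min\left(d(s),d(s-t)\right)$; because $d$ is monotonically increasing and $t\geq0$ we have $d(s-t)\leq d(s)$, so the minimum is $d(s-t)$. Integrating the height length $d(s-t)-b\geq0$ over $s\in(-\infty,0]$ and substituting $u=s-t$ produces $\int_{-\infty}^{-t}\left(d(u)-b\right)\mathrm{d}u$, which is exactly the claimed expression. Monotone decrease in $t$ then follows because the integrand is nonnegative and the upper limit $-t$ falls as $t$ grows; equivalently, continuity of $d$ gives derivative $-(d(-t)-b)\leq0$. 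The main obstacle throughout is not analytic difficulty but bookkeeping: correctly resolving the overlap of the two squashed trawls via the monotonicity of $d$, and invoking the right-closed/left-open convention on $A$ so that the measure-zero boundaries play no role.
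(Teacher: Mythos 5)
Your proof is correct, but it cannot be compared line-by-line with the paper's own argument because the paper gives none: Proposition \ref{prop: stat trawl} is stated as a rephrasing of results from \cite{BarndorffNielsenLundeShephardVeraart(14)}, and its proof is omitted from the Appendix. What you have done is fill in a self-contained derivation from first principles, and every step holds up. The two structural facts you isolate (independent scattering over disjoint sets, and the cumulant scaling $\kappa_j\left(L(S)\right)=leb(S)\,\kappa_j(L_1)$ valid for the compound-Poisson-type basis with $\left\Vert \nu \right\Vert<\infty$) are exactly the properties the trawl-process literature relies on, and your use of shift invariance of the intensity measure $\nu(\mathrm{d}y)\,\mathrm{d}x\,\mathrm{d}s$ to get strict stationarity of the finite-dimensional laws is the standard route. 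The one ingredient that is genuinely specific to the squashed-trawl variant of this paper is the overlap computation: resolving $A_{t}\cap A$ via monotonicity of $d$, so that the height slice at time $s\leq 0$ has length $\min\left(d(s),d(s-t)\right)-b=d(s-t)-b$, and then substituting $u=s-t$ to obtain (\ref{prop result}). You carry this out correctly, and it is worth noting that your formula immediately yields $\partial_{t}\,leb\left(A_{t}\cap A\right)=-\left(d(-t)-b\right)$, which is precisely the identity the paper later invokes (without rederivation) in its proof of Corollary \ref{Cor.: Negative Correlation}; so your argument also supplies the missing justification for that step.
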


\section{Integer-valued price process with fleeting moves\label{sect:price
process}}

\subsection{Definition}

We now turn to the main contribution of this paper. Our proposed
integer-valued price process is defined as%
\begin{equation*}
P_{t}\triangleq V_{0}+L(C_{t})=V_{0}+L\left( A_{t}\right) +L\left(
B_{t}\right) ,\ \ \ \ t\geq 0,
\end{equation*}%
where we recall that $A_{t}=A+\left( 0,t\right) $ and%
\begin{equation*}
B_{t}\triangleq \lbrack 0,b)\times (0,t],\ C_{t}\triangleq A_{t}\cup B_{t}.
\end{equation*}%
Here $V_{0}$ is a non-negative integer; $L$ is a L\'{e}vy basis; $L\left(
A_{t}\right) $ is a stationary integer-valued process that controls the
fleeting movements of the price; $V_{0}+L\left( B_{t}\right) $ is an
integer-valued L\'{e}vy process (initiating at $V_{0}$, which is aggregated
from the permanent arrivals in the past) that represents a non-stationary
component of the price process. Recall that $L\left( B_{t}\right)
=\int_{0}^{t}\int_{0}^{b}L\left( \mathrm{d}x,\mathrm{d}s\right) $, $t\geq 0$.

\begin{example}[Continued from Example \protect\ref{Ex.: Stationary trawl
process}]
\label{Ex: role of b}The lower panel of Figure \ref{fig:fleeting} shows the
corresponding Skellam L\'{e}vy process $L\left( B_{t}\right) $. Notice that
there are no permanent events in the negative time because they have been
taken into account in $V_0$. Over short time scales it is hard to tell the
difference between these two processes $L\left( A_{t}\right) $ and $L\left(
B_{t}\right) $, but over long time scales they are starkly different. For
any event arrival, if the random height $x$---not size $y$---is above $b$,
then this effect stays in $A_{t}$ temporarily and hence is fleeting; if the
height is below $b$, then this effect is always in $B_{t}$ and hence
permanent.
\end{example}

This c\`{a}dl\`{a}g price process has finite activity (i.e. finite number of
jumps in any finite interval of time, due to the L\'{e}vy basis being of
finite activity), is piecewise-constant (i.e. jumps only when there are
arrivals or departures) and consequently has finite variation. Thus the
model is in keeping with the empirical data.

\begin{remark}
\label{Cor.: Being a semimartingale}The integer-valued price process $P_{t}$
is a semimartingale with respect to its natural filtration. The details can
be found in Appendix \ref{Sec.: Proof}. Here we especially point out that a
semimartingale model that allows the fleeting behavior is atypical in the
literature of market microstructure.
\end{remark}

\begin{remark}
In this model some price moves have permanent impact. Others are fleeting,
being reversed rapidly. The lifetime of an arrival event is determined by
the trawl function. Assume that the trawl function $d$ is strictly
increasing and hence invertible. Then we can think of $G(s)\triangleq
1-d\left( -s\right) $ (with $G\left( \infty \right) \triangleq 1$) as the
cumulative distribution function of the lifetime for $s\geq 0$. Thus, for $%
U\backsim U(0,1)$, the standard uniform distribution, $G^{-1}(U)$ means the
lifetime of an arrival event with random height $U$. When $U\leq b$, then $%
G^{-1}\left( U\right) =\infty $, meaning it is permanent. For $U>b$ then the
event will last $G^{-1}(U)<\infty $, meaning it is fleeting.
\end{remark}

\begin{remark}
If a new piece of news arrives at time $t$, it impacts the price through the
arrival of a new point in the L\'{e}vy basis. For concreteness of exposition
here, suppose it has unit impact. Then the expected impact of this
individual event at time $t+s$ is $d(-s)$, where $s\geq 0$. Hence the trawl
function directly describes the \emph{price impact curve} of news arrivals.
It is tempting to label $d$ the price impact function, but we continue with
the trawl nomenclature. The permanent impact of the unit news is thus $b$.
\end{remark}

\subsection{Distribution of price changes}

The following Theorem characterizes the distribution of price changes over a
time length $t$.

\begin{theorem}
\label{thm:dist of returns}Let $A\backslash B$ be set subtraction (all
elements of $A$ except those that are also in $B$). Then
\begin{equation*}
P_{t}-P_{0}=L\left( C_{t}\right) -L\left( C_{0}\right) =L\left(
C_{t}\backslash C_{0}\right) -L(C_{0}\backslash C_{t}),
\end{equation*}%
where $L\left( C_{t}\backslash C_{0}\right) $ is independent of $%
L(C_{0}\backslash C_{t})$. Consequently the logarithmic characteristic
function of returns is
\begin{eqnarray*}
C\left( \theta \ddagger P_{t}-P_{0}\right) &=&btC\left( \theta \ddagger
L_{1}\right) +leb(A_{t}\backslash A)\left( C\left( \theta \ddagger
L_{1}\right) +C\left( -\theta \ddagger L_{1}\right) \right) ,\quad \text{%
where} \\
C\left( \theta \ddagger L_{1}\right) &\triangleq &\log \mathbb{E}\left( e^{%
\mathrm{i}\theta L_{1}}\right) ,\ \ \ \ \mathrm{i}\triangleq \sqrt{-1},\ \ \
\ L_{1}=\int_{0}^{1}\int_{0}^{1}L(\mathrm{d}x,\mathrm{d}s).
\end{eqnarray*}%
Furthermore, if the $j$-th cumulant of $L_{1}$ exists, then%
\begin{eqnarray*}
\kappa _{j}(P_{t}-P_{0}) &=&bt\kappa _{j}(L_{1}),\ \ \ \ j=1,3,5,..., \\
\kappa _{j}(P_{t}-P_{0}) &=&\left( bt+2leb\left( A_{t}\backslash A\right)
\right) \kappa _{j}(L_{1}),\ \ \ \ j=2,4,6,....
\end{eqnarray*}
\end{theorem}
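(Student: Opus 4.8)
The plan is to exploit the Poisson structure of $L$ and the geometry of the sets $C_t$ and $C_0$. The key observation is that for any two Borel sets $S_1,S_2\subseteq[0,1]\times\mathbb{R}$, the random variables $L(S_1\setminus S_2)$, $L(S_2\setminus S_1)$ and $L(S_1\cap S_2)$ are mutually independent because $N$ is a Poisson random measure and these sets are disjoint; integrals of a Poisson random measure over disjoint regions are independent. Applying this with $S_1=C_t$ and $S_2=C_0$ immediately gives the decomposition $L(C_t)-L(C_0)=L(C_t\setminus C_0)-L(C_0\setminus C_t)$ together with the stated independence, since the common part $L(C_t\cap C_0)$ cancels.

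Next I would compute the logarithmic characteristic function. Using the independence just established, $C(\theta\ddagger P_t-P_0)=C(\theta\ddagger L(C_t\setminus C_0))+C(-\theta\ddagger L(C_0\setminus C_t))$. For a homogeneous Lévy basis built from $N$, the cumulant functional over a region $S$ of finite Lebesgue measure is $C(\theta\ddagger L(S))=leb(S)\,C(\theta\ddagger L_1)$, because $L(S)$ is an integer-valued infinitely divisible random variable whose Lévy measure is $leb(S)\,\nu$; this is the defining homogeneity property of the basis and follows from the Lévy--Khintchine / exponential formula for Poisson integrals. So the whole calculation reduces to identifying $leb(C_t\setminus C_0)$ and $leb(C_0\setminus C_t)$.

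The geometric bookkeeping is where I would spend the most care. Writing $C_t=A_t\cup B_t$ with $A_t=A+(0,t)$ and $B_t=[0,b)\times(0,t]$, I note that $A_t$ lives in the strip $x\in[b,1]$ while $B_t$ lives in $x\in[0,b)$, so the two pieces occupy disjoint height bands and never interact. By stationarity of the trawl (the horizontal drag preserves area), $leb(A_t)=leb(A)$, and by the disjoint-strip structure $C_0\setminus C_t$ is entirely contained in the $A$-band: $C_0\setminus C_t=A\setminus A_t$ (the set $B_0=[0,b)\times(0,0]$ is empty). Meanwhile $C_t\setminus C_0=(A_t\setminus A)\cup B_t$, a disjoint union with $leb(B_t)=bt$. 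The symmetry $leb(A_t\setminus A)=leb(A\setminus A_t)$, which follows since $leb(A_t)=leb(A)$, lets me write both set-difference measures in terms of $leb(A_t\setminus A)$, and I would connect this to Proposition~\ref{prop: stat trawl} via $leb(A_t\setminus A)=leb(A)-leb(A_t\cap A)$.

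Assembling, the permanent part contributes $bt\,C(\theta\ddagger L_1)$ (only a $+\theta$ term, as $B_t$ appears only in $C_t\setminus C_0$), while the fleeting part contributes $leb(A_t\setminus A)$ copies of both $C(\theta\ddagger L_1)$ and $C(-\theta\ddagger L_1)$, yielding exactly the claimed characteristic function with $leb(A_t\setminus A)=leb(A_t\backslash A)$. For the cumulant formulas I would differentiate $C$ in $\theta$ and use that $\kappa_j(-L_1)=(-1)^j\kappa_j(L_1)$: the odd cumulants of the fleeting term cancel between the $+\theta$ and $-\theta$ contributions, leaving only $bt\kappa_j(L_1)$ for odd $j$, whereas for even $j$ they reinforce, giving the coefficient $bt+2\,leb(A_t\backslash A)$. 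The main obstacle is purely the set-theoretic step of verifying the disjoint-band cancellations and the symmetry $leb(A_t\setminus A)=leb(A\setminus A_t)$ cleanly; once that geometry is pinned down, the probabilistic and analytic steps are routine consequences of the Poisson independence and the homogeneity of the Lévy basis.
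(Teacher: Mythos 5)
Your proposal is correct and follows essentially the same route as the paper's own proof: the same disjoint-set decomposition $P_t-P_0=L(C_t\backslash C_0)-L(C_0\backslash C_t)$ with Poisson independence, the same identifications $C_t\backslash C_0=(A_t\backslash A)\cup B_t$ and $C_0\backslash C_t=A\backslash A_t$ with $leb(A\backslash A_t)=leb(A_t\backslash A)$, and the same cumulant extraction via differentiation of the log characteristic function using the sign alternation $(-1)^j$. No gaps; your extra care on the disjoint height-band geometry is exactly the bookkeeping the paper performs implicitly.
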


\begin{remark}
Notice that $C_{t}\backslash C_{0}$ has the physical interpretation of
arrivals during the time period $0$ to $t$ for both positive and negative
effects; $C_{0}\backslash C_{t}$ are departures instead. Further, the
equalities
\begin{equation}
leb(A_{t}\backslash A)=leb(A)-leb(A_{t}\cap A)=leb(A\backslash
A_{t})=\int_{-t}^{0}\left( d\left( s\right) -b\right) \mathrm{d}s
\label{trawl increament area}
\end{equation}%
are often helpful in calculations.
\end{remark}

\begin{remark}
The probability mass function of $P_{t}-P_{0}$ can be computed using the
characteristic function and the inverse fast Fourier transform. The details
can be found in the Appendix \ref{Sec.: IFFT Details}.
\end{remark}

\begin{remark}
Even though our model is written down for the study of high frequency data,
it can easily connect back to those diffusion based models that are commonly
used to study data at less high frequency. Theorem \ref{thm:dist of returns}
further implies that the fleeting price process becomes a Brownian motion at
lower frequency. Precisely, if $\kappa _{2}\left( L_{1}\right) <\infty $ and
$X_{t}^{\left( c\right) }\triangleq c^{-1/2}\left( P_{ct}-P_{0}-bct\kappa
_{1}\left( L_{1}\right) \right) $, then $X_{\cdot }^{\left( c\right) }%
\overset{\mathcal{L}}{\rightarrow }W_{\cdot }$ as $c\rightarrow \infty $,
where $W_{\cdot }$ is a Wiener process or a standard Brownian motion.
\end{remark}

Let $\Delta P_{t}\triangleq P_{t}-P_{t-}$ be the instantaneous jump (or
return) of the price process at time $t$. By the instantaneous jumping
distribution, we mean the probability of $\Delta P_{t}=y$ given that $\Delta
P_{t}\neq 0$ for $y\in
\mathbb{Z}
\backslash \left\{ 0\right\} $. In the following we give a closed-form
expression for this distribution.

\begin{theorem}
\label{Thm.: Jumping distribution}The instantaneous jumping distribution is%
\begin{equation}
\mathbb{P}\left( \Delta P_{t}=y|\Delta P_{t}\neq 0\right) =\dfrac{\nu \left(
y\right) +\nu \left( -y\right) \left( 1-b\right) }{\left( 2-b\right)
\left\Vert \nu \right\Vert }.  \label{Jumping distribution}
\end{equation}
\end{theorem}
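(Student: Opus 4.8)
The plan is to read off the conditional law \eqref{Jumping distribution} as a ratio of jump intensities. Since $\left\Vert \nu\right\Vert<\infty$ and $leb(A)<\infty$, the process $P_t=V_0+L(A_t)+L(B_t)$ has finite activity, so its jumps form a marked point process whose marks are the values $\Delta P_t$. Because the underlying Poisson measure $N$ is homogeneous in time and the trawl shape $A$ is fixed, this marked process is stationary; hence $\mathbb{P}(\Delta P_t=y\mid\Delta P_t\neq0)$ is independent of $t$ and equals $r(y)/R$, where $r(y)$ is the intensity (rate per unit time) of jumps equal to $y$ and $R=\sum_{y\neq0}r(y)$ is the total jump intensity. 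My first step is therefore to classify every jump of $P_t$ by the single point $(x,s,y)$ of $N$ that triggers it as it enters or leaves $C_t=A_t\cup B_t$: (i) a \emph{permanent arrival}, when a point with height $x\in[0,b)$ crosses into $B_t$ at $t=s$, giving $\Delta P_t=y$; (ii) a \emph{fleeting arrival}, when a point with height $x\in[b,1)$ enters $A_t$ at $t=s$, giving $\Delta P_t=y$; and (iii) a \emph{fleeting departure}, when such a point leaves $A_t$ (never to return), giving $\Delta P_t=-y$. Permanent points never depart because $B_t$ only grows.

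Next I would compute the three intensities. Restricting $N$ to the height slab $[0,b)$ and projecting $(x,s,y)\mapsto(s,y)$ gives, since that slab has length $b$, permanent arrivals of size $y$ at rate $b\,\nu(y)$ per unit time. Restricting instead to $[b,1)$, of length $1-b$, gives fleeting arrivals of size $y$ at rate $(1-b)\nu(y)$. For the departures, a fleeting point $(x,s,y)$ leaves $A_t$ at the right end of its capture interval, $t=s-d^{-1}(x)$ (the half-open structure of $A$ noted in the footnote makes this instant unambiguous). For each fixed $x$ the map $s\mapsto s-d^{-1}(x)$ is a translation and hence preserves $\mathrm{d}x\,\mathrm{d}s$; projecting the image onto (departure time, size) and integrating $x$ over $[b,1]$ again yields rate $(1-b)\nu(y)$ per unit time, but now each event contributes $\Delta P_t=-y$.

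Assembling the contributions to a jump equal to a fixed $y\neq0$: a permanent arrival of size $y$ (rate $b\,\nu(y)$), a fleeting arrival of size $y$ (rate $(1-b)\nu(y)$), and a fleeting departure of a point of size $-y$ (rate $(1-b)\nu(-y)$). Adding these, $r(y)=b\,\nu(y)+(1-b)\nu(y)+(1-b)\nu(-y)=\nu(y)+(1-b)\nu(-y)$. Summing over $y\neq0$ and using $\sum_{y}\nu(y)=\sum_{y}\nu(-y)=\left\Vert\nu\right\Vert$ gives $R=\left\Vert\nu\right\Vert+(1-b)\left\Vert\nu\right\Vert=(2-b)\left\Vert\nu\right\Vert$, and the ratio $r(y)/R$ is precisely \eqref{Jumping distribution}.

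I expect the main obstacle to be the departure step: making rigorous that the departure epochs, marked by their sizes, form a simple point process with the stated intensity, and that arrivals and departures (of distinct points) almost surely never coincide, so that each jump is attributable to exactly one of the three mechanisms and no mark cancellation occurs. This should follow from the Poisson mapping theorem applied to the measure-preserving shift $(x,s)\mapsto(x,s-d^{-1}(x))$ together with the a.s.\ distinctness of the countably many arrival and departure times. Two minor points need a remark: $d^{-1}$ requires $d$ strictly increasing, but if $d$ is only weakly increasing one can read the departure time off as the right endpoint of the capture interval and the same measure-preserving argument applies; and the boundary heights $x=b$ and $x=1$ are negligible because the heights are continuously (uniformly) distributed.
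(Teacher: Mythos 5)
Your proposal is correct and follows essentially the same route as the paper: both identify the conditional law as a ratio of jump intensities, with size-$y$ arrivals at total rate $\nu \left( y\right) $ (your permanent and fleeting arrival rates $b\nu \left( y\right) $ and $\left( 1-b\right) \nu \left( y\right) $ recombine to exactly this) and departures of size-$\left( -y\right) $ points at rate $\left( 1-b\right) \nu \left( -y\right) $, giving the ratio in (\ref{Jumping distribution}). The difference is only one of rigor: the paper obtains the departure rate from the infinitesimal swept area $leb\left( A_{t-\mathrm{d}t}\backslash A_{t}\right) \approx \left( 1-b\right) \mathrm{d}t$ together with a small-intensity Poisson approximation, whereas you justify the same intensity via the measure-preserving shift $(x,s)\mapsto (x,s-d^{-1}(x))$ and the Poisson mapping theorem, and you additionally handle the a.s.\ non-coincidence of jump times, which the paper leaves implicit.
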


Notice that the trawl function $d$ in the fleeting component has no impact
on the instantaneous jumping distribution: what is important is $b$, which
controls the amount of potential departures among all the arrival jumps.
Besides, the left-hand side of equation (\ref{Jumping distribution}) can be
easily estimated from the data, so we might in turn estimate $\nu $ and $b$
by simple moment-matching. To calibrate the trawl, Theorem \ref{thm:dist of
returns} imply the easy use of sample cumulants across different $t$ to
infer the shape of $leb\left( A_{t}\backslash A\right) $ and hence $d$. We
will see these in Section \ref{sect:moment based inference} later.

\subsection{Autocorrelation structure of price changes}

Theorem \ref{Thm.: dependence of returns} captures the linear dependence in
the price changes.

\begin{theorem}
\label{Thm.: dependence of returns}Assume that $\kappa _{2}(L_{1})<\infty $.
Then the price changes have the autocorrelation structure, for some sampling
interval $\delta >0$ and $k=1,2,...$%
\begin{eqnarray*}
\gamma _{k} &\triangleq &\mathrm{Cov}\left( \left( P_{(k+1)\delta
}-P_{k\delta }\right) ,(P_{\delta }-P_{0})\right) \\
&=&\left( leb(A_{(k+1)\delta }\backslash A)-2leb(A_{k\delta }\backslash
A)+leb(A_{(k-1)\delta }\backslash A)\right) \kappa _{2}(L_{1}), \\
\rho _{k} &\triangleq &\mathrm{Cor}\left( \left( P_{(k+1)\delta }-P_{k\delta
}\right) ,(P_{\delta }-P_{0})\right) \\
&=&\frac{leb(A_{(k+1)\delta }\backslash A)-2leb(A_{k\delta }\backslash
A)+leb(A_{(k-1)\delta }\backslash A)}{b\delta +2leb(A_{\delta }\backslash A)}%
.
\end{eqnarray*}
\end{theorem}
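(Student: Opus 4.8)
The plan is to compute the covariance directly by expressing each increment $P_{(k+1)\delta}-P_{k\delta}$ and $P_\delta-P_0$ as a difference of $L$-masses over disjoint regions of the L\'{e}vy basis, and then exploit the fact that $L$ is built from a Poisson random measure so that $L$-masses over disjoint sets are independent (hence contribute nothing to the covariance). First I would write $P_{(k+1)\delta}-P_{k\delta}=L(C_{(k+1)\delta})-L(C_{k\delta})$, and similarly $P_\delta-P_0=L(C_\delta)-L(C_0)$, where $C_t=A_t\cup B_t$. The additivity of $L$ over disjoint sets lets me reduce each increment to $L(\text{arrivals})-L(\text{departures})$ as in Theorem \ref{thm:dist of returns}. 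The strands $B_t=[0,b)\times(0,t]$ contribute the permanent L\'{e}vy part, while the trawl pieces $A_t$ carry the fleeting part.

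The key computational step is to collect the covariance contributions region by region. Using bilinearity of $\mathrm{Cov}$ and independence over disjoint sets, only the overlaps between the two increments' supports survive. The permanent (B) components sit over disjoint time strips $(k\delta,(k+1)\delta]$ versus $(0,\delta]$ for $k\ge 1$, so they contribute nothing; the entire covariance therefore comes from overlaps among the trawl sets $A_{k\delta},A_{(k+1)\delta},A_0,A_\delta$. Writing the covariance as a signed combination of terms of the form $\mathrm{Cov}(L(A_{t_1}),L(A_{t_2}))$ and invoking Proposition \ref{prop: stat trawl}, each such term equals $leb(A_{t_1-t_2}\cap A)\,\kappa_2(L_1)$. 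The bookkeeping then collapses, via the identity $leb(A_{t}\backslash A)=leb(A)-leb(A_{t}\cap A)$ from equation (\ref{trawl increament area}), into the stated second-difference expression $leb(A_{(k+1)\delta}\backslash A)-2\,leb(A_{k\delta}\backslash A)+leb(A_{(k-1)\delta}\backslash A)$ times $\kappa_2(L_1)$.

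For the correlation $\rho_k$ I would divide $\gamma_k$ by the variance of a single increment, $\mathrm{Var}(P_\delta-P_0)=\left(b\delta+2\,leb(A_\delta\backslash A)\right)\kappa_2(L_1)$, which is read off directly from the $j=2$ cumulant formula in Theorem \ref{thm:dist of returns}; the $\kappa_2(L_1)$ factors cancel, leaving the displayed ratio.

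The main obstacle is the careful region accounting: I must track exactly which portions of the four trawl sets overlap and with what sign, taking care that the stationarity-based covariance formula of Proposition \ref{prop: stat trawl} is applied to the correct time lags (so that, e.g., the cross terms reorganize into a discrete second difference in the lag variable). I expect a small amount of care is also needed to confirm that the $B$-strip contributions genuinely vanish for all $k\ge 1$ and do not inadvertently interact with the trawl pieces; this follows because $A_t$ lives in $[b,1]\times(-\infty,t]$ while $B_t$ lives in $[0,b)\times(0,t]$, so the two are disjoint in the height coordinate and hence independent under $L$.
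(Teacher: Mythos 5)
Your proposal is correct, but it takes a genuinely different route from the paper. The paper's proof never touches the L\'{e}vy basis geometry for the covariance itself: it proves an abstract identity (Lemma \ref{Remark about covariances}) valid for any process with covariance stationary increments, expressing $\gamma_k$ as the second difference $\tfrac{1}{2}\func{Var}(Z_{(k+1)\delta}-Z_0)-\func{Var}(Z_{k\delta}-Z_0)+\tfrac{1}{2}\func{Var}(Z_{(k-1)\delta}-Z_0)$, and then substitutes the variance formula $\func{Var}(P_t-P_0)=(bt+2\,leb(A_t\backslash A))\kappa_2(L_1)$ from Theorem \ref{thm:dist of returns}; the linear-in-$t$ term $bt$ cancels in the second difference, leaving exactly the stated expression. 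You instead compute the covariance directly: the $B$-strips over $(k\delta,(k+1)\delta]$ and $(0,\delta]$ are disjoint (and disjoint in height from every $A_t$), so by independence of the Poisson random measure over disjoint sets only the four trawl cross-terms $\mathrm{Cov}(L(A_{t_1}),L(A_{t_2}))=leb(A_{t_1-t_2}\cap A)\,\kappa_2(L_1)$ from Proposition \ref{prop: stat trawl} survive, and the identity $leb(A_t\backslash A)=leb(A)-leb(A_t\cap A)$ converts $2\,leb(A_{k\delta}\cap A)-leb(A_{(k+1)\delta}\cap A)-leb(A_{(k-1)\delta}\cap A)$ into the stated second difference; the bookkeeping checks out, including the edge case $k=1$ where one term is a variance, $leb(A_0\cap A)=leb(A)$. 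Your approach requires Proposition \ref{prop: stat trawl} as an extra input but makes the structural content transparent---the permanent random-walk component is uncorrelated with everything at positive lags, so the entire autocovariance is generated by the fleeting component---whereas the paper's variogram-style argument is more economical, reusing only Theorem \ref{thm:dist of returns} and an identity that holds far beyond this model. One small point you leave implicit: equating $\mathrm{Cor}$ with $\gamma_k$ divided by a single variance uses that the two increments have equal variance, which follows from stationarity of price changes.
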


\begin{corollary}
\label{Cor.: Negative Correlation}$\rho _{k}\leq 0$ for all $k=1,2,...$.
This inequality becomes strict when $d$ is strictly increasing (i.e. $%
d\left( s_{1}\right) <d\left( s_{2}\right) $ for all $s_{1}<s_{2}\leq 0$).
\end{corollary}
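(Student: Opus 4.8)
The plan is to show that the numerator of $\rho_k$ given in Theorem \ref{Thm.: dependence of returns} is nonpositive, since the denominator $b\delta + 2\,leb(A_\delta\backslash A)$ is manifestly nonnegative. Because $\kappa_2(L_1)>0$, the sign of $\gamma_k$ (and hence of $\rho_k$) is governed entirely by the second-difference expression
\begin{equation*}
leb(A_{(k+1)\delta}\backslash A) - 2\,leb(A_{k\delta}\backslash A) + leb(A_{(k-1)\delta}\backslash A).
\end{equation*}
So the whole statement reduces to proving that this discrete second difference is $\leq 0$, i.e. that the map $t\mapsto leb(A_t\backslash A)$ is \emph{concave} on $t\geq 0$.

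First I would rewrite $leb(A_t\backslash A)$ in an explicit integral form. By equation (\ref{trawl increament area}) we have $leb(A_t\backslash A)=\int_{-t}^{0}\left(d(s)-b\right)\mathrm{d}s$. Define $g(t)\triangleq leb(A_t\backslash A)$; then $g$ is differentiable with $g'(t)=d(-t)-b$ for $t\geq 0$, using the fundamental theorem of calculus on the integral with variable lower limit. Because $d$ is monotonically increasing by Definition \ref{Def.: Squashed monotonic trawl}, the function $t\mapsto d(-t)$ is monotonically \emph{decreasing}, so $g'$ is nonincreasing in $t$; this is exactly concavity of $g$.

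Next I would convert concavity into the sign of the second difference. For a concave differentiable function, the centered second difference satisfies
\begin{equation*}
g((k+1)\delta) - 2g(k\delta) + g((k-1)\delta) \leq 0
\end{equation*}
for any $\delta>0$ and $k\geq 1$. Concretely, this difference equals $\int_{k\delta}^{(k+1)\delta}g'(u)\,\mathrm{d}u - \int_{(k-1)\delta}^{k\delta}g'(u)\,\mathrm{d}u$, and after a change of variables both integrals run over an interval of the same length $\delta$; since $g'$ is nonincreasing, the integrand on the later interval never exceeds that on the earlier interval, giving a nonpositive value. Multiplying by $\kappa_2(L_1)>0$ yields $\gamma_k\leq 0$ and therefore $\rho_k\leq 0$.

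Finally, for the strict inequality I would examine when equality can hold. Equality in the step above forces $g'$ to be constant across the two adjacent intervals, which by $g'(t)=d(-t)-b$ forces $d$ to be constant on the relevant range. If $d$ is strictly increasing, this is impossible, so the second difference is strictly negative and $\rho_k<0$. The main obstacle, though minor, is the careful handling of the fundamental-theorem differentiation at the endpoint and the rigorous conversion of pointwise monotonicity of $g'$ into the strict inequality case; the monotonicity of $d$ does all the real work, so once the integral representation is in hand the argument is short.
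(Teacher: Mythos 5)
Your proposal is correct and takes essentially the same approach as the paper: both arguments reduce $\rho_k\leq 0$ to the sign of the discrete second difference of the area function and derive that sign from the monotonicity of $d$ via the derivative of that function. The only cosmetic difference is that the paper proves convexity of $t\mapsto leb\left(A_t\cap A\right)$ using the mean value theorem, while you prove concavity of $t\mapsto leb\left(A_t\backslash A\right)=leb\left(A\right)-leb\left(A_t\cap A\right)$ by comparing integrals of the nonincreasing derivative $d(-t)-b$ over adjacent intervals of length $\delta$; these are equivalent, and your handling of the strict case (strict monotonicity of $d$ forcing a strictly negative second difference) matches the paper's as well.
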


\begin{remark}
For a pure L\'{e}vy process ($b=1$), $leb\left( A_{t}\backslash A\right) =0$
for all $t$, so clearly $\rho _{k}=0$ for all $k=1,2,...$, as expected. On
the other hand, equation (\ref{trawl increament area}) implies%
\begin{equation*}
\lim\limits_{\delta \rightarrow 0}\dfrac{leb\left( A_{l\delta }\backslash
A\right) }{\delta }=\left( 1-b\right) l,\ \lim\limits_{\delta \rightarrow
\infty }leb\left( A_{l\delta }\backslash A\right) =leb\left( A\right) ,\ \ \
\ l=1,2,...,
\end{equation*}%
so it is easy to see that, for any fixed $k=1,2,...$,%
\begin{equation*}
\lim\limits_{\delta \rightarrow 0}\rho _{k}=\lim\limits_{\delta \rightarrow
\infty }\rho _{k}=0.
\end{equation*}%
Thus, Corollary \ref{Cor.: Negative Correlation} implies that $\rho _{k}$ is
not a monotonic function of the sampling interval $\delta $. This matches
with the empirical data, which we will see later in Figure \ref%
{fig.:Empirical Autocorrelation}.
\end{remark}

\subsection{Power variation}

Quadratic variation plays a central role in stochastic analysis and modern
finance (e.g. \cite{AndersenBollerslevDieboldLabys(01)} and \cite%
{BarndorffNielsenShephard(02realised)}). For any $r\geq 0$, we define the $r$%
-th power L\'{e}vy basis as%
\begin{equation*}
\Sigma (\mathrm{d}x,\mathrm{d}s;r)\triangleq \int_{-\infty }^{\infty
}\left\vert y\right\vert ^{r}N(\mathrm{d}y,\mathrm{d}x,\mathrm{d}s)
\end{equation*}%
with mean measure%
\begin{equation*}
\mu (\mathrm{d}x,\mathrm{d}s;r)\triangleq \mathrm{d}x\mathrm{d}%
s\int_{-\infty }^{\infty }\left\vert y\right\vert ^{r}\nu (\mathrm{d}y),
\end{equation*}%
assuming that $\int_{-\infty }^{\infty }\left\vert y\right\vert ^{r}\nu (%
\mathrm{d}y)<\infty $. Theorem \ref{Thm:QV} relates $\Sigma $ to
\begin{equation*}
\{P\}_{t}^{[r]}=\lim\limits_{\delta \rightarrow 0}\sum_{k=1}^{t/\delta
}\left\vert P_{k\delta }-P_{\left( k-1\right) \delta }\right\vert
^{r}=\sum_{0<s\leq t}\left\vert \Delta P_{s}\right\vert ^{r},
\end{equation*}%
the $r$-th (unnormalized) power variation, which was formalized in finance
by \cite{BarndorffNielsenShephard(04jfe)}. The special case of $r=2$ yields
the quadratic variation. Notice that in our model we can compute $%
\{P\}_{t}^{[r]}$ exactly, just using the price path. It is finite for all $%
r\geq 0$ with probability one. This contrasts with the vast majority of work
in econometrics that would take $\{P\}_{t}^{[r]}$ as infinity due to the
impact of market microstructure.

\begin{theorem}
\label{Thm: Expectation of Power variation}\label{Thm:QV}For any $r\geq 0$,
the $r$-th power variation is
\begin{eqnarray*}
\left\{ P\right\} _{t}^{[r]} &=&\Sigma (B_{t};r)+Z_{t}^{[r]},\ \ \ \
B_{t}\triangleq \lbrack 0,b]\times (0,t], \\
Z_{t}^{[r]} &\triangleq &\Sigma (H_{t};r)+\Sigma (G_{t};r),\ \ \ \
H_{t}\triangleq \lbrack b,1]\times (0,t],\ G_{t}\triangleq \left( H_{t}\cup
A\right) \backslash A_{t}.
\end{eqnarray*}%
Furthermore, their expectations are%
\begin{equation}
\mathbb{E}\left( \left\{ P\right\} _{t}^{[r]}\right) =\left( 2-b\right)
t\int_{-\infty }^{\infty }\left\vert y\right\vert ^{r}\nu (\mathrm{d}y).
\label{Expectation of the power variation}
\end{equation}
\end{theorem}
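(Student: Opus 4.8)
The plan is to treat the power variation pathwise as a sum over the (almost surely isolated) jump times of $P$, classify each jump according to the point of the Poisson random measure $N$ responsible for it, and then pass to expectations through the mean measure $\mu$. Because $L$ has finite activity, there are only finitely many arrivals and departures in $(0,t]$, so for $\delta$ small enough each summand $|P_{k\delta}-P_{(k-1)\delta}|^r$ isolates a single jump and the limit defining $\{P\}_t^{[r]}=\sum_{0<s\le t}|\Delta P_s|^r$ is literally a finite sum. First I would argue that almost surely no two of these jumps share a time: two arrivals coincide only if two points of $N$ share their $s$-coordinate (probability zero, as noted for the L\'{e}vy basis); an arrival and a departure, or two departures, coincide only on a null set, since the departure time of a point equals its arrival time plus a lifetime that is a deterministic function of its (continuously distributed) height. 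Granting this, every jump equals $\pm y$ for exactly one responsible point of mark $y\neq0$, so $|\Delta P_s|^r=|y|^r$ and no cancellation occurs.

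Next I would identify, in the $(x,s)$-plane, the three regions that generate the three jump types. A permanent arrival in $(0,t]$ is a point entering $B_t=[0,b]\times(0,t]$; a fleeting arrival is a point entering the strip $H_t=[b,1]\times(0,t]$; a fleeting departure is a point leaving the moving trawl during $(0,t]$. For the last one I would track the lifetime of a fleeting point $(x_0,s_0)$: monotonicity of $d$ makes $\{\tau:(x_0,s_0)\in A_\tau\}$ a half-open interval $[s_0,s^\ast)$, where $s^\ast\triangleq\sup\{\tau:(x_0,s_0)\in A_\tau\}$ is its departure time, and a short case check (separating points already present at time $0$, which lie in $A$, from points arriving in $(0,t]$, which lie in $H_t$) shows that $s^\ast\in(0,t]$ is equivalent to $(x_0,s_0)\in(H_t\cup A)\backslash A_t=G_t$. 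Summing $|y|^r$ over the responsible points in each region then gives the asserted pathwise identity $\{P\}_t^{[r]}=\Sigma(B_t;r)+\Sigma(H_t;r)+\Sigma(G_t;r)=\Sigma(B_t;r)+Z_t^{[r]}$.

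For the expectation I would use that $\Sigma(S;r)=\int_S|y|^rN$ integrates the Poisson random measure, so by its mean measure $\mathbb{E}(\Sigma(S;r))=leb(S)\int_{-\infty}^{\infty}|y|^r\nu(\mathrm{d}y)$. It then remains to add the three areas. Clearly $leb(B_t)=bt$ and $leb(H_t)=(1-b)t$. The only real computation is $leb(G_t)$: since $H_t$ and $A$ occupy disjoint time ranges, $leb(H_t\cup A)=(1-b)t+leb(A)$, and decomposing $(H_t\cup A)\cap A_t=(H_t\cap A_t)\sqcup(A\cap A_t)$ (again disjoint in time), together with $leb(H_t\cap A_t)=leb(A_t\backslash A)=\int_{-t}^0(d(s)-b)\,\mathrm{d}s$ from \eqref{trawl increament area} and $leb(A\cap A_t)=\int_{-\infty}^{-t}(d(s)-b)\,\mathrm{d}s$ from Proposition \ref{prop: stat trawl}, gives $leb((H_t\cup A)\cap A_t)=\int_{-\infty}^{0}(d(s)-b)\,\mathrm{d}s=leb(A)$. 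Hence $leb(G_t)=(1-b)t$, so $leb(B_t)+leb(H_t)+leb(G_t)=(2-b)t$, which yields \eqref{Expectation of the power variation}.

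The step I expect to be the main obstacle is not the area bookkeeping, which telescopes cleanly, but making the pathwise decomposition rigorous: one must confirm both that the limit defining $\{P\}_t^{[r]}$ really collapses to the jump sum and that arrivals and departures never collide in time, so that each jump is attributable to a unique point of $N$ and $\Sigma(G_t;r)$ genuinely counts departures without double counting or cancellation against a simultaneous arrival.
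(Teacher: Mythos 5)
Your proposal is correct and follows essentially the same route as the paper's proof: classify each price jump as a permanent arrival (a point of $N$ falling in $B_t$), a fleeting arrival (in $H_t$), or a departure (in $G_t$), then pass to expectations through the Poisson mean measure and sum the areas $leb(B_t)+leb(H_t)+leb(G_t)=(2-b)t$. The only differences are matters of detail rather than substance: you make explicit what the paper states tersely (the almost-sure non-collision of jump times, and the case check showing departures during $(0,t]$ are exactly the points of $(H_t\cup A)\backslash A_t$), and your area bookkeeping establishes $leb\left( (H_t\cup A)\cap A_t\right) =leb(A)$ by splitting into $H_t\cap A_t$ and $A\cap A_t$, whereas the paper uses $A_t\subseteq H_t\cup A$ together with $leb(A_t)=leb(A)$ -- the two computations are equivalent.
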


\begin{remark}
Like (\ref{Jumping distribution}), (\ref{Expectation of the power variation}%
) does not feature the trawl function, as each arrival is joined by a
departure. Hence it is always robust to the details of $d$. Further,%
\begin{equation*}
\mathbb{E}\left( \{P\}_{t}^{[r]}\right) =\mathbb{E}\left(
\{P\}_{t}^{[0]}\right) \int_{-\infty }^{\infty }\left\vert y\right\vert ^{r}%
\dfrac{\nu \left( \mathrm{d}y\right) }{\left\Vert \nu \right\Vert }.
\end{equation*}%
Notice that $\{P\}_{t}^{[0]}$ counts the total number of jumps of the
process $P$ up to time $t$, so throughout we call it the counting process of
price moves. It will also play an important role in Section \ref{sect:moment
based inference} for the construction of our moment-based estimate for the
model parameters.
\end{remark}

We think of the random $Z_{t}^{[r]}$, which is finite with probability one,
as the component of power variation due to fleeting moves in prices, for
\begin{equation*}
\left\{ P\right\} _{t}^{[r]}-\left\{ L\left( B_{t}\right) \right\}
_{t}^{[r]}=Z_{t}^{[r]}
\end{equation*}%
is the asymptotic stochastic bias of the power variation.

High frequency econometricians would typically think of terms like $%
Z_{t}^{[2]}$ as the driver of the bias in realized variance due to market
microstructure effects (e.g. \cite{HansenLunde(06)}, \cite{Zhang(06)}, \cite%
{JacodLiMyklandPodolskijVetter(07)}, \cite{MyklandZhang(12semstat)} and \cite%
{BarndorffNielsenHansenLundeShephard(08realised)}), but it is typically
infinite in their studies while here and empirically it is finite with
probability one\footnote{%
Econometricians use a variety of models for market microstructure noise.
Typically the noise appears each time a trade happens, e.g. in \cite%
{Zhou(96)} the noises are i.i.d. with a zero mean. Hence we can think of
these types of models as purely statistical measurement error models. In
more recent times, the i.i.d. assumption has been generalized to allow some
levels of temporal dependence and volatility clustering, but all in tick
time instead of the calendar time. All of these models of noise have the
power variations being infinity. There is another set of papers that think
of prices as being a rounded version of a semimartingale. This is closer to
our paper, but here the level of dependence in price moves is entirely
dependent on the size of the ticks in comparison to the volatility of the
semimartingale. This is insufficiently flexible to fit the data. Another set
of papers round a semimartingale with additive measurement noise, but again
this has infinite power variation, which does not coincide with the
empirical observations.}.

We recall from Theorem \ref{thm:dist of returns} that%
\begin{equation*}
\mathbb{E}(P_{t}-P_{0})=bt\kappa _{1}(L_{1}),\ \ \ \ \func{Var}%
(P_{t}-P_{0})=\left\{ bt+2leb\left( A_{t}\backslash A\right) \right\} \kappa
_{2}\left( L_{1}\right) .
\end{equation*}%
We now think about returns over the time interval $\left[ 0,T\right] $, so
the realized variance is%
\begin{equation*}
RV^{\left( n\right) }\triangleq \sum_{k=1}^{n}(P_{k\delta _{n}}-P_{\left(
k-1\right) \delta _{n}})^{2},\ \ \ \ \delta _{n}\triangleq \dfrac{T}{n}.
\end{equation*}

\begin{proposition}
\label{Thm: RV expectation}Assume that $\kappa _{2}(L_{1})<\infty $. Then%
\begin{equation*}
\mathbb{E}\left( RV^{\left( n\right) }\right) =\left( b+2\dfrac{leb\left(
A_{\delta _{n}}\backslash A\right) }{\delta _{n}}\right) T\kappa _{2}\left(
L_{1}\right) +b^{2}T\delta _{n}\kappa _{1}^{2}\left( L_{1}\right).
\end{equation*}
\end{proposition}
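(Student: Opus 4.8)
The plan is to reduce $\mathbb{E}(RV^{(n)})$ to a single per-increment second moment using stationarity, and then substitute the first two cumulants supplied by Theorem \ref{thm:dist of returns}.

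First I would show that the non-overlapping increments are identically distributed. Writing $\Delta _{k}\triangleq P_{k\delta _{n}}-P_{(k-1)\delta _{n}}$, I would split it as $\Delta _{k}=\left( L(A_{k\delta _{n}})-L(A_{(k-1)\delta _{n}})\right) +L\left( B_{k\delta _{n}}\backslash B_{(k-1)\delta _{n}}\right)$, using $B_{(k-1)\delta _{n}}\subseteq B_{k\delta _{n}}$ for the second term. The intensity $\nu (\mathrm{d}y)\mathrm{d}x\mathrm{d}s$ of the Poisson random measure $N$ is invariant under shifts of the time coordinate $s$; the shift $s\mapsto s-(k-1)\delta _{n}$ simultaneously carries the pair $(A_{k\delta _{n}},A_{(k-1)\delta _{n}})$ to $(A_{\delta _{n}},A)$ and the set $B_{k\delta _{n}}\backslash B_{(k-1)\delta _{n}}$ to $B_{\delta _{n}}$, so the joint law of the two pieces of $\Delta _{k}$ does not depend on $k$. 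Hence $\Delta _{k}\overset{\mathcal{L}}{=}P_{\delta _{n}}-P_{0}$ and $\mathbb{E}(\Delta _{k}^{2})=\mathbb{E}\left( (P_{\delta _{n}}-P_{0})^{2}\right)$ for every $k$. By linearity of expectation this already gives $\mathbb{E}(RV^{(n)})=n\,\mathbb{E}\left( (P_{\delta _{n}}-P_{0})^{2}\right)$; note that independence across $k$ is not needed here, only that the squared increments share a common mean.

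Next I would evaluate the single second moment through its variance and mean, $\mathbb{E}\left( (P_{\delta _{n}}-P_{0})^{2}\right) =\mathrm{Var}(P_{\delta _{n}}-P_{0})+\left( \mathbb{E}(P_{\delta _{n}}-P_{0})\right) ^{2}$. Theorem \ref{thm:dist of returns} with $j=1$ gives $\mathbb{E}(P_{\delta _{n}}-P_{0})=b\delta _{n}\kappa _{1}(L_{1})$, and with $j=2$ gives $\mathrm{Var}(P_{\delta _{n}}-P_{0})=\left( b\delta _{n}+2\,leb(A_{\delta _{n}}\backslash A)\right) \kappa _{2}(L_{1})$, so $\mathbb{E}\left( (P_{\delta _{n}}-P_{0})^{2}\right) =\left( b\delta _{n}+2\,leb(A_{\delta _{n}}\backslash A)\right) \kappa _{2}(L_{1})+b^{2}\delta _{n}^{2}\kappa _{1}^{2}(L_{1})$.

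Finally I would multiply by $n$ and use $n\delta _{n}=T$ (hence $n\delta _{n}^{2}=T\delta _{n}$): factoring $\delta _{n}$ out of the parenthesis turns the $\kappa _{2}$ term into $\left( b+2\,leb(A_{\delta _{n}}\backslash A)/\delta _{n}\right) T\kappa _{2}(L_{1})$, while the mean-squared term becomes $b^{2}T\delta _{n}\kappa _{1}^{2}(L_{1})$, which is exactly the claimed expression. The only genuinely substantive step is the identical-distribution claim in the first paragraph; everything after it is the mean/variance decomposition and routine algebra. I expect the stationary-increments argument to be the main thing to get right, since it must be justified from the time-homogeneity of $N$ rather than from the strict stationarity of $L(A_{t})$ alone—the non-stationary L\'{e}vy component $L(B_{t})$ must also be handled, which works because its increment over a grid cell is the $L$-mass of a time-shifted copy of $B_{\delta _{n}}$.
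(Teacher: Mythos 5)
Your proof is correct and takes essentially the same route as the paper's: stationarity of the increments reduces $\mathbb{E}\left( RV^{\left( n\right) }\right) $ to $n\,\mathbb{E}\left( \left( P_{\delta _{n}}-P_{0}\right) ^{2}\right) $, the single second moment is computed as variance plus squared mean via Theorem \ref{thm:dist of returns}, and the algebra with $n\delta _{n}=T$ finishes it. The only difference is that you explicitly derive the identical distribution of increments from the time-shift invariance of the Poisson random measure (handling the trawl and L\'{e}vy pieces jointly), whereas the paper simply invokes stationarity of returns as a known fact.
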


We can set the context of Proposition \ref{Thm: RV expectation} by
discussing the two extremes $n=1$ and $n\rightarrow \infty $ for a large $T$%
. For $n=1$, as $T\rightarrow \infty $,%
\begin{eqnarray*}
\mathbb{\mathbb{E}}\left( RV^{\left( 1\right) }\right)  &=&\left( b+2\dfrac{%
leb\left( A_{T}\backslash A\right) }{T}\right) T\kappa _{2}\left(
L_{1}\right) +b^{2}T^{2}\kappa _{1}^{2}\left( L_{1}\right)  \\
&\approx &bT\kappa _{2}\left( L_{1}\right) +b^{2}T^{2}\kappa _{1}^{2}\left(
L_{1}\right)  \\
&=&\kappa _{2}\left( L\left( B_{T}\right) \right) +\left( \kappa _{1}\left(
L\left( B_{T}\right) \right) \right) ^{2}=\mathbb{E}\left( L\left(
B_{T}\right) ^{2}\right) ,
\end{eqnarray*}%
where the second line uses $leb\left( A_{T}\backslash A\right) \approx
leb\left( A\right) $. For $n\rightarrow \infty $ and a fixed $T$,%
\begin{equation*}
\lim\limits_{n\rightarrow \infty }\mathbb{E}\left( RV^{\left( n\right)
}\right) =\left( 2-b\right) T\kappa _{2}\left( L_{1}\right) .
\end{equation*}%
Therefore, in this model the realized variance and the volatility of price
changes are highly distorted by the fleeting component. A variance signature
plot ($RV^{\left( T/\delta \right) }$ against $\delta $) for our model will
start out high around $\left( 2-b\right) T\kappa _{2}\left( L_{1}\right) $
(the expected quadratic variation of the price process) for large $n$ (dense
sampling) and tend downwards to approximately $bT\kappa _{2}\left(
L_{1}\right) $ (the expected quadratic variation of the L\'{e}vy process
component, assuming that $\kappa _{1}\left( L_{1}\right) $ being very
small). A minor variant of this type of plots, which we will discuss in
Remark \ref{Rmk.: Variant of the variance signature plot}, can be found in
Figure \ref{fig.:Empirical Variance Signature Plot} later in our empirical
work.

\subsection{Generalized compound representation\label{Sec.: Generalized
compound representation}}

As the price process is of finite activity, it can be usefully written as a
generalized compound process, driven by the counting process of price moves.
Here we detail this. First recall that $G(s)=1-d\left( -s\right) $ (with $%
G\left( \infty \right) =1$) denotes the cumulative distribution function of
the lifetime for $s\geq 0$.

$L(A_{0})$ is built out of $N^{A\ast }$ initial surviving events, who arrive
at times $\tau _{1}^{A\ast }<...<\tau _{N^{\ast }}^{A\ast }\leq 0$ and jump
with sizes $\varkappa _{1}^{A\ast },...,\varkappa _{N^{\ast }}^{A\ast }$.
Each arrival has a lifetime $G^{-1}(U_{1}^{A\ast }),...,G^{-1}(U_{N^{A\ast
}}^{A\ast })$, where $\tau _{j}^{A\ast }+G^{-1}(U_{j}^{A\ast })>0$ and $%
U_{j}^{A\ast }\overset{\text{i.i.d.}}{\sim }U(b,1)$. Thus we can write $%
L(A_{0})=\sum_{j=1}^{N^{A\ast }}\varkappa _{j}^{A\ast }1_{\tau _{j}^{A\ast
}+G^{-1}(U_{j}^{A\ast })>0}$. When $\varkappa _{j}^{A\ast }=1$ for all $j$,
this representation has a close connection to a $\mathrm{M}/G/\infty $ queue
(i.e. Markov arrivals, with a fixed service time distribution $G$, but with
an infinite number of servers).

As time progresses some events die and the initial values thin down to $%
\sum_{j=1}^{N^{A\ast }}\varkappa _{j}^{A\ast }1_{\tau _{j}^{A\ast
}+G^{-1}(U_{j}^{A\ast })>t}$ while new ones are born $\sum_{j=1}^{N_{t}^{A}}%
\varkappa _{j}^{A}1_{\tau _{j}^{A}+G^{-1}(U_{j}^{A})>t}$, where $N_{t}^{A}$
is the number of births from time $0$ to time $t$ with heights greater than $%
b$. The corresponding $\tau _{j}^{A}$'s and $\varkappa _{j}^{A}$'s are the
arrival times of these events and size of the moves. Thus the stationary
process is%
\begin{equation*}
L\left( A_{t}\right) =\sum_{j=1}^{N^{A\ast }}\varkappa _{j}^{A\ast }1_{\tau
_{j}^{A\ast }+G^{-1}(U_{j}^{A\ast })>t}+\sum_{j=1}^{N_{t}^{A}}\varkappa
_{j}^{A}1_{\tau _{j}^{A}+G^{-1}(U_{j}^{A})>t},\ \ \ \ t\geq 0.
\end{equation*}%
The corresponding impact of the permanent changes is a compound Poisson
process $L\left( B_{t}\right) =\sum_{j=1}^{N_{t}^{B}}\varkappa _{j}^{B}$,
where $N_{t}^{B}$ counts the number of permanent arrivals up to time $t$ and
$\tau _{j}^{B}$'s and $\varkappa _{j}^{B}$'s are the corresponding arrival
times and jump sizes. We also write $\tau _{k}$ to be any one of the jumping
times from resulted chronologically from both the arrivals and departures;
similarly for $\varkappa _{k}$. Then $N_{t}\triangleq \#\left\{ k:\tau
_{k}\leq t\right\} $ counts the total number of jumps of the price process
up to time $t$.

All these imply that%
\begin{equation}
P_{t}=V_{0}+\sum_{j=1}^{N^{A\ast }}\varkappa _{j}^{A\ast }1_{\tau
_{j}^{A\ast }+G^{-1}(U_{j}^{A\ast })>t}+\sum_{j=1}^{N_{t}^{A}}\varkappa
_{j}^{A}1_{\tau
_{j}^{A}+G^{-1}(U_{j}^{A})>t}+\sum_{j=1}^{N_{t}^{B}}\varkappa
_{j}^{B}=P_{0}+\sum_{k=1}^{N_{t}}\varkappa _{k}.  \label{simple presentation}
\end{equation}%
Equation (\ref{simple presentation}) is called a generalized compound
representation. It links with the very large literature on the use of
compound Poisson processes in financial econometrics, e.g. \cite{Press(67)}.
However, here we allow a fraction of the jumps to be fleeting, so the
resulting counting process $N_{t}$ is not simply a Poisson process.

\subsection{Parameterized trawl function\label{sect:parameterised trawl}}

To fit this type of model using data, it is sometimes helpful to index the
trawl function by a small number of parameters. Throughout we work within
the following framework.

\begin{definition}
A superposition trawl function has%
\begin{equation}
d(s)=b+(1-b)\int_{0}^{\infty }e^{\lambda s}\pi \left( \mathrm{d}\lambda
\right) ,\ \ \ \ s\leq 0,  \label{superposition}
\end{equation}%
where $\pi $ is an arbitrary probability measure on $(0,\infty )$. We
constrain the superposition class to where $\int_{0}^{\infty }\lambda
^{-1}\pi \left( \mathrm{d}\lambda \right) <\infty $.
\end{definition}

Whatever the probability measure $\pi $ the resulting $d$ always exists
since $0\leq \int_{0}^{\infty }e^{\lambda s}\pi \left( \mathrm{d}\lambda
\right) \leq \int_{0}^{\infty }\pi \left( \mathrm{d}\lambda \right) =1$, as $%
s\leq 0$. The constraint $\int_{0}^{\infty }\lambda ^{-1}\pi \left( \mathrm{d%
}\lambda \right) <\infty $ is needed to ensure that the area of $A$ is
finite, for this area is%
\begin{eqnarray}
leb(A) &=&\int_{-\infty }^{0}\int_{b}^{d(s)}\mathrm{d}x\mathrm{d}%
s=\int_{-\infty }^{0}\left( d(s)-b\right) \mathrm{d}s=(1-b)\int_{-\infty
}^{0}\int_{0}^{\infty }e^{\lambda s}\pi \left( \mathrm{d}\lambda \right)
\mathrm{d}s  \notag \\
&=&(1-b)\int_{0}^{\infty }\int_{-\infty }^{0}e^{\lambda s}\mathrm{d}s\pi
\left( \mathrm{d}\lambda \right) =\left( 1-b\right) \int_{0}^{\infty }\frac{1%
}{\lambda }\pi \left( \mathrm{d}\lambda \right) .  \label{leb of A}
\end{eqnarray}

Using equation (\ref{prop result}) the superposition framework (\ref%
{superposition}) has%
\begin{equation*}
leb(A_{t}\cap A)=\left( 1-b\right) \int_{0}^{\infty }\frac{e^{-t\lambda }}{%
\lambda }\pi \left( \mathrm{d}\lambda \right) ,\ \ \ \ t\geq 0,
\end{equation*}%
so, combining it with equation (\ref{leb of A}), we have%
\begin{equation*}
\int_{0}^{\infty }\mathrm{Cor}(L\left( A_{t}\right) ,L(A_{0}))\mathrm{d}t=%
\dfrac{\int_{0}^{\infty }\lambda ^{-2}\pi \left( \mathrm{d}\lambda \right) }{%
\int_{0}^{\infty }\lambda ^{-1}\pi \left( \mathrm{d}\lambda \right) }.
\end{equation*}%
Thus, the superposition trawl has long memory if and only if $%
\int_{0}^{\infty }\lambda ^{-2}\pi \left( \mathrm{d}\lambda \right) =\infty $%
.

In the following we focus only on choices of specific $\pi $. These special
cases have been analyzed in \cite{BarndorffNielsenLundeShephardVeraart(14)},
so here we only state them to establish notation for our applied work.

\begin{example}
\label{ex: Exp trawl}When $\pi $ has a single atom of support at $\lambda >0$%
, this is the exponential trawl%
\begin{eqnarray}
d(s) &=&b+(1-b)\exp (\lambda s),\ \ \ \ s\leq 0,  \label{OU trawl} \\
leb\left( A\right)  &=&\dfrac{1-b}{\lambda },\ leb\left( A_{t}\cap A\right) =%
\dfrac{1-b}{\lambda }e^{-\lambda t}.  \notag
\end{eqnarray}%
Trivially it only allows short memory as $\int_{0}^{\infty }\bar{\lambda}%
^{-2}\pi \left( \mathrm{d}\bar{\lambda}\right) =\lambda ^{-2}<\infty $
whenever $\lambda >0$.
\end{example}

\begin{example}
When%
\begin{equation*}
\pi \left( \mathrm{d}\lambda \right) =b+(1-b)\dfrac{\alpha ^{H}}{\Gamma
\left( H\right) }\lambda ^{H-1}e^{-\lambda \alpha }\mathrm{d}\lambda ,\ \ \
\ \alpha >0,\ H>1,
\end{equation*}%
we produce the superposition gamma (sup-$\Gamma $) trawl%
\begin{eqnarray}
d(s) &=&b+(1-b)\left( 1-\frac{s}{\alpha }\right) ^{-H},\ \ \ \ s\leq 0,
\label{long memory trawl} \\
leb\left( A\right) &=&\left( 1-b\right) \dfrac{\alpha }{H-1},\ leb(A_{t}\cap
A)=\frac{(1-b)\alpha }{H-1}\left( 1+\frac{t}{\alpha }\right) ^{1-H},\ \ \ \
t\geq 0.  \notag
\end{eqnarray}%
It has long memory when $H\in (1,2]$ and short memory when $H>2$ as
\begin{equation*}
\int_{0}^{\infty }\lambda ^{-2}\pi \left( \mathrm{d}\lambda \right) =\dfrac{%
\Gamma \left( H-2\right) }{\Gamma \left( H\right) }<\infty \text{ if and
only if }H>2.
\end{equation*}
\end{example}

\begin{example}
When%
\begin{equation*}
\pi \left( \mathrm{d}\lambda \right) =b+(1-b)\dfrac{\left( \gamma /\delta
\right) ^{\nu }}{2K_{\nu }\left( \gamma \delta \right) }\lambda ^{\nu
-1}e^{-\left( \gamma ^{2}\lambda +\delta ^{2}\lambda ^{-1}\right) /2}\mathrm{%
d}\lambda ,\ \ \ \ \gamma ,\delta >0,\ \nu \in
\mathbb{R}
,
\end{equation*}%
we produce the superposition generalized inverse Gaussian (sup-GIG) trawl%
\begin{eqnarray}
d\left( s\right) &=&b+(1-b)\left( 1-\dfrac{2s}{\gamma ^{2}}\right) ^{-\nu /2}%
\dfrac{K_{\nu }\left( \gamma \delta \sqrt{1-2s/\gamma ^{2}}\right) }{K_{\nu
}\left( \gamma \delta \right) },\ \ \ \ s\leq 0  \label{sup-GIG trawl} \\
leb\left( A\right) &=&\left( 1-b\right) \dfrac{\gamma }{\delta }\dfrac{%
K_{\nu -1}\left( \gamma \delta \right) }{K_{\nu }\left( \gamma \delta
\right) },  \notag \\
leb(A_{t}\cap A) &=&\left( 1-b\right) \dfrac{\gamma }{\delta }\dfrac{\left(
1+2t/\gamma ^{2}\right) ^{\left( 1-\nu \right) /2}K_{\nu -1}\left( \gamma
\delta \sqrt{1+2t/\gamma ^{2}}\right) }{K_{\nu }\left( \gamma \delta \right)
},\ \ \ \ t\geq 0,  \notag
\end{eqnarray}%
where $K_{\nu }\left( x\right) $ is the modified Bessel function of the 2nd
kind. It always has short memory as%
\begin{equation*}
\int_{0}^{\infty }\lambda ^{-2}\pi \left( \mathrm{d}\lambda \right) =\dfrac{%
\left( \gamma /\delta \right) ^{\nu }}{2K_{\nu }\left( \gamma \delta \right)
}\dfrac{2K_{\nu -2}\left( \gamma \delta \right) }{\left( \gamma /\delta
\right) ^{\nu -2}}=\left( \dfrac{\gamma }{\delta }\right) ^{2}\dfrac{K_{\nu
-2}\left( \gamma \delta \right) }{K_{\nu }\left( \gamma \delta \right) }%
<\infty \text{ for all }\gamma ,\delta >0,\ \nu \in
\mathbb{R}
.
\end{equation*}%
However, it can also degenerate to the long memory sup-$\Gamma $ trawl by
letting $\gamma =\sqrt{2\alpha }$, $\nu =H$ and $\delta \rightarrow 0$. When
$\gamma \rightarrow 0$, $\pi \left( \mathrm{d}\lambda \right) $ becomes an
inverse gamma distribution with scale parameter $\delta ^{2}/2$ and shape
parameter $-\nu $, so correspondingly we produce the superposition inverse
gamma (sup-$\Gamma ^{-1}$) trawl. This is an important case, for inverse
gamma densities have polynomial decay in their tails so will generate short
but substantial memory, which has the same pattern as the empirical data. We
will see this clearly in Section \ref{sect:futures data}.
\end{example}

\section{Moment-based inference\label{sect:moment based inference}}

Here we discuss the inference technique based on matching moments using a
path of prices $P_{t}$, $t\in \lbrack 0,T]$. Due to (i) the stationarity of
the price changes $P_{\delta }-P_{0}\overset{d}{\backsim }P_{t+\delta }-P_{t}
$ for any $t,\delta $ and (ii) the high frequency nature of the data,
moment-based estimates are plausible. The inference can basically split in
two pieces: the inference of the L\'{e}vy measure $\nu $ and the inference
on $b$ and $d$.

\subsection{Inference of L\'{e}vy measure}

Due to the high frequency nature of the data, the instantaneous jumping
distribution of the sample is close to the true value. Similarly, the sample
power variation $\left\{ P\right\} _{t}^{\left[ r\right] }$ for any $r\geq 0$%
, when treated as a linear function of time $t$, has a slope that is also
close to the truth. We can then use these facts to estimate the L\'{e}vy
measure $\nu $ in terms of $b$.

Let us write the sample instantaneous jumping distribution as $\hat{\alpha}%
_{y}$, where $\sum_{y\in
\mathbb{Z}
\backslash \left\{ 0\right\} }\hat{\alpha}_{y}=1$; also, estimate the slope
of the $r$-th sample power variation against $t$ by%
\begin{equation*}
\hat{\beta}_{r}\triangleq \dfrac{\left\{ P\right\} _{T}^{[r]}}{T}=\dfrac{1}{T%
}\sum_{0<t\leq T}\left\vert \Delta P_{t}\right\vert ^{r}.
\end{equation*}%
Then by matching moments to equations (\ref{Jumping distribution}) and (\ref%
{Expectation of the power variation}), we should have%
\begin{eqnarray}
\left( 2-b\right) \sum_{y\in
\mathbb{Z}
\backslash \left\{ 0\right\} }\left\vert y\right\vert ^{r}\nu \left(
y\right) &=&\hat{\beta}_{r},\ \ \ \ r\geq 0,
\label{Moment equation from power variation} \\
\nu \left( y\right) +\nu \left( -y\right) \left( 1-b\right) &=&\left(
2-b\right) \hat{\alpha}_{y}\left\Vert \nu \right\Vert ,\ \ \ \ y\in
\mathbb{Z}
\backslash \left\{ 0\right\} .
\label{Moment equation from jumping distribution}
\end{eqnarray}%
Using (\ref{Moment equation from power variation}) with the case of $r=0$,
we have $\left\Vert \nu \right\Vert =\sum_{y\in
\mathbb{Z}
\backslash \left\{ 0\right\} }\nu \left( y\right) =\hat{\beta}_{0}/\left(
2-b\right) $ and hence%
\begin{eqnarray*}
\nu \left( y\right) +\nu \left( -y\right) \left( 1-b\right) &=&\hat{\alpha}%
_{y}\hat{\beta}_{0}, \\
\nu \left( -y\right) +\nu \left( y\right) \left( 1-b\right) &=&\hat{\alpha}%
_{-y}\hat{\beta}_{0},\ \ \ \ y\in
\mathbb{N}
.
\end{eqnarray*}%
Solving these two equations gives us%
\begin{equation}
\widehat{\nu \left( y\right) }\triangleq \dfrac{\hat{\alpha}_{y}-\left(
1-b\right) \hat{\alpha}_{-y}}{\left( 2-b\right) b}\hat{\beta}_{0},\ \ \ \
y\in
\mathbb{Z}
\backslash \left\{ 0\right\} .
\label{Levy measure moment estimate in terms of b}
\end{equation}

\begin{remark}
This does not guarantee that $\widehat{\nu \left( y\right) }\geq 0$, so
empirically we will truncate negative $\widehat{\nu \left( y\right) }$ by
zero and at the same time tune the value of the corresponding $\widehat{\nu
\left( -y\right) }$ such that%
\begin{equation*}
\widehat{\nu \left( y\right) }+\widehat{\nu \left( -y\right) }=\dfrac{\hat{%
\alpha}_{y}-\left( 1-b\right) \hat{\alpha}_{-y}+\hat{\alpha}_{-y}-\left(
1-b\right) \hat{\alpha}_{y}}{\left( 2-b\right) b}\hat{\beta}_{0}=\dfrac{\hat{%
\alpha}_{y}+\hat{\alpha}_{-y}}{\left( 2-b\right) }\hat{\beta}_{0}
\end{equation*}%
remains unchanged. The advantage of this modification allows the
conservation of all the (non-negative) moments of the estimated L\'{e}vy
measure $\hat{\nu}$:%
\begin{equation*}
\sum_{y\in
\mathbb{Z}
\backslash \left\{ 0\right\} }\left\vert y\right\vert ^{r}\widehat{\nu
\left( y\right) }=\sum_{y=1}^{\infty }\left\vert y\right\vert ^{r}\left(
\widehat{\nu \left( y\right) }+\widehat{\nu \left( -y\right) }\right) .
\end{equation*}%
However, it comes with the price that the estimates for all of the \emph{odd}
cumulants of $P_{t}-P_{0}$ are altered, but practically this will be
neglectable as the truncation is only needed for larger $y$ and the
corresponding intensity $\nu \left( y\right) $ is usually quite small.

To completely avoid the negative estimates, one might parameterize the L\'{e}%
vy measure as in \cite{BarndorffNielsenLundeShephardVeraart(14)}, but here
we prefer to stay with the non-parametric estimates.
\end{remark}

\begin{remark}
\label{Rmk: b cannot be estimated seperately}We should note that (\ref{Levy
measure moment estimate in terms of b}) has included all the information we
can access from equations (\ref{Moment equation from power variation}) and (%
\ref{Moment equation from jumping distribution}), so we cannot rely on
equations (\ref{Moment equation from power variation}) and (\ref{Moment
equation from jumping distribution}) to solve $b$ and the L\'{e}vy measure $%
\nu $ at the same time. The details can be found in the Appendix \ref{Sec.:
Proof}.
\end{remark}

\subsection{Inference of permanence and trawl function}

We will need to employ additional moment equations to estimate the trawl
function $d$ as well as $b$. The easiest way to do this is through Theorem %
\ref{thm:dist of returns}. In particular, we will use the sample variance of
$\left\{ P_{k\delta }-P_{\left( k-1\right) \delta }\right\} _{k=1}^{T/\delta
}$ to estimate%
\begin{equation*}
\func{Var}\left( P_{\delta }-P_{0}\right) =\left( b\delta +2leb\left(
A_{\delta }\backslash A_{0}\right) \right) \kappa _{2}\left( L_{1}\right)
=\left( b\delta +2leb\left( A_{\delta }\backslash A_{0}\right) \right)
\sum_{y\in
\mathbb{Z}
\backslash \left\{ 0\right\} }y^{2}\nu \left( y\right) .
\end{equation*}

Denote the sample variance with the sampling interval $\delta $ as $\widehat{%
\sigma _{\delta }^{2}}$. Then by (\ref{Levy measure moment estimate in terms
of b}) and matching moments, we should have%
\begin{equation}
\widehat{\sigma _{\delta }^{2}}=\left( \dfrac{b\delta +2leb\left( A_{\delta
}\backslash A_{0}\right) }{2-b}\right) \sum_{y\in
\mathbb{Z}
\backslash \left\{ 0\right\} }y^{2}\hat{\alpha}_{y}\hat{\beta}_{0}.
\label{Variance signature}
\end{equation}%
Appendix \ref{Sec.: Nonparametric d} shows how to non-parametrically
estimate the trawl function $d$ using $\widehat{\sigma _{\delta }^{2}}$, but
here we only demonstrate the inference for a parameterized trawl.

Suppose for now that the trawl function $d$ is parameterized by $\phi $, for
example, $\phi =\lambda $ in the exponential trawl (\ref{OU trawl}), $\phi
=\left( \alpha ,H\right) ^{T}$ in the sup-$\Gamma $ trawl (\ref{long memory
trawl}) and $\phi =\left( \gamma ,\delta ,\nu \right) ^{T}$ in the sup-GIG
trawl (\ref{sup-GIG trawl}). A simple way to estimate $b$ and $\phi $
simultaneously is through a non-linear least square fitting to equation (\ref%
{Variance signature}) divided by $\delta $ across different $\delta $. The
reason to work on $\widehat{\sigma _{\delta }^{2}}/\delta $ instead of $%
\widehat{\sigma _{\delta }^{2}}$ is to amplify the effect of empirical
market microstructure for small $\delta $, so the non-linear least square
estimation of $b$ and $\phi $ will not be overly dominated by the linear
part of the variogram.

\begin{remark}
\label{Rmk.: Variant of the variance signature plot}By definition of the
sample variance and the realized variance, as $T\rightarrow \infty $,%
\begin{eqnarray*}
\widehat{\sigma _{\delta }^{2}} &\approx &\dfrac{1}{T/\delta }%
\sum_{k=1}^{T/\delta }\left( P_{k\delta }-P_{\left( k-1\right) \delta
}\right) ^{2}-\left( \dfrac{P_{T}-P_{0}}{T/\delta }\right) ^{2}, \\
\dfrac{\widehat{\sigma _{\delta }^{2}}}{\delta } &\approx &\dfrac{1}{T}%
RV^{\left( T/\delta \right) }-\delta \dfrac{\left( P_{T}-P_{0}\right) ^{2}}{%
T^{2}}\approx \dfrac{1}{T}RV^{\left( T/\delta \right) },
\end{eqnarray*}%
where we throw out the second-order term in the final approximation. Thus,
essentially what we try to fit is the variance signature plot ($RV^{\left(
T/\delta \right) }$ against $\delta $). From now on, we also call the plot $%
\widehat{\sigma _{\delta }^{2}}/\delta $ against $\delta $ a variance
signature plot.
\end{remark}

\begin{example}
To check the effectiveness of this moment estimator, we conduct a Monte
Carlo simulation study on the price process model parameterized by the
exponential trawl (\ref{OU trawl}). Throughout the rest of this paper, all
the numerical values are reported under the time unit being a second. Then
we set $\lambda _{\mathrm{true}}=0.681$ and a non-symmetric Skellam basis
with L\'{e}vy measure%
\begin{equation*}
\nu \left( \mathrm{d}y\right) =\nu ^{+}\delta _{\left\{ 1\right\} }\left(
\mathrm{d}y\right) +\nu ^{-}\delta _{\left\{ -1\right\} }\left( \mathrm{d}%
y\right) ,
\end{equation*}%
where $\nu _{\mathrm{true}}^{+}=0.0138$, $\nu _{\mathrm{true}}^{-}=0.0131$
and $b_{\mathrm{true}}=0.396$. All the $10,000$ Monte Carlo simulated paths
are drawn with $V_{0}=7,486$ (ticks) during the time interval $72.03$ to $%
75,600$ (seconds), where $75,600$ means the closing time of the market,
21:00. All the settings here are taken from the empirical TNC1006 data set
on March 22, 2010, which we will study in next Section.

The non-linear least square fitting for (\ref{Variance signature}) is
conducted for $\delta $'s ranging from $0.1$ seconds to $60$ seconds with $%
60 $ equally spaced grid points on its log-scale. We then repeat the
moment-based estimates for $\theta =\left( b,\nu ^{+},\nu ^{-},\lambda
\right) ^{T}$ and derive histograms of these estimates in Figure \ref{fig:
MonteCarlo MomentEst for exp-skellam1}.
\begin{figure}[t]
\label{fig: MonteCarlo MomentEst for exp-skellam}\centering
\includegraphics{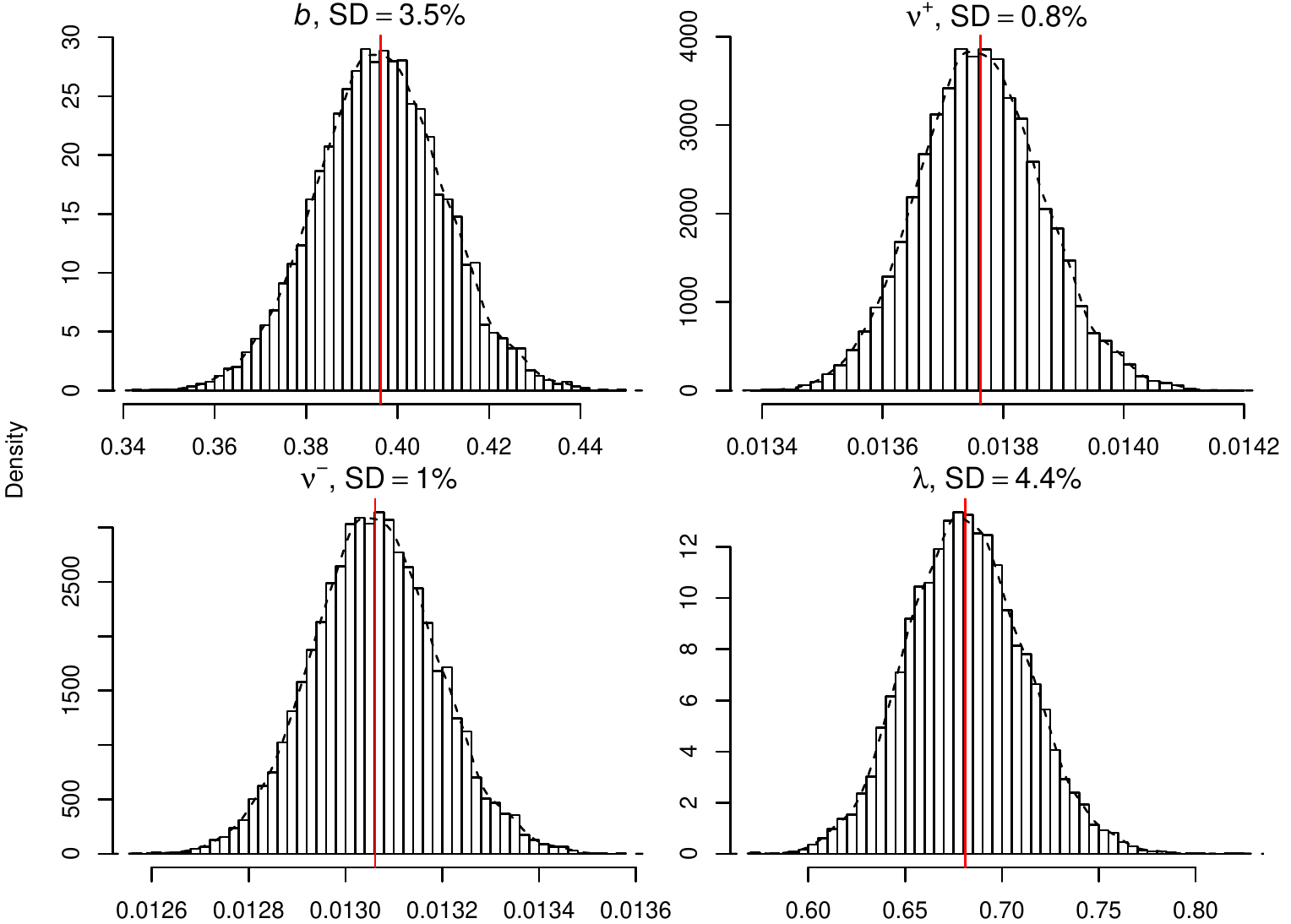}
\caption{$10,000$ Monte Carlo simulation of moment estimations for the price
process with exponential trawl $d\left( s\right) =b+\left( 1-b\right) \exp
\left( \protect\lambda s\right) $ and the Skellam basis $\protect\nu \left(
\mathrm{d}y\right) =\protect\nu ^{+}\protect\delta _{\left\{ 1\right\}
}\left( \mathrm{d}y\right) +\protect\nu ^{-}\protect\delta _{\left\{
-1\right\} }\left( \mathrm{d}y\right) $. The vertical lines in each of the
histograms mean the true value. The Monte Carlo standard deviations are
reported on the scale of the true values. Code: \texttt{Moment\_Inference%
\_ModelBasedBootstrap.R}.}
\label{fig: MonteCarlo MomentEst for exp-skellam1}
\end{figure}
The estimates from the proposed methodology (using equations (\ref{Levy
measure moment estimate in terms of b}) and (\ref{Variance signature}))
correctly center around the true values; also notice that this method is
particularly accurate for estimating $\nu ^{+}$ and $\nu ^{-}$.
\end{example}

\begin{remark}
Except the moment-based estimations, we can also conduct maximum likelihood
estimation for our proposed model, which requires sophisticated techniques
to filter out the L\'{e}vy process $L\left( B_{t}\right) $. We are currently
exploring particle methods toward this direction.
\end{remark}

\section{Empirical analysis for futures data\label{sect:futures data}}

In this Section, we employ these moment-based estimators for empirical
analysis. Covering two days of trading activities on two different assets,
four data sets are studied here: (i) the Ten-Year US Treasury Note futures
contract delivered in June 2010 (TNC1006) during March 22, 2010; (ii) the
International Monetary Market (IMM) Euro-Dollar Foreign Exchange (FX)
futures contract delivered in June 2010 (EUC1006) during March 22, 2010;
(iii) TNC1006 during May 7, 2010; (iv) EUC1006 during May 7, 2010. These
data sets come from the same database that is used by \cite%
{BarndorffNielsenPollardShephard(12)}. The first trading day is randomly
chosen, while the second trading day is not only the release of US non-farm
payroll numbers but was also experiencing the European sovereign debt
crisis. These data sets are derived from data feeds at the Chicago
Mercantile Exchange (CME). They have been preprocessed using the procedures
described in Appendix \ref{Sec.: Data cleaning}. From now on, we will no
longer mention the delivery date of each data set and the year 2010.

\subsection{Data features}

All of these four data sets use all the trades from 00:00 to 21:00, shown in
Figure \ref{fig.: Empirical Long trace plot}.
\begin{figure}[t]
\centering\includegraphics{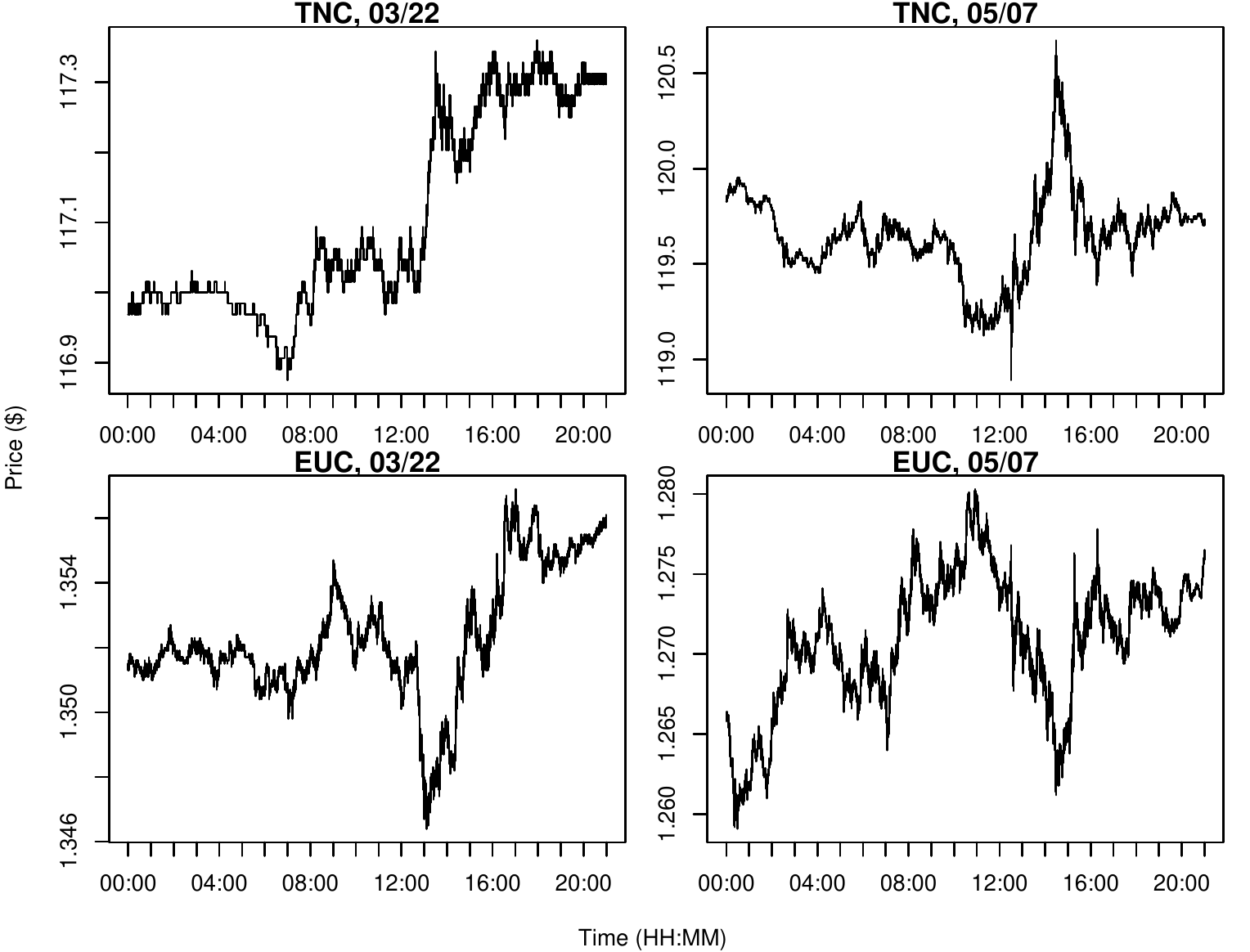}
\caption{The complete trace plots for the four data sets during 00:00 to
21:00. The $x$-axis is the calendar time (HH:MM), while the $y$-axis is the
price (\$). Code: \texttt{Price\_Plots.R}.}
\label{fig.: Empirical Long trace plot}
\end{figure}
With such large time scales, each of the trace plots look like a continuous
time diffusion process. However, if we focus these data sets to much smaller
time scales (within one hour for TNC and within two minutes for EUC), shown
in Figure \ref{fig.: Empirical Short trace plot},
\begin{figure}[t]
\centering\includegraphics{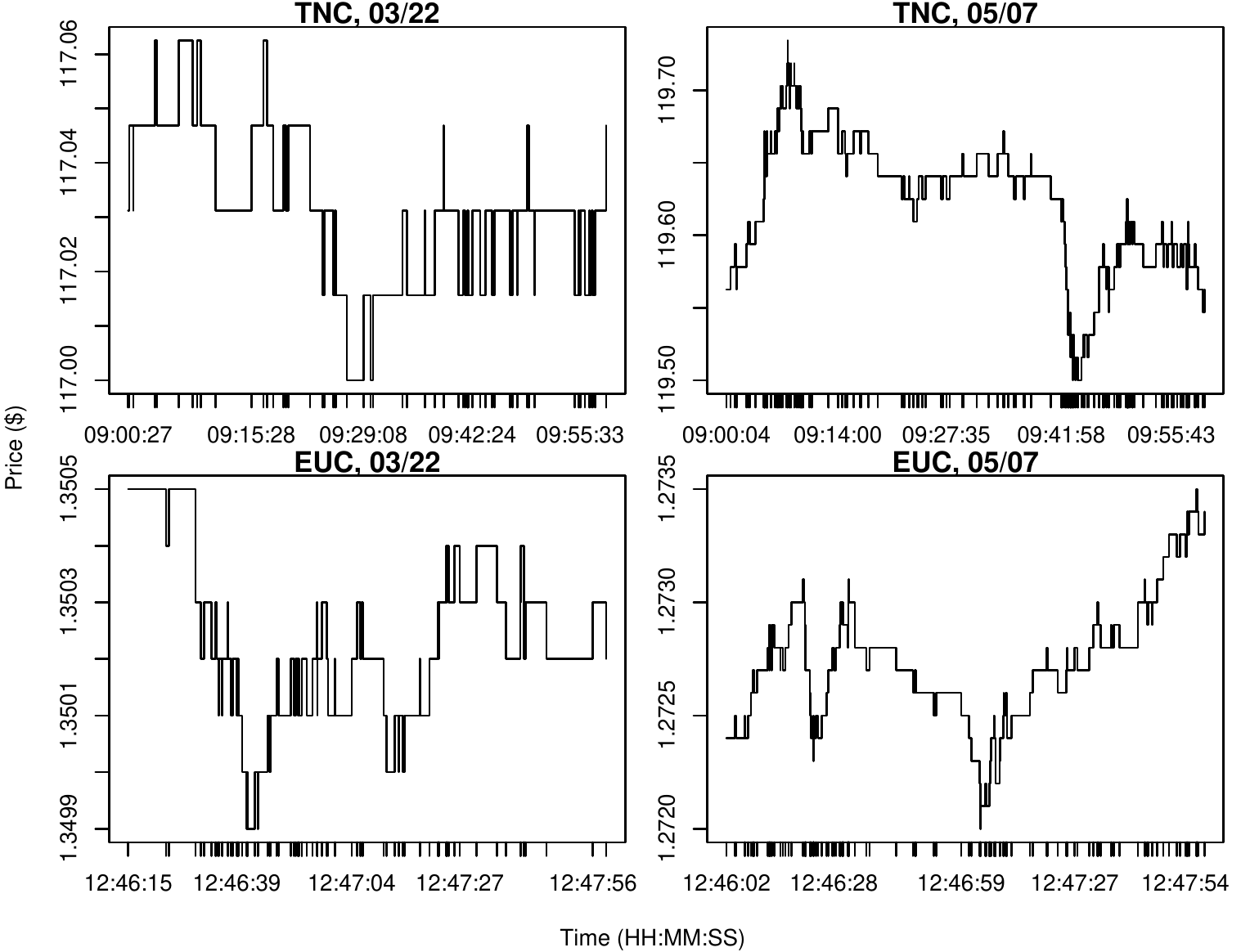}
\caption{The trace plots for two TNC data sets during 09:00 to 10:00 and for
two EUC data sets during 12:46 to 12:48. The $x$-axis is the calendar time
(HH:MM:SS), while the $y$-axis is the price (\$). Code: \texttt{%
Price\_Plots.R}.}
\label{fig.: Empirical Short trace plot}
\end{figure}
the discreteness becomes important. In particular, we can see several
multiple-tick jumps in the two EUC data sets shown in Figure \ref{fig.:
Empirical Short trace plot}.

Table \ref{Tab.: Summary stat of the 4 data set} summarizes some basic
features of these four data sets.
\begin{table}[tbh] \centering%
$%
\begin{tabular}{|r|r|r|r|r|r|r|}
\hline
\multirow{2}{*}{Contract, Day} & \multirow{2}{*}{Tick Size (\$)} & %
\multirow{2}{*}{Num. of Price Changes} & \multicolumn{4}{|c|}{Size of Price
Changes (Tick)} \\ \cline{4-7}
&  &  & Avg. & SD. & Min. & Max. \\ \hline
TNC,\ 03/22 & $1/64$ & $3,249$ & $0.00646$ & $1.000$ & $-1$ & $1$ \\ \hline
EUC, 03/22 & $0.0001$ & $13,943$ & $0.00337$ & $1.012$ & $-2$ & $3$ \\ \hline
TNC, 05/07 & $1/64$ & $12,849$ & $-0.000467$ & $1.035$ & $-13$ & $15$ \\
\hline
EUC, 05/07 & $0.0001$ & $55,379$ & $0.00190$ & $1.077$ & $-13$ & $15$ \\
\hline
\end{tabular}%
$\caption{Summary statistics of the four futures data sets.}\label{Tab.:
Summary stat of the 4 data set}%
\end{table}
Both contracts have more activities during May 7 than during March 22 and
the standard deviations of the jump size for all the four data sets are
close to $1$ even though the range of all possible jump sizes might differ a
lot.

We also plot the empirical instantaneous jumping distribution (on the
log-scale) for the four data sets in Figure \ref{fig.:Empirical Jumping
Distribution}. Those estimated probabilities will be used as $\hat{\alpha}%
_{y}$ for the moment estimate defined in the previous Section.
\begin{figure}[t]
\centering\includegraphics{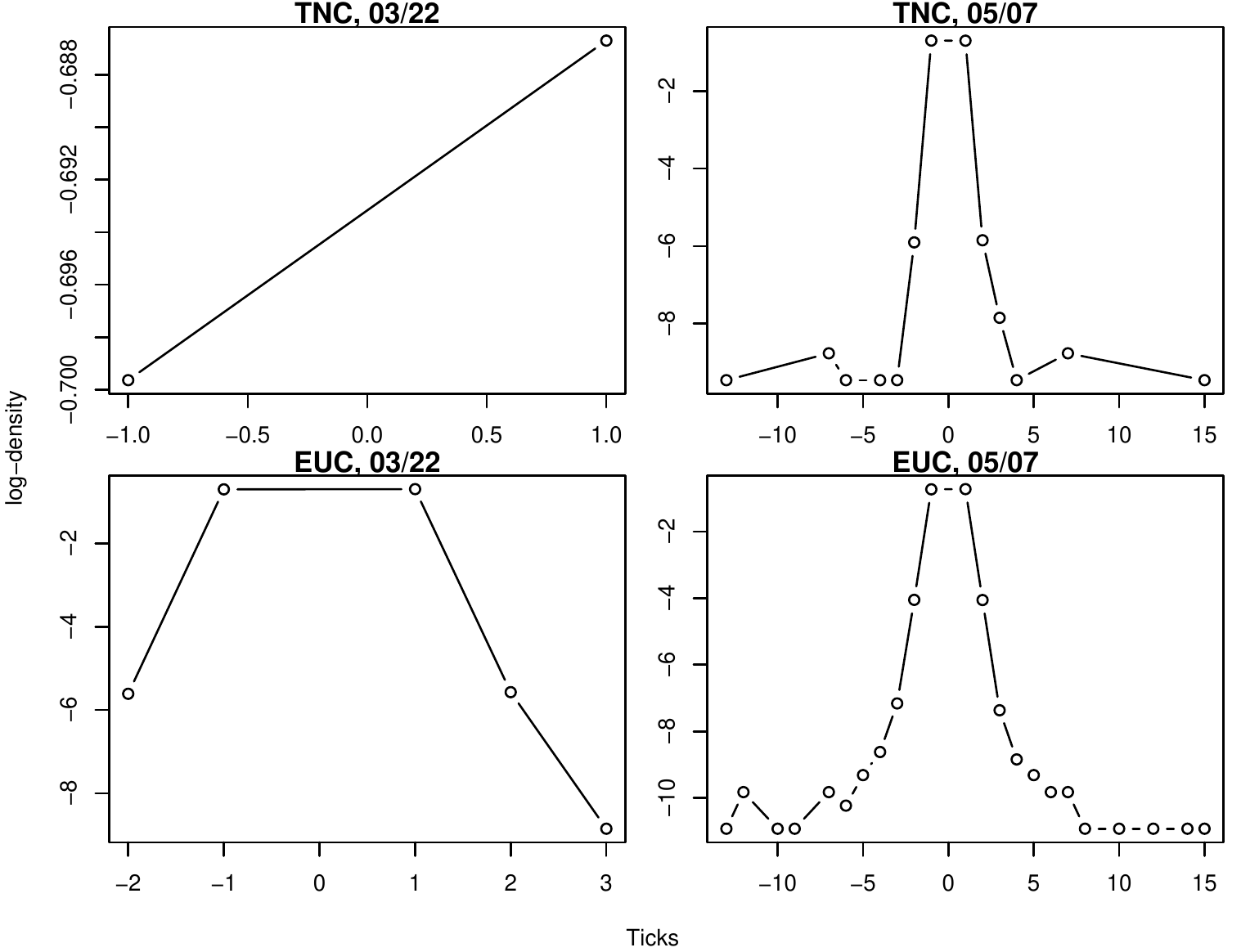}
\caption{The log-histograms for the empirical instantaneous jumping
distributions of the four data sets. The $x$-axis for each plot is the size
of the jump, while the $y$-axis denotes the estimated probability value in a
log-scale. Code: \texttt{Price\_Plots.R}.}
\label{fig.:Empirical Jumping Distribution}
\end{figure}
Generally, the jumps of EUC have more variability than the TNC. Furthermore,
we can see that even for the same contract, say TNC, the jumping
characteristic is completely different from a random chosen day (March 22)
to a day with a major economic event (May 7). In a normal day like March 22,
the TNC trading has depths so large that it always jumps by one tick, but
the situation changes enormously for a highly active day like May 7, by this
time the TNC trading behaves just like other multiple-tick markets.

\begin{remark}
\label{Rmk.: The small kappa1}One more implication from Figure \ref%
{fig.:Empirical Jumping Distribution} is that $\kappa _{1}\left(
L_{1}\right) $ is, of course, a small number. To see this, we note that%
\begin{equation*}
\widehat{\kappa _{1}\left( L_{1}\right) }=\sum_{y\in
\mathbb{Z}
\backslash \left\{ 0\right\} }y\widehat{\nu \left( y\right) }=\dfrac{%
\sum_{y\in
\mathbb{Z}
\backslash \left\{ 0\right\} }y\hat{\alpha}_{y}-\left( 1-b\right) \sum_{y\in
\mathbb{Z}
\backslash \left\{ 0\right\} }y\hat{\alpha}_{-y}}{\left( 2-b\right) b}\hat{%
\beta}_{0}=\dfrac{\sum_{y\in
\mathbb{Z}
\backslash \left\{ 0\right\} }y\hat{\alpha}_{y}}{b}\hat{\beta}_{0}.
\end{equation*}%
Hence, the more symmetric the Figure \ref{fig.:Empirical Jumping
Distribution}, the smaller the estimate of $\kappa _{1}\left( L_{1}\right) $.
\end{remark}

Finally, we show the correlograms of the four data sets in Figure \ref%
{fig.:Empirical Autocorrelation}, using three orders of magnitude of
sampling intervals $\delta $: 0.1 second, 1 second, 10 seconds and 1 minute.
For each data set, we will use a single set of parameters in our price model
to fit all of the correlograms with different $\delta $.
\begin{figure}[t]
\centering
\includegraphics{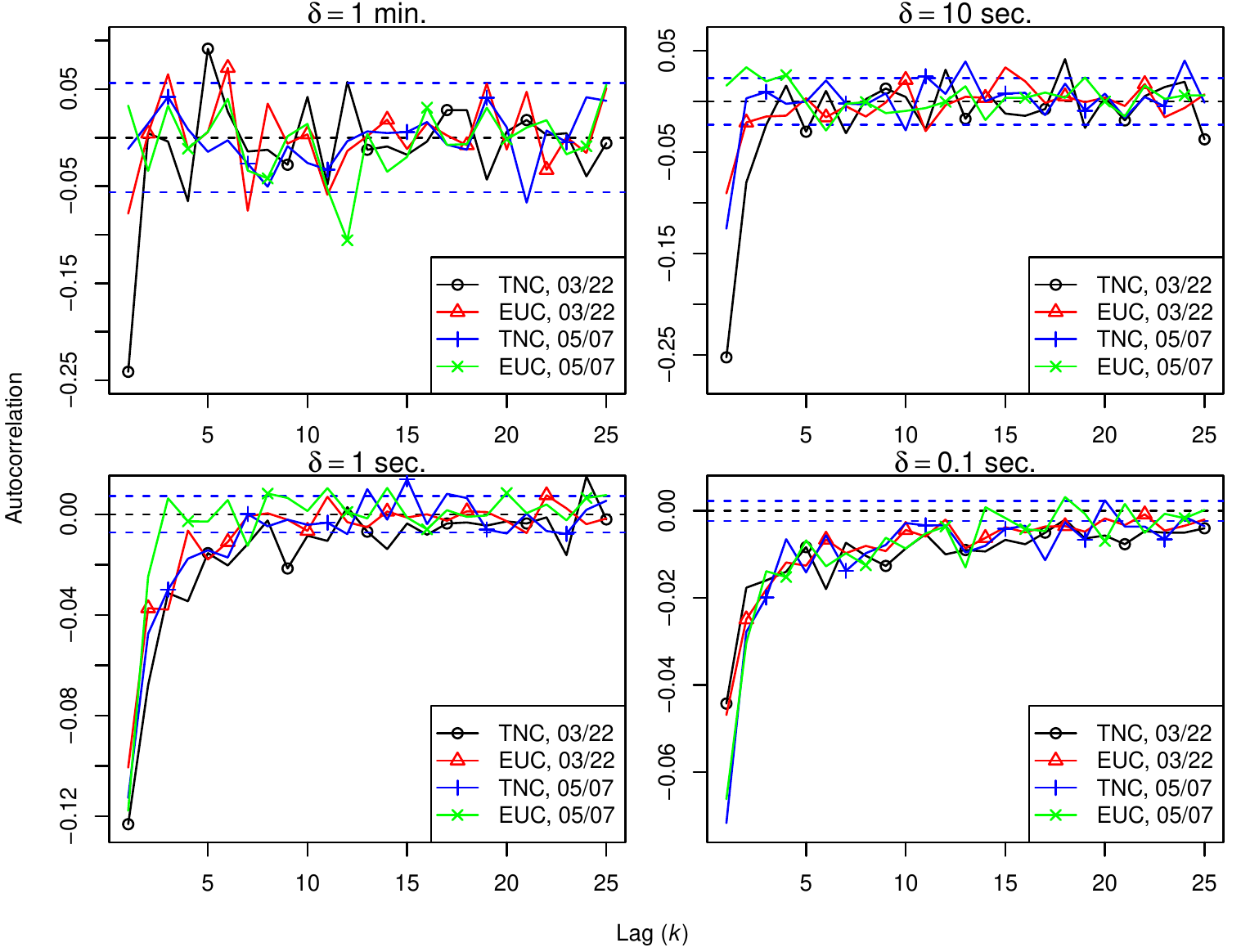}
\caption{The correlograms with different sampling intervals $\protect\delta %
=0.1,1,10,60$ (seconds) for the four data sets. The $x$-axis for each plot
is the lag $k$, while the $y$-axis denotes the value of empirical
autocorrelation. The dashed lines are located at $\pm 2/\protect\sqrt{T/%
\protect\delta }$. Code: \texttt{Price\_Plots.R}.}
\label{fig.:Empirical Autocorrelation}
\end{figure}
In general, these autocorrelations are significantly negative and increasing
as $k$ increases, while if $\delta $ gets very large the autocorrelations
will fall to roughly zero. Of course there is strong evidence that the
empirical data cannot be well-described by a pure L\'{e}vy process, which
always gives zero autocorrelations for returns. Our model is capable of
describing these autocorrelation features (Theorem \ref{Thm.: dependence of
returns} and Corollary \ref{Cor.: Negative Correlation}). The next
Subsection conducts moment-based estimations for these empirical data sets.

\subsection{Parameter estimation}

We use the methodology described before on the four data sets with the three
different trawls (\ref{OU trawl}), (\ref{long memory trawl}) and (\ref%
{sup-GIG trawl}). The estimation\footnote{%
To especially emphasize the fitting of market microstructure effects, the
sample variance is calculated on an equally distant grid on the log-scale of
$\delta $ whose range is shown in Figure \ref{fig.:Empirical Variance
Signature PlotLog}.} results are shown in Table
\vref{Tab.: Empirical Moment
Estimates}, where%
\begin{equation*}
\nu ^{+}\triangleq \sum_{y=1}^{\infty }\nu \left( y\right) \text{ and }\nu
^{-}\triangleq \sum_{y=1}^{\infty }\nu \left( -y\right)
\end{equation*}%
are the positive and negative jump intensities respectively.
\begin{table}[t]
\centering
\begin{tabular}{|c|c||c|c||c|c||c|c||c|c|}
\hline
\multirow{2}{*} {Trawl} & \multirow{2}{*} {Para} & \multicolumn{2}{|c||}{
TNC, 03/22} & \multicolumn{2}{|c||}{EUC, 03/22} & \multicolumn{2}{|c||}{TNC,
05/07} & \multicolumn{2}{|c|}{EUC, 05/07} \\ \cline{3-10}
&  & Est. & SE & Est. & SE & Est. & SE & Est. & SE \\ \hline
 \multirow{4}{*}{Exp} & $b$       & 0.396 & 0.014 & 0.654 & 0.008 & 0.574 & 0.015 & 0.694 & 0.007 \\
   & $\nu^+$                     & 0.014 & 0.000 & 0.069 & 0.000 & 0.059 & 0.001 & 0.282 & 0.001 \\
   & $\nu^-$                     & 0.013 & 0.000 & 0.068 & 0.000 & 0.060 & 0.001 & 0.279 & 0.001 \\
   & $\lambda$                    & 0.681 & 0.030 & 2.470 & 0.083 & 3.888 & 0.218 & 4.033 & 0.133 \\
   \hline
\multirow{5}{*}{sup-$\Gamma$}& $b$& 0.283 & 0.021 & 0.604 & 0.012 & 0.525 & 0.016 & 0.649 & 0.010 \\
   & $\nu^+$                     & 0.013 & 0.000 & 0.067 & 0.001 & 0.057 & 0.001 & 0.272 & 0.002 \\
   & $\nu^-$                     & 0.012 & 0.000 & 0.066 & 0.001 & 0.058 & 0.001 & 0.270 & 0.002 \\
   & $\alpha$                     & 1.146 & 0.191 & 0.311 & 0.037 & 0.187 & 0.038 & 0.192 & 0.023 \\
   & $H$                          & 1.000 & 0.125 & 1.000 & 0.104 & 1.000 & 0.139 & 1.000 & 0.102 \\
   \hline
\multirow{6}{*}{sup-GIG} & $b$    & 0.186 & 0.028 & 0.528 & 0.034 & 0.440 & 0.029 & 0.648 & 0.011 \\
   & $\nu^+$                     & 0.013 & 0.000 & 0.063 & 0.001 & 0.054 & 0.001 & 0.272 & 0.002 \\
   & $\nu^-$                     & 0.011 & 0.000 & 0.062 & 0.001 & 0.055 & 0.001 & 0.269 & 0.002 \\
   & $\gamma$                     & 0.000 & 0.066 & 0.000 & 0.030 & 0.003 & 0.028 & 0.000 & 0.064 \\
   & $\delta$                     & 0.453 & 0.049 & 0.604 & 0.085 & 0.583 & 0.099 & 1.525 & 0.209 \\
   & $\nu$                        & -0.604 & 0.078 & -0.453& 0.067 & $-$0.332 & 0.077 & $-$0.741 & 0.170\\
   \hline

\end{tabular}
\caption{ Moment-based estimations under different trawls for the four data
sets. Also shown are the standard error (SE) estimates for the moment
estimator to each parameter using the model-based bootstrap, where the
number of bootstrapped paths we draw is 10,000. }
\label{Tab.: Empirical Moment Estimates}
\end{table}
We observe in the Table that the estimation of $\nu ^{+}$ and $\nu ^{-}$ are
relatively robust across different choices of trawls. The estimate of $H$ in
Table \ref{Tab.: Empirical Moment Estimates} clearly suggests the
insufficiency of using a sup-$\Gamma $ trawl for the empirical data.
Furthermore, even though we fit a more general sup-GIG trawl with three
parameters, the four empirical data sets can almost be described by the sup-$%
\Gamma ^{-1}$ trawl with only two parameters (the case of $\gamma
\rightarrow 0$ for sup-GIG trawl mentioned in Section \ref%
{sect:parameterised trawl}). This phenomenon might be attributed to the fact
that inverse gamma distributions decay exponentially near the origin but
polynomially near infinity, allowing it to capture these very different time
scales.

\begin{remark}
In the same Table, we also provide the standard error (SE) estimates for
these moment-based estimations using the model-based bootstrap, i.e., a
vanilla Monte Carlo simulation with plugged-in parameters.
\end{remark}

Using these estimated parameters, we first show the variance signature plots
of $\widehat{\sigma _{\delta }^{2}}/\delta $ against $\delta $ along with
the corresponding theoretical curves (\ref{Variance signature}) for each
trawl in Figure \ref{fig.:Empirical Variance Signature Plot} and \ref%
{fig.:Empirical Variance Signature PlotLog}, where the second of these
graphs uses a log-scale for $\delta $.
\begin{figure}[t]
\centering\includegraphics[width=%
\textwidth]{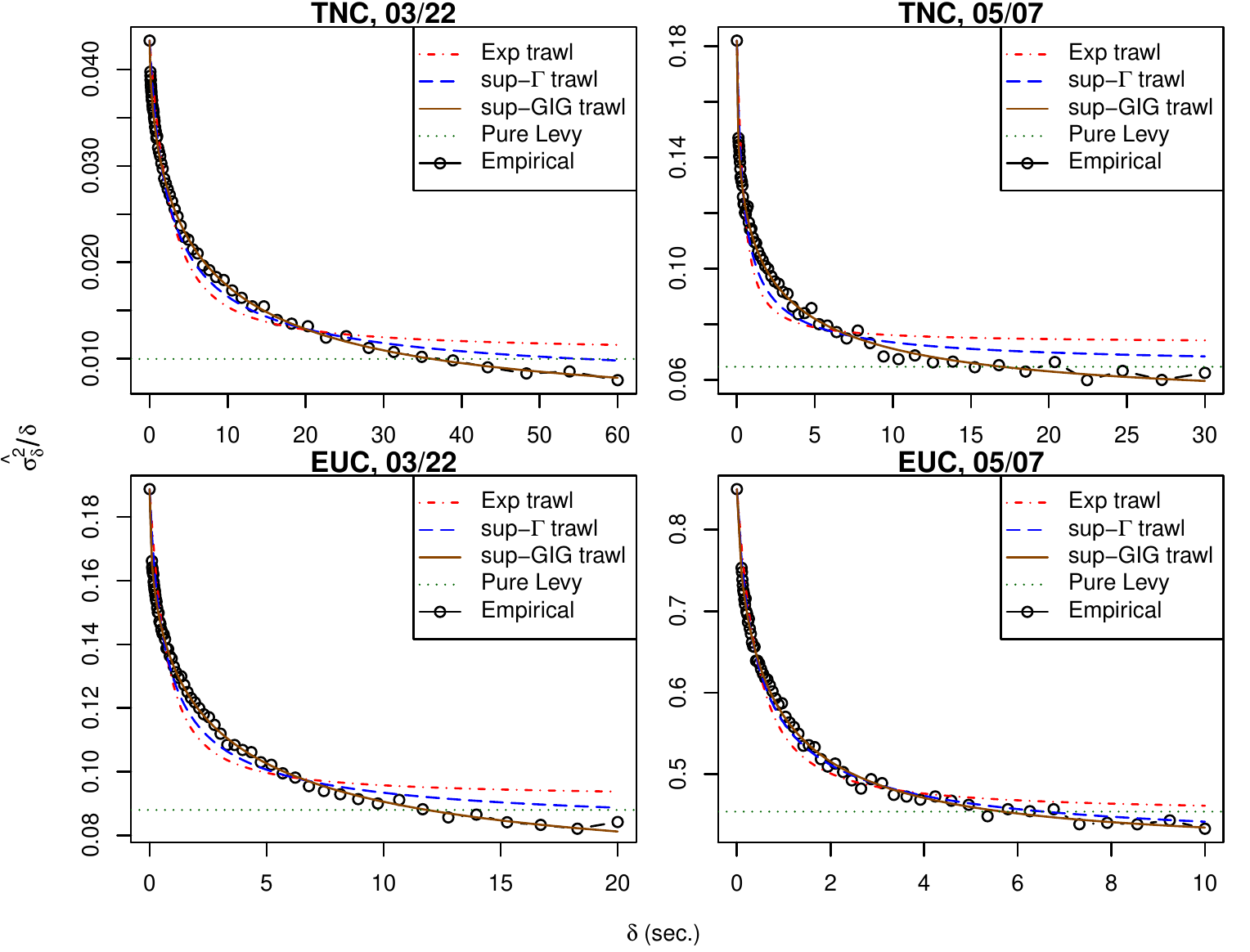}
\caption{The variance signature plots for the four data sets along with the
fitting curves from different trawls. The $x$-axis for each plot is $\protect%
\delta $ (seconds), while the $y$-axis denotes the value of the sample
variance of returns divided by $\protect\delta $. Code: \texttt{%
Moment\_Inference\_v2.0.R}.}
\label{fig.:Empirical Variance Signature Plot}
\end{figure}
In each of the plots, we put not only $\lim\limits_{\delta \rightarrow 0}%
\widehat{\sigma _{\delta }^{2}}/\delta =\left( \partial _{\delta }\widehat{%
\sigma _{\delta }^{2}}\right) \left( 0\right) =\sum_{y\in
\mathbb{Z}
\backslash \left\{ 0\right\} }y^{2}\hat{\alpha}_{y}\hat{\beta}_{0}$ at the
corresponding location of $\delta =0$ but also a reference horizontal line from a
pure L\'{e}vy process model ($b=1$), which is calculated from the slope of a
linear fitting line in the variogram of $\widehat{\sigma _{\delta }^{2}}$
against $\delta $.
\begin{figure}[t]
\centering\includegraphics[width=%
\textwidth]{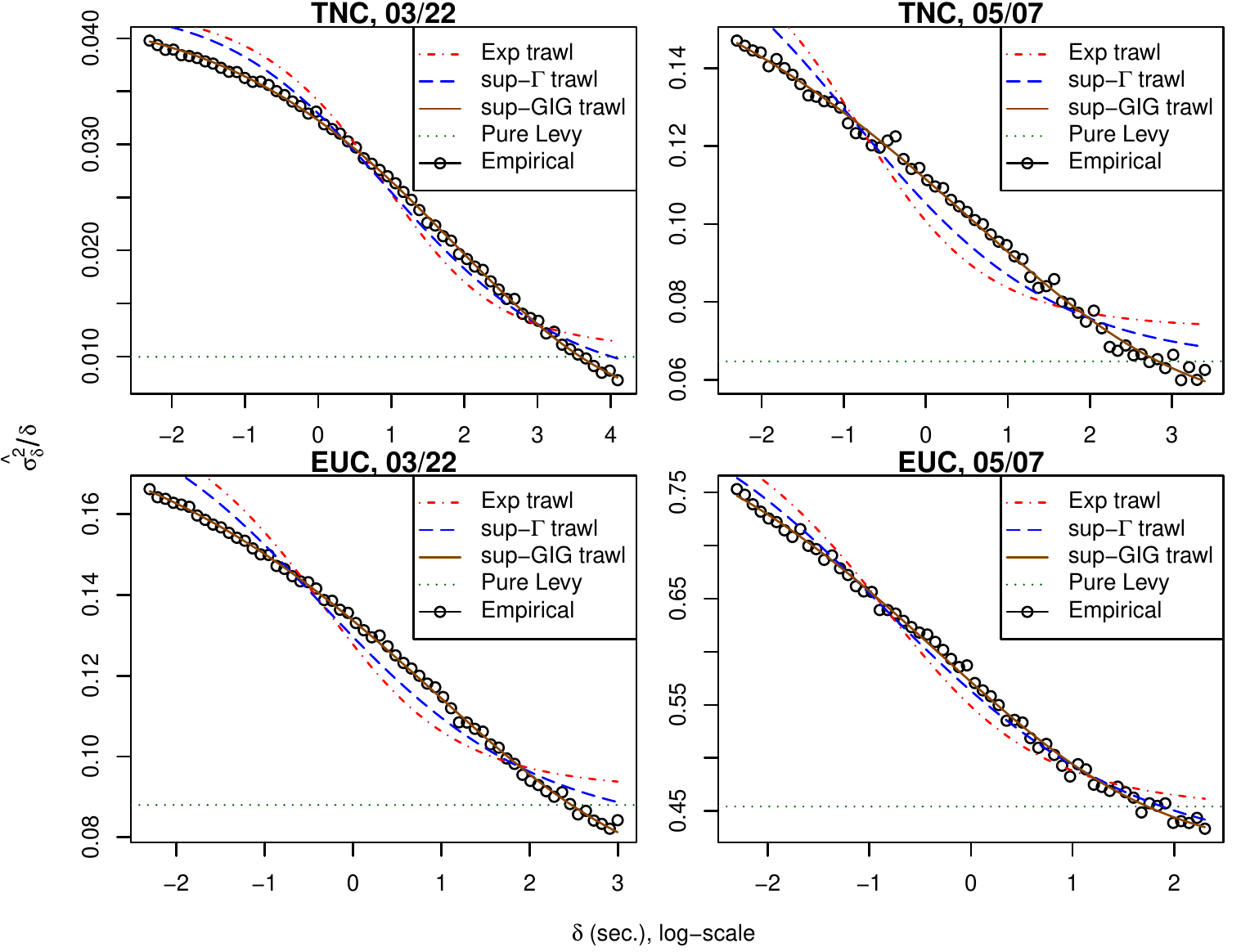}
\caption{The variance signature plots for the four data sets along with the
fitting curves from different trawls in the scale of $\log \protect\delta $.
Code: \texttt{Moment\_Inference\_v2.0.R}.}
\label{fig.:Empirical Variance Signature PlotLog}
\end{figure}

These fittings to the variance signature plots show good results---here we
particularly notice that using a sup-GIG trawl gives a very good fit; while
the other two simpler trawls fail to fit the region with a smaller $\delta $%
. This point becomes apparent when we check Figure \ref{fig.:Empirical
Variance Signature PlotLog}.

To further examine our model fitting, we also show the log-histograms for
the return distribution with different $\delta $ along with the theoretical
curves (by applying the inverse Fourier transform on Theorem \ref{thm:dist
of returns}) in Figure \ref{fig.:Empirical log-histograms of jumps}.
\begin{figure}[t]
\centering%
\begin{subfigure}[t]{0.48\textwidth}
        \centering
        \includegraphics[width=\textwidth]{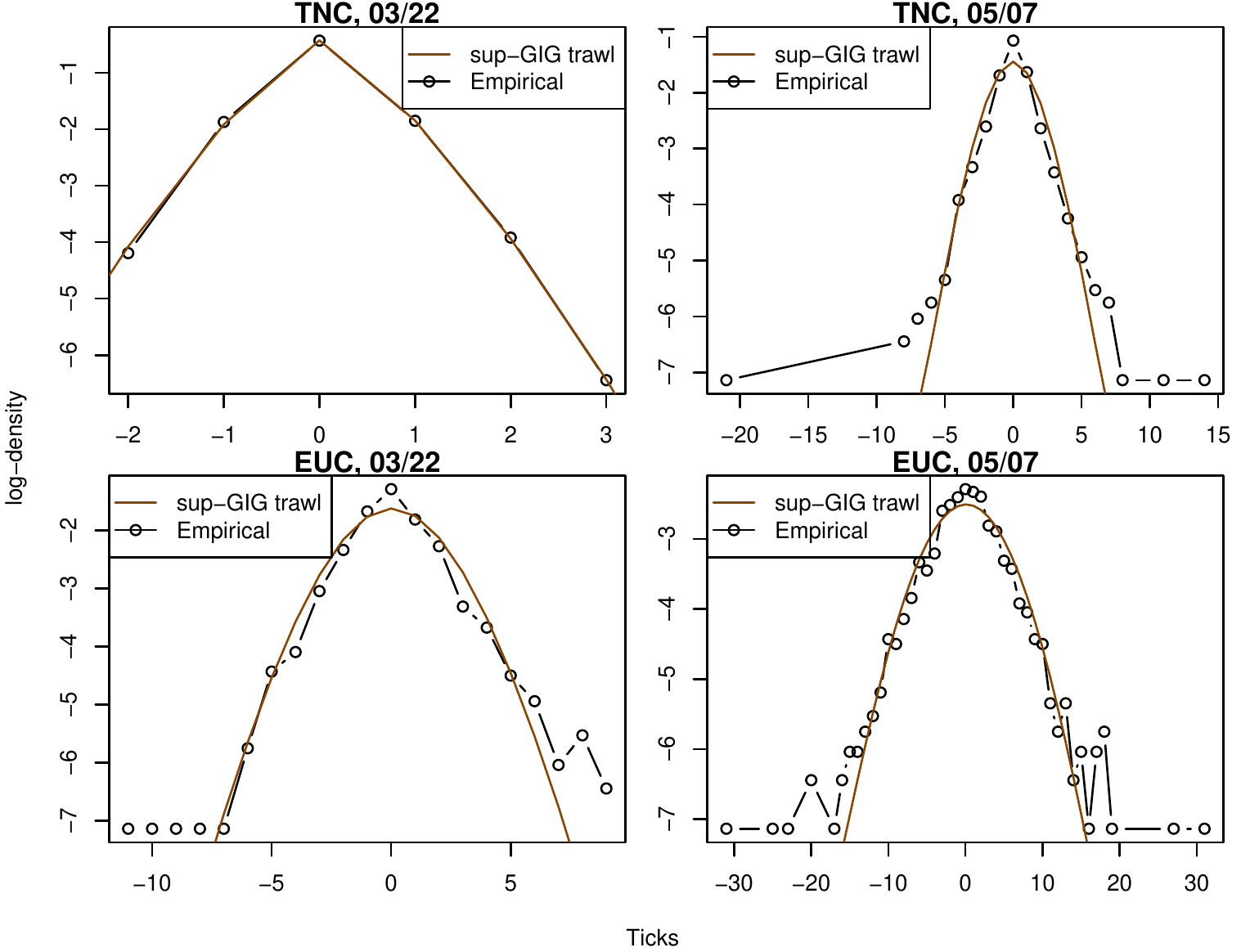}
        \caption{$\delta=60$ seconds.}
    \end{subfigure}%
\begin{subfigure}[t]{0.48\textwidth}
        \centering
        \includegraphics[width=\textwidth]{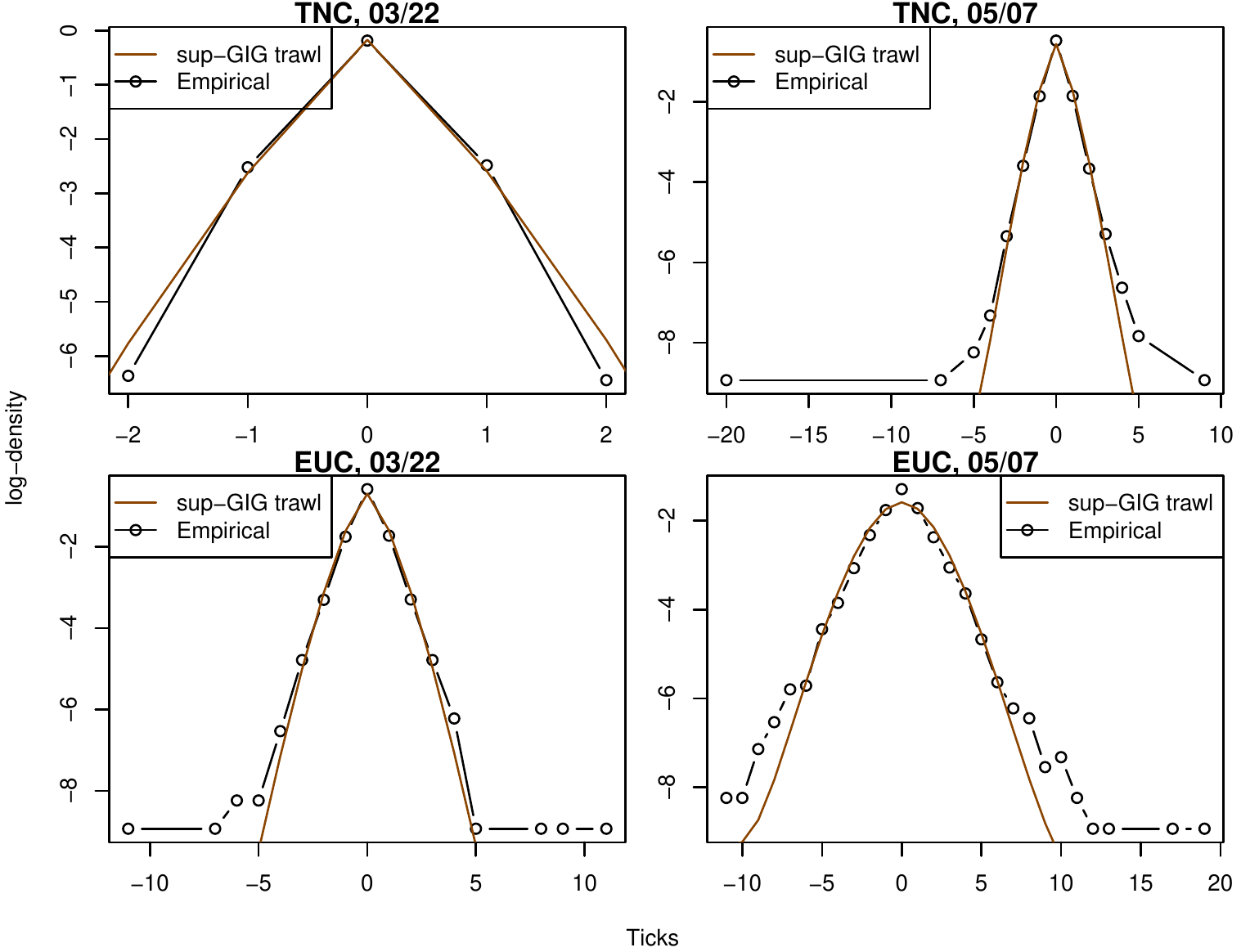}
        \caption{$\delta=10$ seconds.}
    \end{subfigure}\newline
\begin{subfigure}[t]{0.48\textwidth}
        \centering
        \includegraphics[width=\textwidth]{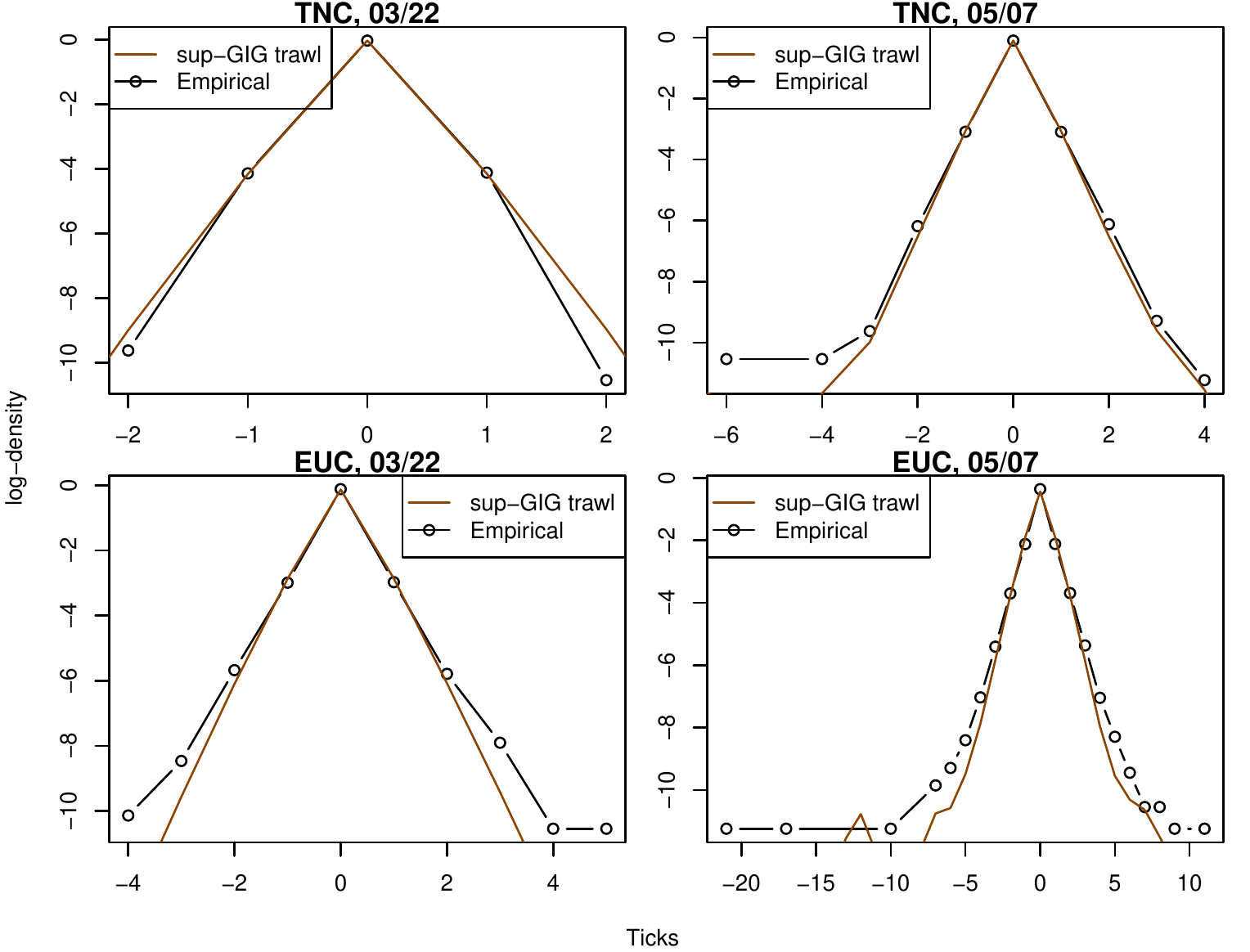}
        \caption{$\delta=1$ second.}
    \end{subfigure}%
\begin{subfigure}[t]{0.48\textwidth}
        \centering
        \includegraphics[width=\textwidth]{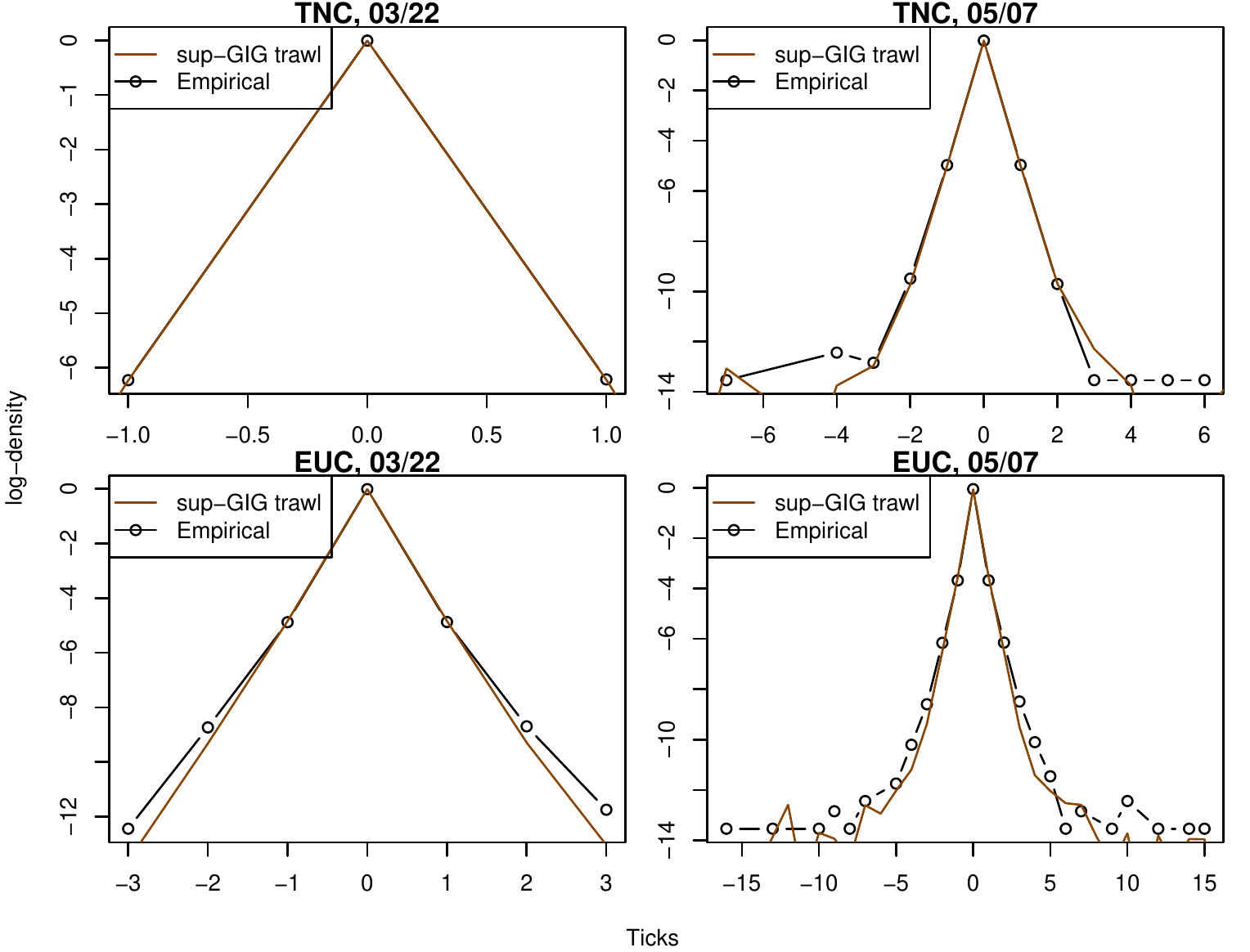}
        \caption{$\delta=0.1$ seconds.}
    \end{subfigure}
\caption{The log-histograms for the returns of the four data sets over
several sampling intervals along with the theoretical curves from sup-GIG
trawl. Code: \texttt{Moment\_Inference\_v2.0.R}.}
\label{fig.:Empirical log-histograms of jumps}
\end{figure}
For a larger $\delta $ the sup-GIG trawl do a better job than the other two
trawls (not shown in Figure \ref{fig.:Empirical log-histograms of jumps})
while for a smaller $\delta $ the difference among the three trawls is
limited. As an overall comment, our model seems to underestimate the tail
part of each of the empirical jumping distributions.

We now demonstrate the correlograms for the returns with different $\delta $
along with the theoretical curves in Figure \ref{fig.:Empirical ACF with
fitting}.
\begin{figure}[t]
\centering%
\begin{subfigure}[t]{0.48\textwidth}
        \centering
        \includegraphics[width=\textwidth]{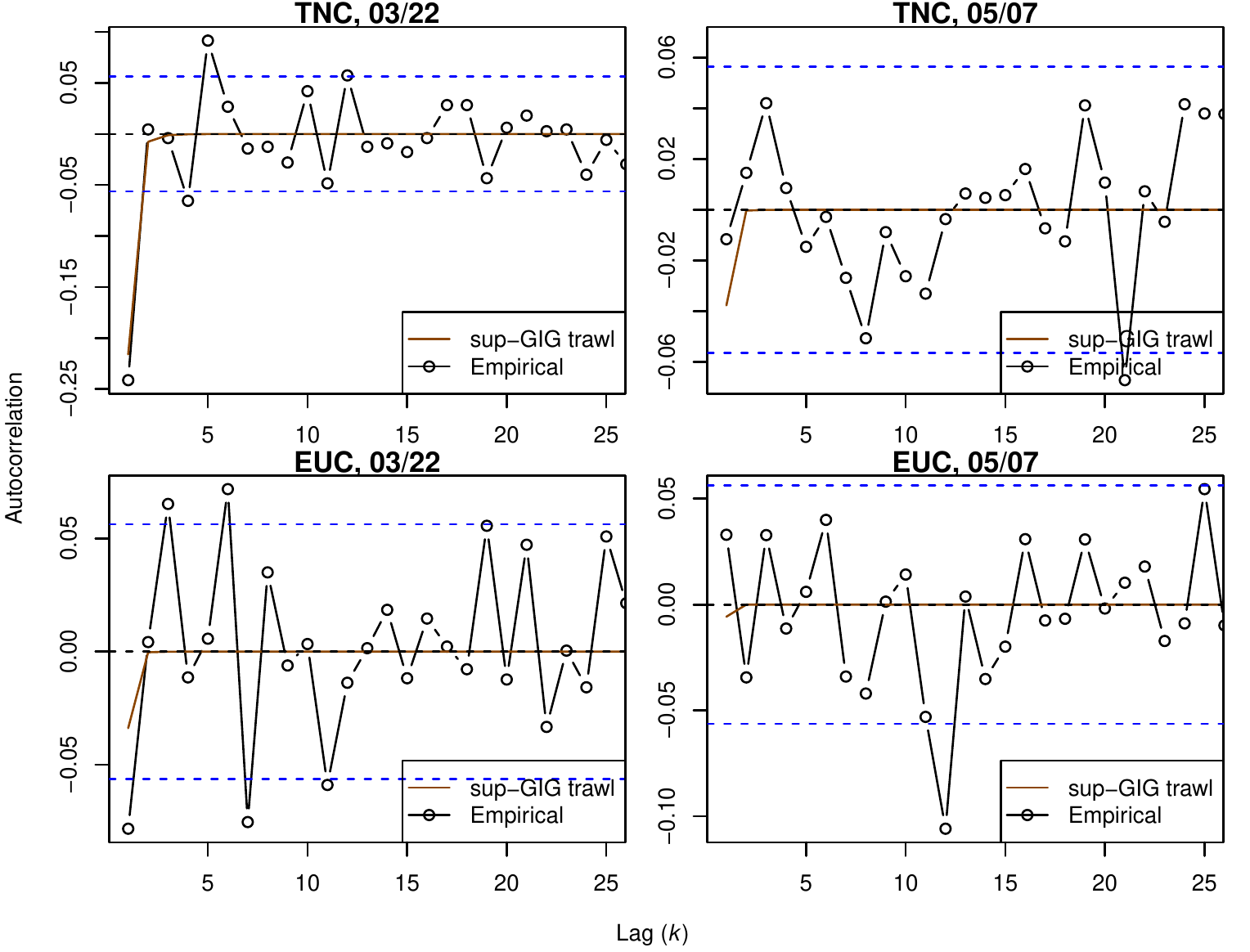}
        \caption{$\delta=60$ seconds.}
    \end{subfigure}%
\begin{subfigure}[t]{0.48\textwidth}
        \centering
        \includegraphics[width=\textwidth]{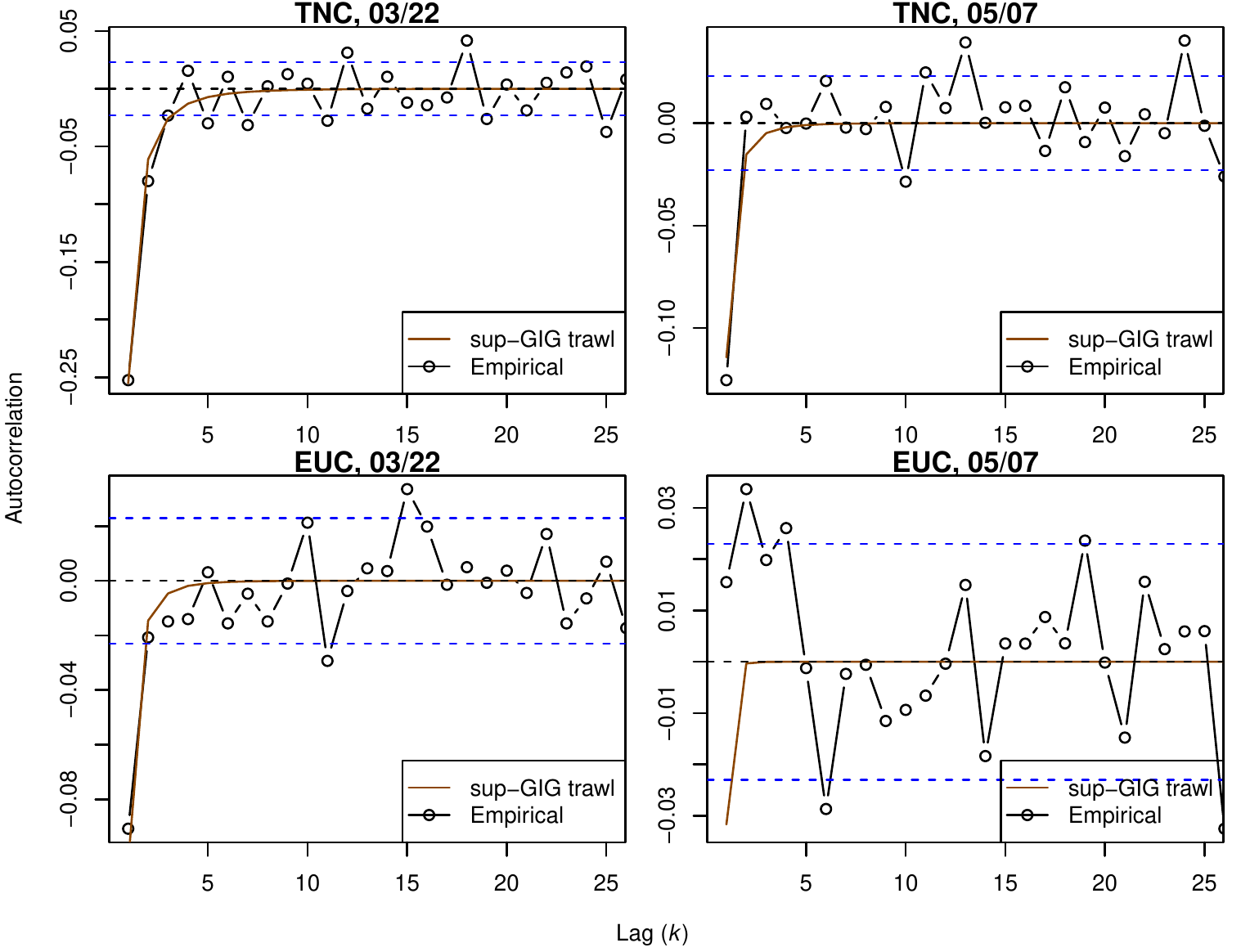}
        \caption{$\delta=10$ seconds.}
    \end{subfigure}\newline
\begin{subfigure}[t]{0.48\textwidth}
        \centering
        \includegraphics[width=\textwidth]{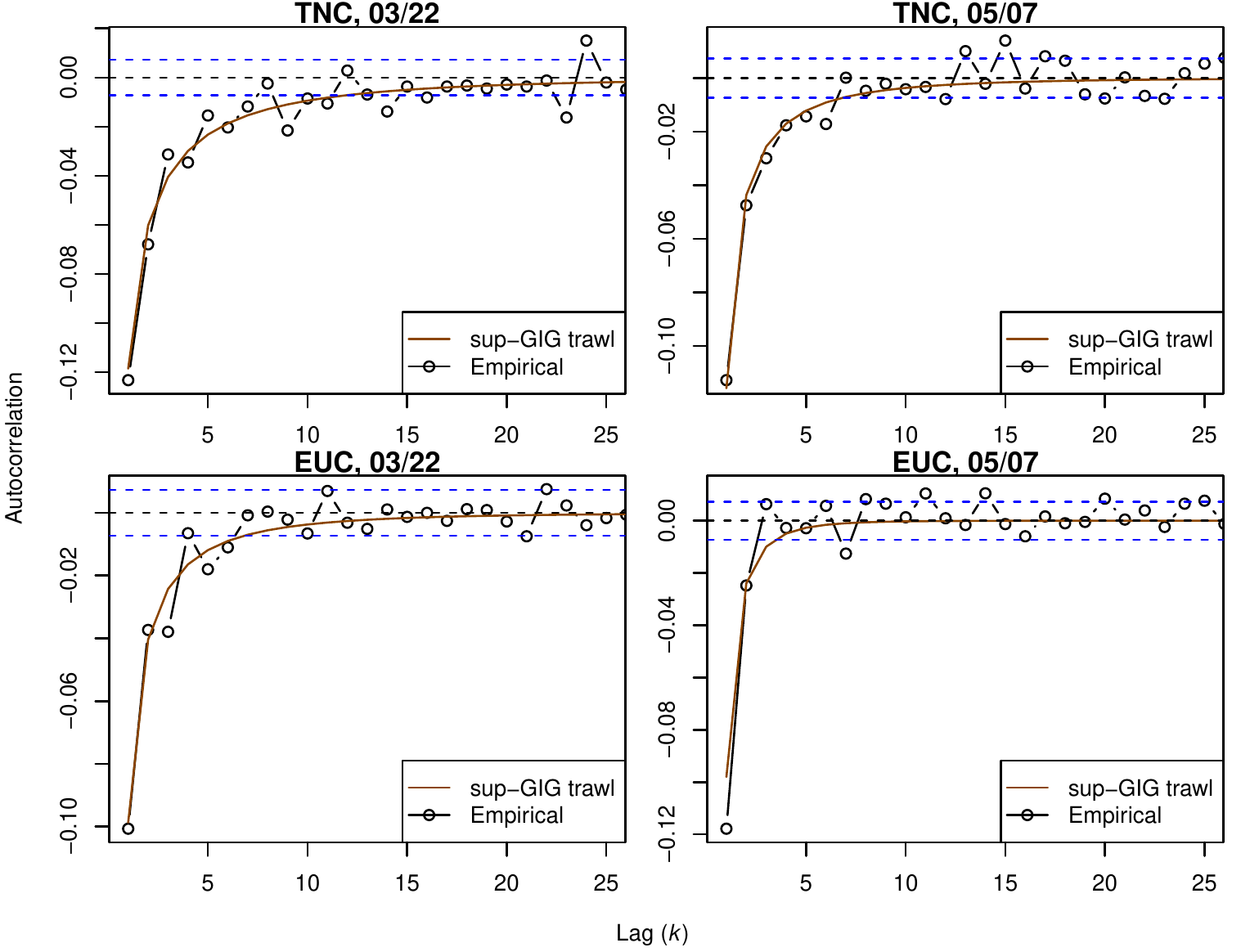}
        \caption{$\delta=1$ second.}
    \end{subfigure}%
\begin{subfigure}[t]{0.48\textwidth}
        \centering
        \includegraphics[width=\textwidth]{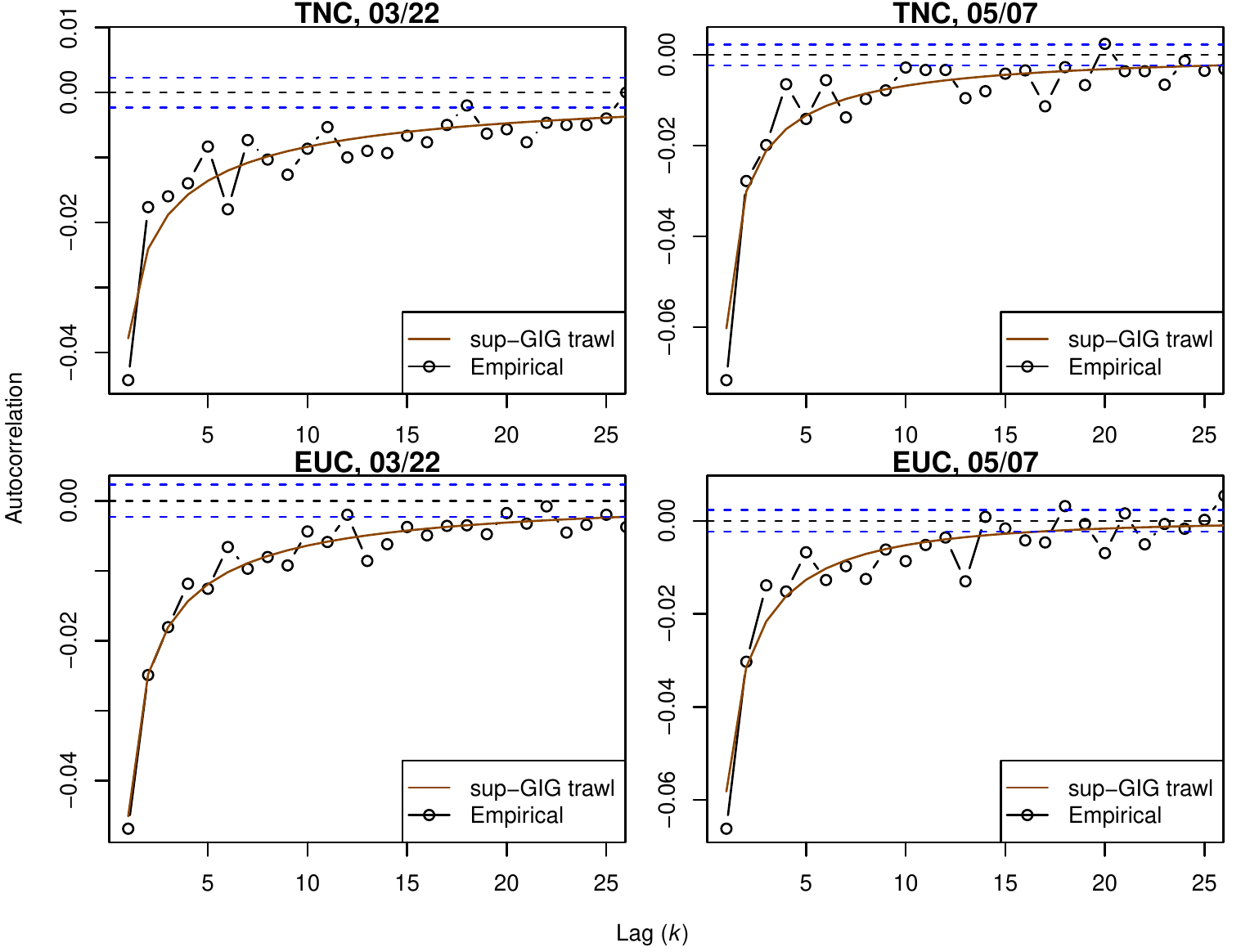}
        \caption{$\delta=0.1$ seconds.}
    \end{subfigure}
\caption{The correlograms for the returns of the four data sets over several
sampling intervals along with the theoretical curves from sup-GIG trawl. The
dashed lines are located at $\pm 2/\protect\sqrt{T/\protect\delta }$.
Code: \texttt{Moment\_Inference\_v2.0.R}.}
\label{fig.:Empirical ACF with fitting}
\end{figure}
For a larger $\delta $, the empirical returns look almost uncorrelated
(insignificant from being $0$) except for TNC on March 22, but the sup-GIG
trawl still captures this anomaly at the first lag. As $\delta $ becomes
smaller, those negative correlations become more significant; even though
the exponential trawl and the sup-$\Gamma $ trawl (not shown in Figure \ref%
{fig.:Empirical ACF with fitting}) can depict the shape of the
autocorrelation, only sup-GIG trawl can fit the first few lags.

As a summary, the sup-GIG trawl (or essentially the sup-$\Gamma ^{-1}$
trawl) performs better than the other two trawls in every aspects. These
empirical analyses demonstrate the descriptive power of our proposed model
for the futures data.

\begin{remark}
We now criticize the insufficient part of our proposed model. A plot (not
shown) of the counting process of price moves for our four data sets will
clearly show a non-linear increasing pattern that disobeys the linearity
described by equation (\ref{Expectation of the power variation}). This
non-linear pattern can be attributed to the well-known diurnal time-varying
levels of trading activity. For the same contract, its two counting process
plots look alike (after rescaling) across different trading days.

We are currently exploring methods that can adjust the model to deal with
these effects, hoping to report on them shortly. It will involve the use of
two independent stochastic time changes for the positive events and the
negative events. A special case on the Skellam L\'{e}vy process using this
ideas has been addressed in \cite{KerssLeonenkoSikorskii(14)}.
\end{remark}

\section{Conclusion\label{sect:conclusion}}

We propose a novel and simple model that can adequately capture some of the
important features of high frequency financial data. It is able to deal with
the dependence in price changes measured over three different orders of
magnitude of time intervals. The model is directly formulated in terms of
the price impact curve (or trawl function). It has a c\`{a}dl\`{a}g price
process that is a piecewise constant semimartingale with finite activity,
finite variation and no Brownian motion component.

However, we need to emphasize that, the proposed model in this paper is just
an initial step. Even though we emphasize the discreteness and the
fleetingness in the movements of the price process, we have been assuming a
simple structure so far with no time-varying features. We will shortly
report on how to generalize this model to the more realistic case using a
stochastic time-change.

Our model provides a good description to the empirical data, while we
majorly focus on the trade prices, which is not always immediately tradable.
For market practitioners who sit either on the buy side or the sell side,
they might consider to apply the proposed model on either the ask price or
bid price, so our model is much more widely applicable than the cases we
report here.

\section{Acknowledgements}

We thank Mikkel Bennedsen, Per Mykland, Mikkel Plagborg-M\o ller, Almut
Veraart, Shihao Yang and various seminar groups for their comments on an
earlier draft.

{\small
\bibliographystyle{chicago}
\bibliography{../Main-AddOn,../neil}
}

\appendix%

\section{\label{Sec.: Proof}Proofs and details}

\begin{proof}[Details of Remark \protect\ref{Cor.: Being a semimartingale}]
To see this, we first argue that $P_{t}$ is a semimartingale with respect to
the \emph{complete data filtration} $\mathcal{F}_{t}\vee \mathcal{S}_{t}$,
which includes the history of the price process itself ($\mathcal{F}_{t}$)
and the history of all the hidden activities of events ($\mathcal{S}_{t}$).
Precisely, $\mathcal{S}_{t}$ is the natural filtration generated by the
process of \emph{random set} $S_{t}$, which consists of all the surviving
events $\left( q,y\right) $ in the trawl at time $t$, where $q\leq t$ is the
original arrival time of the event and $y$ is its size. Then clearly $%
L\left( A_{t}\right) =\sum_{\left( q,y\right) \in S_{t}}y$ is a c\`{a}dl\`{a}%
g adapted process (w.r.t. $\mathcal{S}_{t}$) of locally bounded variation if
the underlying L\'{e}vy basis has finite activities.

Denote the natural filtration generated by the path of $L\left( B_{t}\right)
$ as $\mathcal{L}_{t}$. Then from the definition of $P_{t}$ the complete
data information $\mathcal{F}_{t}\vee \mathcal{S}_{t}$ must be the same as $%
\mathcal{L}_{t}\vee \mathcal{S}_{t}\vee \sigma \left( V_{0}\right) $---the
path of $L\left( B_{t}\right) $ will be completely revealed under $\mathcal{F%
}_{t}\vee \mathcal{S}_{t}$, where $\left\{ \mathcal{L}_{t}\right\} $, $%
\left\{ \mathcal{S}_{t}\right\} $ and $V_{0}$ are completely independent to
each other. Thus,%
\begin{equation*}
M_{t}\triangleq L\left( B_{t}\right) -b\left( \sum_{y\in
\mathbb{Z}
\backslash \left\{ 0\right\} }y\nu \left( y\right) \right) t\in \mathcal{L}%
_{t}\subseteq \mathcal{F}_{t}\vee \mathcal{S}_{t}
\end{equation*}%
must be a martingale w.r.t. $\mathcal{F}_{t}\vee \mathcal{S}_{t}$ because%
\begin{equation*}
\mathbb{E}\left( M_{t}|\mathcal{F}_{s},\mathcal{S}_{s}\right) =\mathbb{E}%
\left( M_{t}|\mathcal{L}_{s},\mathcal{S}_{s},V_{0}\right) =\mathbb{E}\left(
M_{t}|\mathcal{L}_{s}\right) =M_{s},
\end{equation*}%
where the second equality follows from the independence between $\left\{
\mathcal{L}_{t}\right\} $, $\left\{ \mathcal{S}_{t}\right\} $ and $V_{0}$.

Write%
\begin{equation*}
P_{t}=M_{t}+Q_{t},\ \ \ \ Q_{t}\triangleq V_{0}+L\left( A_{t}\right)
+b\left( \sum_{y\in
\mathbb{Z}
\backslash \left\{ 0\right\} }y\nu \left( y\right) \right) t.
\end{equation*}%
As $V_{0}$ can be revealed under $\mathcal{F}_{0}$ and $\mathcal{S}_{0}$, it
is trivially in $\mathcal{F}_{t}\vee \mathcal{S}_{t}$, too. Then $Q_{t}$ is
also a c\`{a}dl\`{a}g adapted process (w.r.t. $\mathcal{F}_{t}\vee \mathcal{S%
}_{t}$) of locally bounded variation. We then conclude that $P_{t}$ is a
semimartingale w.r.t. $\mathcal{F}_{t}\vee \mathcal{S}_{t}$. As the property
of being a semimartingale is preserved under shrinking the filtration, $%
P_{t} $ is a semimartingale w.r.t. $\mathcal{F}_{t}\subseteq \mathcal{F}%
_{t}\vee \mathcal{S}_{t}$.
\end{proof}

\begin{proof}[Proof of Theorem \protect\ref{thm:dist of returns}]
We partition $C_{t}$ and $C_{0}$ into three disjoint sets, one of which is
in common:%
\begin{equation*}
C_{t}=\left( C_{t}\cap C_{0}\right) \cup \left( C_{t}\backslash C_{0}\right)
,\ \ \ \ C_{0}=\left( C_{t}\cap C_{0}\right) \cup \left( C_{0}\backslash
C_{t}\right) ,
\end{equation*}%
so this means that%
\begin{equation*}
P_{t}-P_{0}=L\left( C_{t}\backslash C_{0}\right) -L(C_{0}\backslash C_{t}).
\end{equation*}%
$L\left( C_{t}\backslash C_{0}\right) $ is clearly independent of $%
L(C_{0}\backslash C_{t})$ due to the independence property of the L\'{e}vy
basis and the disjointedness between $C_{t}\backslash C_{0}$ and $%
C_{0}\backslash C_{t}$.

For any $t\geq 0$,
\begin{eqnarray*}
C_{t}\backslash C_{0} &=&\left( A_{t}\backslash A\right) \cup B_{t}=\left(
A_{t}\backslash A\right) \cup \left( \lbrack 0,b)\times (0,t]\right) \\
C_{0}\backslash C_{t} &=&A\backslash A_{t}, \\
leb\left( C_{t}\backslash C_{0}\right) &=&leb\left( A_{t}\backslash A\right)
+bt, \\
leb\left( C_{0}\backslash C_{t}\right) &=&leb\left( A\backslash A_{t}\right)
=leb\left( A_{t}\backslash A\right) .
\end{eqnarray*}%
Then%
\begin{eqnarray*}
C\left( \theta \ddagger P_{t}-P_{0}\right) &=&C\left( \theta \ddagger
L\left( C_{t}\backslash C_{0}\right) \right) +C\left( -\theta \ddagger
L\left( C_{0}\backslash C_{t}\right) \right) , \\
&=&leb(C_{t}\backslash C_{0})C\left( \theta \ddagger L_{1}\right)
+leb(C_{0}\backslash C_{t})C\left( -\theta \ddagger L_{1}\right) \\
&=&btC\left( \theta \ddagger L_{1}\right) +leb(A_{t}\backslash A)\left(
C\left( \theta \ddagger L_{1}\right) +C\left( -\theta \ddagger L_{1}\right)
\right) .
\end{eqnarray*}%
For any random variable $X$ we always have%
\begin{equation*}
\kappa _{j}\left( X\right) =\dfrac{1}{\mathrm{i}^{j}}\left. \dfrac{\partial
^{j}}{\partial ^{j}\theta }C\left( \theta \ddagger X\right) \right\vert
_{\theta =0},
\end{equation*}%
so using the equation above it is clear that%
\begin{equation*}
\kappa _{j}(P_{t}-P_{0})=\left( bt+leb(A_{t}\backslash A)\left(
1+(-1)^{j}\right) \right) \kappa _{j}(L_{1}),
\end{equation*}%
which is the required result.
\end{proof}

\begin{proof}[Proof of Theorem \protect\ref{Thm.: Jumping distribution}]
For each $y\in
\mathbb{Z}
\backslash \left\{ 0\right\} $, the price process has a jump with size $y$
if and only if either one event with size $y$ arrives or one event with size
$-y$ departures---thanks to the monotonicity of $d$. Thus, the probability
of the arrival event can be characterized by the non-zero probability of a
Poisson random variable with intensity
\begin{equation*}
\nu \left( y\right) leb\left( D_{t}\backslash D_{t-\mathrm{d}t}\right)
\approx \nu \left( y\right) \mathrm{d}t;
\end{equation*}%
on the other hand, the probability of the departure event can be similarly
depicted by the non-zero probability of a Poisson random variable with
intensity%
\begin{equation*}
\nu \left( -y\right) leb\left( A_{t-\mathrm{d}t}\backslash A_{t}\right)
\approx \nu \left( -y\right) \left( 1-b\right) \mathrm{d}t.
\end{equation*}%
Therefore, by noting that $\mathbb{P}\left( X>0\right) =1-e^{-\lambda
}\approx \lambda $ for $X\backsim \mathrm{Pois}\left( \lambda \right) $ and
small $\lambda $, we have%
\begin{eqnarray*}
\mathbb{P}\left( \Delta P_{t}=y|\Delta P_{t}\neq 0\right) &=&\dfrac{\mathbb{P%
}\left( \Delta P_{t}=y\right) }{\sum_{y\in
\mathbb{Z}
\backslash \left\{ 0\right\} }\mathbb{P}\left( \Delta P_{t}=y\right) } \\
&=&\dfrac{\nu \left( y\right) \mathrm{d}t+\nu \left( -y\right) \left(
1-b\right) \mathrm{d}t}{\sum_{y\in
\mathbb{Z}
\backslash \left\{ 0\right\} }\left( \nu \left( y\right) \mathrm{d}t+\nu
\left( -y\right) \left( 1-b\right) \mathrm{d}t\right) } \\
&=&\dfrac{\nu \left( y\right) +\nu \left( -y\right) \left( 1-b\right) }{%
\left( 2-b\right) \left\Vert \nu \right\Vert }.
\end{eqnarray*}
\end{proof}

\begin{proof}[Proof of Theorem \protect\ref{Thm.: dependence of returns}]
We will use the following straightforward result on the increments of a
process to prove Theorem \ref{Thm.: dependence of returns}.

\begin{lemma}
\label{Remark about covariances} Suppose that $Z_{t},$ for $t\in
\mathbb{R}
$, has covariance stationary increments. Then for $\delta >0$ and $%
k=1,2,3,...$%
\begin{eqnarray*}
\gamma _{k} &\triangleq &\func{Cov}\left( Z_{\left( k+1\right) \delta
}-Z_{k\delta },Z_{\delta }-Z_{0}\right) \\
&=&\dfrac{1}{2}\func{Var}\left( Z_{\left( k+1\right) \delta }-Z_{k\delta
}\right) -\func{Var}\left( Z_{k\delta }-Z_{0}\right) +\dfrac{1}{2}\func{Var}%
\left( Z_{\left( k-1\right) \delta }-Z_{0}\right) .
\end{eqnarray*}
\end{lemma}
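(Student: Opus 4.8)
The plan is to reduce everything to increments measured from the common origin and then exploit bilinearity of covariance together with the polarization identity. First I would introduce the centred increments $W_{t}\triangleq Z_{t}-Z_{0}$, so that $W_{0}=0$ and, by covariance stationarity of the increments, $\func{Var}(W_{h})=\func{Var}(Z_{t+h}-Z_{t})$ for every $t$; write $v(h)\triangleq \func{Var}(W_{h})$ for this common variogram. The quantity to be computed then becomes $\gamma_{k}=\func{Cov}(W_{(k+1)\delta}-W_{k\delta},\,W_{\delta})$, which by bilinearity splits into $\func{Cov}(W_{(k+1)\delta},W_{\delta})-\func{Cov}(W_{k\delta},W_{\delta})$.

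Next I would evaluate each cross-covariance with the polarization identity $\func{Cov}(U,V)=\tfrac{1}{2}\left(\func{Var}(U)+\func{Var}(V)-\func{Var}(U-V)\right)$. The one genuinely substantive step is recognising the difference as a shifted increment: for $a>b$ one has $W_{a}-W_{b}=Z_{a}-Z_{b}$, so covariance stationarity gives $\func{Var}(W_{a}-W_{b})=v(a-b)$. Applying this with $(a,b)=((k+1)\delta,\delta)$ and with $(a,b)=(k\delta,\delta)$ yields $\func{Cov}(W_{(k+1)\delta},W_{\delta})=\tfrac12\left(v((k+1)\delta)+v(\delta)-v(k\delta)\right)$ and $\func{Cov}(W_{k\delta},W_{\delta})=\tfrac12\left(v(k\delta)+v(\delta)-v((k-1)\delta)\right)$.

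Subtracting these, the $v(\delta)$ terms and one copy of $v(k\delta)$ cancel, leaving the symmetric second difference of the variogram, $\gamma_{k}=\tfrac12 v((k+1)\delta)-v(k\delta)+\tfrac12 v((k-1)\delta)$. Translating back via $v(m\delta)=\func{Var}(Z_{m\delta}-Z_{0})$ gives the stated identity—with the caveat that my computation produces the leading term as $\func{Var}(Z_{(k+1)\delta}-Z_{0})$, the full length-$(k+1)\delta$ increment, rather than the length-$\delta$ increment as printed; it is precisely this symmetric "second difference" form that forces the linear-in-time drift to cancel when the lemma is fed into Theorem \ref{Thm.: dependence of returns} (one checks this on Brownian motion, where $v(h)=h$ makes the symmetric version vanish for all $k\ge 1$, as it must). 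I do not expect a deep obstacle here, since every manipulation is routine second-moment algebra; the only place to tread carefully is the increment-length bookkeeping in the polarization step, matching $W_{a}-W_{b}$ to the correct interval length $a-b$, as this is exactly where the form of the leading term is decided.
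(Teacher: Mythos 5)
Your proof is correct, and it also correctly diagnoses what is in fact a typo in the printed statement: the first term must be $\tfrac{1}{2}\mathrm{Var}\left(Z_{(k+1)\delta}-Z_{0}\right)$, not $\tfrac{1}{2}\mathrm{Var}\left(Z_{(k+1)\delta}-Z_{k\delta}\right)$. This is confirmed both by the paper's own proof, whose final line is exactly your symmetric second difference of the variogram, and by how the lemma is applied in Theorem \ref{Thm.: dependence of returns}, where the leading term appears as $\mathrm{Var}\left(P_{(k+1)\delta}-P_{0}\right)$; your Brownian-motion sanity check (with $v(h)=h$ the symmetric form gives $\gamma_{k}=0$, as independence of disjoint increments requires, while the printed form gives $-k\delta/2$) settles the matter. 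Your route differs from the paper's in mechanics, though both are elementary second-moment algebra. You center at the origin via $W_{t}=Z_{t}-Z_{0}$, split $\gamma_{k}$ by bilinearity into $\mathrm{Cov}\left(W_{(k+1)\delta},W_{\delta}\right)-\mathrm{Cov}\left(W_{k\delta},W_{\delta}\right)$, and evaluate each cross-covariance by the polarization identity together with stationarity, $\mathrm{Var}\left(W_{a}-W_{b}\right)=v(a-b)$. The paper instead expands $\mathrm{Var}\left(Z_{(k+1)\delta}-Z_{0}\right)$ as the variance of a sum of two increments to express $\gamma_{k}^{\ast}\triangleq\mathrm{Cov}\left(Z_{(k+1)\delta}-Z_{k\delta},Z_{k\delta}-Z_{0}\right)$ in terms of variances, and then uses stationarity to obtain the telescoping recursion $\gamma_{k}=\gamma_{k}^{\ast}-\gamma_{k-1}^{\ast}$ for $k\geq 2$ (the case $k=1$ being immediate since there $\gamma_{1}=\gamma_{1}^{\ast}$). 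Your version buys uniformity: one identity applied twice, valid for all $k\geq 1$, with no auxiliary sequence and no case split; the paper's version makes explicit the intermediate object $\gamma_{k}^{\ast}$, the covariance against the cumulative increment. Both arguments invoke covariance stationarity at exactly the point you flagged, namely matching a difference $W_{a}-W_{b}$ to an increment of length $a-b$, which is indeed where the bookkeeping decides the form of the leading term.
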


\begin{proof}
First note that%
\begin{eqnarray*}
\mathrm{Var}\left( Z_{(k+1)\delta }-Z_{0}\right) &=&\mathrm{Var}\left(
\left( Z_{(k+1)\delta }-Z_{k\delta }\right) +(Z_{k\delta }-Z_{0})\right) \\
&=&\mathrm{Var}\left( Z_{\delta }-Z_{0}\right) +\mathrm{Var}\left(
Z_{k\delta }-Z_{0}\right) +2\mathrm{Cov}\left( Z_{(k+1)\delta }-Z_{k\delta
},Z_{k\delta }-Z_{0}\right) .
\end{eqnarray*}%
By rearranging, we have%
\begin{equation*}
2\gamma _{k}^{\ast }\triangleq 2\mathrm{Cov}\left( Z_{(k+1)\delta
}-Z_{k\delta },Z_{k\delta }-Z_{0}\right) =\mathrm{Var}\left( Z_{(k+1)\delta
}-Z_{0}\right) -\mathrm{Var}\left( Z_{\delta }-Z_{0}\right) -\mathrm{Var}%
\left( Z_{k\delta }-Z_{0}\right) .
\end{equation*}%
If $k\geq 2$, then
\begin{eqnarray*}
2\gamma _{k}^{\ast } &=&2\mathrm{Cov}\left( Z_{(k+1)\delta }-Z_{k\delta
},Z_{k\delta }-Z_{0}\right) \\
&=&2\mathrm{Cov}\left( Z_{(k+1)\delta }-Z_{k\delta },Z_{k\delta }-Z_{\delta
}\right) +2\mathrm{Cov}\left( Z_{(k+1)\delta }-Z_{k\delta },Z_{\delta
}-Z_{0}\right) =2\gamma _{k-1}^{\ast }+2\gamma _{k}.
\end{eqnarray*}%
Hence,%
\begin{eqnarray*}
\gamma _{k} &=&\dfrac{2\gamma _{k}^{\ast }-2\gamma _{k-1}^{\ast }}{2}=\dfrac{%
1}{2}\left(
\begin{array}{c}
\mathrm{Var}\left( Z_{(k+1)\delta }-Z_{0}\right) -\mathrm{Var}\left(
Z_{\delta }-Z_{0}\right) -\mathrm{Var}\left( Z_{k\delta }-Z_{0}\right) \\
-\left( \mathrm{Var}\left( Z_{k\delta }-Z_{0}\right) -\mathrm{Var}\left(
Z_{\delta }-Z_{0}\right) -\mathrm{Var}\left( Z_{(k-1)\delta }-Z_{0}\right)
\right)%
\end{array}%
\right) \\
&=&\dfrac{1}{2}\mathrm{Var}\left( Z_{(k+1)\delta }-Z_{0}\right) -\mathrm{Var}%
\left( Z_{k\delta }-Z_{0}\right) +\dfrac{1}{2}\mathrm{Var}\left(
Z_{(k-1)\delta }-Z_{0}\right) ,
\end{eqnarray*}%
which is the required result.
\end{proof}

Combining Lemma \ref{Remark about covariances} and Theorem \ref{thm:dist of
returns} gives us%
\begin{eqnarray*}
\gamma _{k} &=&\frac{1}{2}\left( \mathrm{Var}\left( P_{(k+1)\delta
}-P_{0}\right) -2\mathrm{Var}\left( P_{k\delta }-P_{0}\right) +\mathrm{Var}%
\left( P_{(k-1)\delta }-P_{0}\right) \right) \\
&=&\frac{1}{2}\left( b\left( k+1\right) \delta +2leb\left( A_{\left(
k+1\right) \delta }\backslash A\right) -2\left( bk\delta +2leb\left(
A_{k\delta }\backslash A\right) \right) +b\left( k-1\right) \delta
+2leb\left( A_{\left( k-1\right) \delta }\backslash A\right) \right) \kappa
_{2}\left( L_{1}\right) \\
&=&\left( leb\left( A_{\left( k+1\right) \delta }\backslash A\right)
-2leb\left( A_{k\delta }\backslash A\right) +leb\left( A_{\left( k-1\right)
\delta }\backslash A\right) \right) \kappa _{2}\left( L_{1}\right) , \\
\rho _{k} &=&\dfrac{\gamma _{k}}{\func{Var}\left( P_{\delta }-P_{0}\right) }=%
\dfrac{leb\left( A_{\left( k+1\right) \delta }\backslash A\right)
-2leb\left( A_{k\delta }\backslash A\right) +leb\left( A_{\left( k-1\right)
\delta }\backslash A\right) }{b\delta +2leb\left( A_{\delta }\backslash
A\right) }.
\end{eqnarray*}
\end{proof}

\begin{proof}[Proof of Corollary \protect\ref{Cor.: Negative Correlation}]
From Proposition \ref{prop: stat trawl} we have%
\begin{equation*}
\dfrac{\partial }{\partial t}leb\left( A_{t}\cap A\right) =-\left( d\left(
-t\right) -b\right) ,
\end{equation*}%
so mean value theorem states that, for any $0\leq t_{1}<t_{2}<t_{3}$, there
exist $t_{23}\in \left( t_{2},t_{3}\right) $ and $t_{12}\in \left(
t_{1},t_{2}\right) $ such that%
\begin{eqnarray}
\dfrac{leb\left( A_{t_{3}}\cap A\right) -leb\left( A_{t_{2}}\cap A\right) }{%
t_{3}-t_{2}} &=&-\left( d\left( -t_{23}\right) -b\right)  \notag \\
&\leq &-\left( d\left( -t_{12}\right) -b\right)
\label{Collorary strict key ineq.} \\
&=&\dfrac{leb\left( A_{t_{2}}\cap A\right) -leb\left( A_{t_{1}}\cap A\right)
}{t_{2}-t_{1}},  \notag
\end{eqnarray}%
where the second inequality follows from the monotonicity of $d$ and $%
t_{12}<t_{23}$. This proves that $leb\left( A_{t}\cap A\right) $ is a convex
function of $t$. Hence, equation (\ref{trawl increament area}) implies%
\begin{eqnarray}
&&leb(A_{(k+1)\delta }\backslash A)-2leb(A_{k\delta }\backslash
A)+leb(A_{(k-1)\delta }\backslash A)  \notag \\
&=&-leb\left( A_{\left( k+1\right) \delta }\cap A\right) +2leb\left(
A_{k\delta }\cap A\right) -leb\left( A_{\left( k-1\right) \delta }\cap
A\right) \leq 0,  \label{rho_k < 0}
\end{eqnarray}%
as required.

When $d$ is a strictly increasing function, the inequality (\ref{Collorary
strict key ineq.}) becomes strict, so $leb\left( A_{t}\cap A\right) $
becomes a strictly convect function of $t$, which further makes inequality (%
\ref{rho_k < 0}) strict, as required.
\end{proof}

\begin{proof}[Proof of Theorem \protect\ref{Thm: Expectation of Power
variation}]
Arrivals are in $D_{t}$ and so aggregated to $\Sigma (D_{t};r)$, while
departures only happen at most once due to the monotonicity of $d$. All the
departures are in $G_{t}$ and so aggregated to $\Sigma (G_{t};r)$. Now%
\begin{eqnarray*}
\mathbb{E}\left( \left\{ P\right\} _{t}^{\left[ r\right] }\right) &=&\mathbb{%
E}\left( \Sigma \left( B_{t};r\right) \right) +\mathbb{E}\left( \Sigma
\left( H_{t};r\right) \right) +\mathbb{E}\left( \Sigma \left( G_{t};r\right)
\right) \\
&=&\left( leb(B_{t})+leb(H_{t})+leb(G_{t})\right) \int_{-\infty }^{\infty
}\left\vert y\right\vert ^{r}\nu (\mathrm{d}y) \\
&=&\left( bt+(1-b)t+(1-b)t\right) \int_{-\infty }^{\infty }\left\vert
y\right\vert ^{r}\nu (\mathrm{d}y) \\
&=&\left( 2-b\right) t\int_{-\infty }^{\infty }\left\vert y\right\vert
^{r}\nu (\mathrm{d}y),
\end{eqnarray*}%
where the third equality follows from%
\begin{equation*}
leb\left( G_{t}\right) =leb\left( H_{t}\cup A\right) -leb\left( A_{t}\right)
=leb\left( H_{t}\right) +leb\left( A\right) -leb\left( A_{t}\right)
=leb\left( H_{t}\right) =\left( 1-b\right) t.
\end{equation*}
\end{proof}

\begin{proof}[Proof of Proposition \protect\ref{Thm: RV expectation}]
Stationarity of returns and the definition that $\delta _{n}=T/n$ imply%
\begin{eqnarray*}
\mathbb{E}\left( RV^{\left( n\right) }\right) &=&\sum_{k=1}^{n}\mathbb{E}%
\left( P_{k\delta _{n}}-P_{\left( k-1\right) \delta _{n}}\right) ^{2}=n\func{%
Var}\left( P_{\delta _{n}}-P_{0}\right) +n\left( \mathbb{E}\left( P_{\delta
_{n}}-P_{0}\right) \right) ^{2} \\
&=&n\left( b\delta _{n}+2leb\left( A_{\delta _{n}}\backslash A\right)
\right) \kappa _{2}\left( L_{1}\right) +n\left( b\delta _{n}\kappa
_{1}\left( L_{1}\right) \right) ^{2} \\
&=&\left( b+2\dfrac{leb\left( A_{\delta _{n}}\backslash A\right) }{\delta
_{n}}\right) T\kappa _{2}\left( L_{1}\right) +b^{2}T\delta _{n}\kappa
_{1}^{2}\left( L_{1}\right).
\end{eqnarray*}
\end{proof}

\begin{proof}[Details of Remark \protect\ref{Rmk: b cannot be estimated
seperately}]
For any $r\geq 0$ plug-in (\ref{Levy measure moment estimate in terms of b})
into the left-hand side of (\ref{Moment equation from power variation}). Then%
\begin{eqnarray*}
\left( 2-b\right) \sum_{y\in
\mathbb{Z}
\backslash \left\{ 0\right\} }\left\vert y\right\vert ^{r}\widehat{\nu
\left( y\right) } &=&\left( 2-b\right) \sum_{y\in
\mathbb{Z}
\backslash \left\{ 0\right\} }\left\vert y\right\vert ^{r}\left( \dfrac{\hat{%
\alpha}_{y}-\left( 1-b\right) \hat{\alpha}_{-y}}{\left( 2-b\right) b}\hat{%
\beta}_{0}\right) \\
&=&\dfrac{\sum_{y\in
\mathbb{Z}
\backslash \left\{ 0\right\} }\left\vert y\right\vert ^{r}\hat{\alpha}%
_{y}-\left( 1-b\right) \sum_{y\in
\mathbb{Z}
\backslash \left\{ 0\right\} }\left\vert y\right\vert ^{r}\hat{\alpha}_{-y}}{%
b}\hat{\beta}_{0}=\sum_{y\in
\mathbb{Z}
\backslash \left\{ 0\right\} }\left\vert y\right\vert ^{r}\hat{\alpha}_{y}%
\hat{\beta}_{0},
\end{eqnarray*}%
which has nothing to do with parameter $b$.
\end{proof}

\section{\label{Sec.: IFFT Details}Computing probability mass functions of
price changes}

Let $a_{1},...,a_{n}$ be non-zero integers. We will demonstrate how the
inverse fast Fourier transform (IFFT) can be used to calculate $%
p_{y}\triangleq \mathbb{P}\left( Y=y\right) $ of $Y\triangleq
\sum_{k=1}^{n}a_{k}X_{k}\in
\mathbb{Z}
$, where $X_{k}$'s are independent Poisson random variables with intensities
$\lambda _{k}$.

The characteristic function of $Y$ is:%
\begin{equation*}
\varphi \left( \theta \ddagger Y\right) \triangleq \mathbb{E}\left( e^{%
\mathrm{i}\theta Y}\right) =\mathbb{E}\left( e^{\sum_{k=1}^{n}\mathrm{i}%
\theta a_{k}X_{k}}\right) =\prod_{k=1}^{n}\varphi \left( \theta
a_{k}\ddagger X_{k}\right) =\prod_{k=1}^{n}\exp \left( \lambda _{k}\left( e^{%
\mathrm{i}\theta a_{k}}-1\right) \right) .
\end{equation*}%
As $Y$ is discrete, the discrete IFFT can be used to get $p_{y}$. Note that $%
\varphi \left( \theta \ddagger Y\right) =\sum_{y=-\infty }^{\infty }e^{%
\mathrm{i}\theta y}p_{y}$, so the inverse Fourier transform is justified by,
for $y=0,1,2,...$, as $N\rightarrow \infty $,%
\begin{eqnarray*}
\dfrac{1}{N}\sum_{k=0}^{N-1}\varphi \left( -\dfrac{2\pi k}{N}\ddagger
Y\right) e^{\mathrm{i}2\pi ky/N} &=&\dfrac{1}{N}\sum_{k=0}^{N-1}\sum_{y^{%
\prime }=-\infty }^{\infty }p_{y^{\prime }}e^{-\mathrm{i}2\pi ky^{\prime
}/N}e^{\mathrm{i}2\pi ky/N} \\
&=&\dfrac{1}{N}\sum_{y^{\prime }=-\infty }^{\infty }p_{y^{\prime
}}\sum_{k=0}^{N-1}e^{\mathrm{i}2\pi k\left( y-y^{\prime }\right)
/N}\rightarrow \sum_{y^{\prime }=-\infty }^{\infty }p_{y^{\prime
}}1_{\left\{ y=y^{\prime }\right\} }=p_{y}, \\
\dfrac{1}{N}\sum_{k=0}^{N-1}\varphi \left( \dfrac{2\pi k}{N}\ddagger
Y\right) e^{\mathrm{i}2\pi ky/N} &=&\dfrac{1}{N}\sum_{y^{\prime }=-\infty
}^{\infty }p_{y^{\prime }}\sum_{k=0}^{N-1}e^{\mathrm{i}2\pi k\left(
y+y^{\prime }\right) /N}\rightarrow \sum_{y^{\prime }=-\infty }^{\infty
}p_{y^{\prime }}1_{\left\{ y=-y^{\prime }\right\} }=p_{-y},
\end{eqnarray*}%
where the approximation here comes from the Riemann sum%
\begin{equation*}
\dfrac{1}{N}\sum_{k=0}^{N-1}e^{\mathrm{i}2\pi k\theta ^{\prime
}/N}=\int_{0}^{1}e^{\mathrm{i}2\pi \theta \theta ^{\prime }}\mathrm{d}\theta
+O\left( N^{-1}\right) =1_{\left\{ \theta ^{\prime }=0\right\} }+O\left(
N^{-1}\right) .
\end{equation*}

Hence, the IFFT will take the input of%
\begin{equation*}
\left( \varphi \left( 0\ddagger Y\right) ,\varphi \left( -2\pi /N\ddagger
Y\right) ,...,\varphi \left( -\dfrac{2\pi \left( N-1\right) }{N}\ddagger
Y\right) \right) ^{T}
\end{equation*}%
and give the output as $\left( p_{0},...,p_{N-1}\right) ^{T}$ approximately;
similarly, with the input of%
\begin{equation*}
\left( \varphi \left( 0\ddagger Y\right) ,\varphi \left( 2\pi /N\ddagger
Y\right) ,...,\varphi \left( \dfrac{2\pi \left( N-1\right) }{N}\ddagger
Y\right) \right) ,
\end{equation*}%
the IFFT will give the output as $\left( p_{0},p_{-1},...,p_{-\left(
N-1\right) }\right) ^{T}$ approximately.

In Figure \ref{fig.:Empirical log-histograms of jumps}, we take $N=60$ in
order to accurately compute $p_{y}$ for $y\in \left\{ -30,...,30\right\} $.

\section{\label{Sec.: Data cleaning}Cleaning of the empirical data}

Here we discuss the preprocessing procedures for the raw empirical data. For
each data set, our database has the current bid price (\texttt{bid}), bid
size (\texttt{bidsz}), ask price (\texttt{ask}), ask size (\texttt{asksz}),
trade price (\texttt{trade}), trade volume (\texttt{tradesz}) and the record
logging time on the data server (\texttt{log\_t}). The following events will
be logged into the raw data set chronologically:

\begin{enumerate}
\item A change of \texttt{bid} and \texttt{bidsz} (or \texttt{ask} and
\texttt{asksz}), which will leave missing \texttt{ask}, \texttt{asksz} (or
\texttt{bid}, \texttt{bidsz}), \texttt{trade} and \texttt{tradesz}.

\item A new instance of \texttt{trade} and \texttt{tradesz}, which will
leave missing \texttt{bid}, \texttt{bidsz}, \texttt{ask} and \texttt{asksz}.
This is usually followed by a record that shows the newest \texttt{bid} and
\texttt{ask} status after the trading. Sometimes this updating record will
be combined with its previous trading record.
\end{enumerate}

\paragraph{Step 1: Remove the wrong records (Optional).}

We forward fill the missing values in columns \texttt{bid} and \texttt{ask};
after this, we examine whether the recorded trade price lies in the range
from \texttt{bid} minus a factor \texttt{M} of tick sizes to \texttt{ask}
plus \texttt{M} tick sizes. \texttt{M} here is manually chosen as $9.5$ for
the two EUC data sets, which is a conservative setting and will only remove
those visually inspectable errors. We do not use this step for the two TNC
data sets.

\paragraph{Step 2: Preserve only the trading activities.}

Since in this paper we are only concerned with the dynamics of the trade
prices, we throw out all the other data records that are not directly
associated with a trade, that is, those rows with missing \texttt{trade} and
\texttt{tradesz}.

\paragraph{Step 3-1: Associate a unique price to a time tag.}

Occasionally several data feeds will be pushed into the data server almost
at the same time but perhaps with different prices. Then we iteratively
define a unique price for this particular time tag by the price that is
closest to the price of the previous time tag. Figure \ref{fig.: Multiple
Trading} illustrates this.
\begin{figure}[t]
\centering\includegraphics{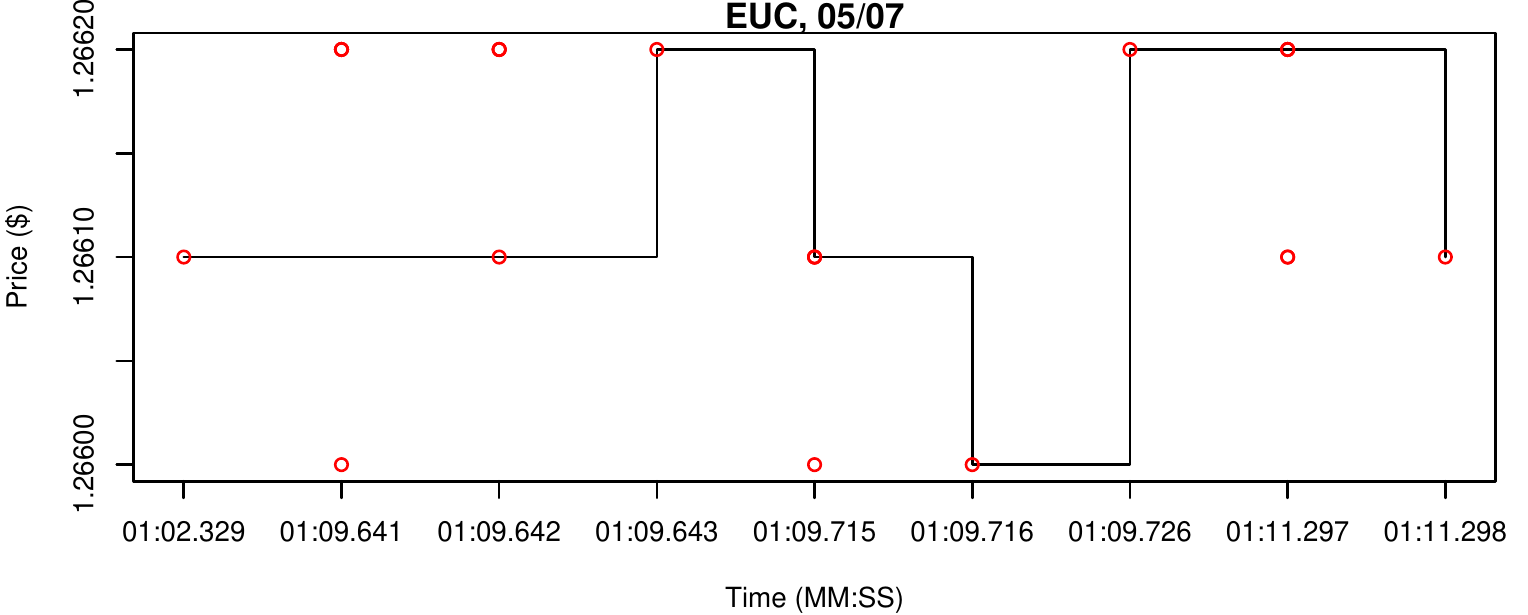}
\caption{An illustration of the definition of a unique price when multiple
data records share the same time tag. The white points illustrate all the
trade prices appeared in the data with the same time tag; the black solid
line represent the unique price we define. This data is EUC1006 between
00:01:02 and 00:01:12 on May 7. Code: \texttt{Price\_Plots.R}.}
\label{fig.: Multiple Trading}
\end{figure}

\paragraph{Step 3-2: Do nothing for an ambiguous case.}

If it happens that there are exactly two trade prices with the same time tag
that are just one tick above and one tick below the previous price---which
we call an ambiguous case (e.g. at time 01:09.641 in Figure \ref{fig.:
Multiple Trading}), then we will use the previous price as the price for the
current time tag.

\paragraph{Step 4: Keep only jumps.}

For our analysis it is sufficient to keep only the columns \texttt{Time} and
\texttt{Price}, such that \texttt{Time} is always increasing without
duplicates while \texttt{Price} have no two adjacent elements that take the
same value. \texttt{Price} is always the value the price process takes
immediately after a jump.

\section{\label{Sec.: Nonparametric d}Non-parametric inference of the trawl
function}

Let $\tilde{d}\left( s\right) $ be the \emph{non-squashed trawl function}
with $\tilde{d}\left( -\infty \right) =0$ such that $d\left( s\right)
\triangleq b+\left( 1-b\right) \tilde{d}\left( s\right) $. Then equation (%
\ref{trawl increament area}) implies $\partial _{\delta }leb\left( A_{\delta
}\backslash A\right) =\left( 1-b\right) \tilde{d}\left( -\delta \right) $.
Hence,%
\begin{equation*}
\dfrac{\partial \widehat{\sigma _{\delta }^{2}}}{\partial \delta }=\left(
\dfrac{b+2\left( 1-b\right) \tilde{d}\left( -\delta \right) }{2-b}\right)
\sum_{y\in
\mathbb{Z}
\backslash \left\{ 0\right\} }y^{2}\hat{\alpha}_{y}\hat{\beta}_{0},\quad
\dfrac{\partial \widehat{\sigma _{\delta }^{2}}}{\partial \delta }\left(
\infty \right) =\left( \dfrac{b}{2-b}\right) \sum_{y\in
\mathbb{Z}
\backslash \left\{ 0\right\} }y^{2}\hat{\alpha}_{y}\hat{\beta}_{0},
\end{equation*}%
which then gives us%
\begin{equation*}
b=\dfrac{2\left( \partial _{\delta }\widehat{\sigma _{\delta }^{2}}\right)
\left( \infty \right) }{\left( \partial _{\delta }\widehat{\sigma _{\delta
}^{2}}\right) \left( 0\right) +\left( \partial _{\delta }\widehat{\sigma
_{\delta }^{2}}\right) \left( \infty \right) },\ \tilde{d}\left( -\delta
\right) =\dfrac{\partial _{\delta }\widehat{\sigma _{\delta }^{2}}-\left(
\partial _{\delta }\widehat{\sigma _{\delta }^{2}}\right) \left( \infty
\right) }{\left( \partial _{\delta }\widehat{\sigma _{\delta }^{2}}\right)
\left( 0\right) -\left( \partial _{\delta }\widehat{\sigma _{\delta }^{2}}%
\right) \left( \infty \right) },
\end{equation*}%
where $\left( \partial _{\delta }\widehat{\sigma _{\delta }^{2}}\right)
\left( 0\right) =\sum_{y\in
\mathbb{Z}
\backslash \left\{ 0\right\} }y^{2}\hat{\alpha}_{y}\hat{\beta}_{0}$.
Therefore, by estimating $\partial _{\delta }\widehat{\sigma _{\delta }^{2}}$
for every $\delta $ the trawl function is revealed non-parametrically.

In practice, it might be demanding to get $\left( \partial _{\delta }%
\widehat{\sigma _{\delta }^{2}}\right) \left( \infty \right) $, the
asymptotic slope of the sample variogram $\widehat{\sigma _{\delta }^{2}}$
against $\delta $, because as $\delta $ being larger, the sample size we use
to calculate $\widehat{\sigma _{\delta }^{2}}$ is getting smaller. Is it
possible to use other moment equations in Theorem \ref{thm:dist of returns}
to identify $b$ rather than through the boundary behavior of $\partial
_{\delta }\widehat{\sigma _{\delta }^{2}}$ for $\delta \rightarrow \infty $?
Unfortunately, the answer is no. $b$ and $d$ are not identifiable if we
neither parameterize $d$ nor adopt a boundary estimation for $b$ at $\delta
\rightarrow \infty $.

To justify this point, assume that one wants to employ all the other
additional moment equations in Theorem \ref{thm:dist of returns} to identify
$b$:%
\begin{eqnarray*}
\kappa _{j}\left( P_{\delta }-P_{0}\right) &=&\left( b\delta +\left(
1+\left( -1\right) ^{j}\right) leb\left( A_{\delta }\backslash A\right)
\right) \kappa _{j}\left( L_{1}\right) \\
&=&\left( b\delta +\left( 1+\left( -1\right) ^{j}\right) leb\left( A_{\delta
}\backslash A\right) \right) \sum_{y\in
\mathbb{Z}
\backslash \left\{ 0\right\} }y^{j}\nu \left( y\right) ,\ \ \ \ j\geq 3.
\end{eqnarray*}%
Denote the sample $j$-th cumulant with sampling interval $\delta $ as $%
\widehat{\kappa _{j,\delta }}$. Then equation (\ref{Levy measure moment
estimate in terms of b}) implies that%
\begin{eqnarray*}
\dfrac{\partial \widehat{\kappa _{j,\delta }}}{\partial \delta } &=&\left(
b+\left( 1+\left( -1\right) ^{j}\right) \dfrac{\partial }{\partial \delta }%
leb\left( A_{\delta }\backslash A\right) \right) \sum_{y\in
\mathbb{Z}
\backslash \left\{ 0\right\} }y^{j}\dfrac{\hat{\alpha}_{y}-\left( 1-b\right)
\hat{\alpha}_{-y}}{\left( 2-b\right) b}\hat{\beta}_{0} \\
&=&\left( b+\left( 1+\left( -1\right) ^{j}\right) \left( 1-b\right) \tilde{d}%
\left( -\delta \right) \right) \dfrac{\sum_{y\in
\mathbb{Z}
\backslash \left\{ 0\right\} }y^{j}\hat{\alpha}_{y}-\left( 1-b\right)
\sum_{y\in
\mathbb{Z}
\backslash \left\{ 0\right\} }y^{j}\hat{\alpha}_{-y}}{\left( 2-b\right) b}%
\hat{\beta}_{0} \\
&=&\left( b+\left( 1+\left( -1\right) ^{j}\right) \left( 1-b\right) \tilde{d}%
\left( -\delta \right) \right) \dfrac{1-\left( -1\right) ^{j}\left(
1-b\right) }{\left( 2-b\right) b}\sum_{y\in
\mathbb{Z}
\backslash \left\{ 0\right\} }y^{j}\hat{\alpha}_{y}\hat{\beta}_{0} \\
&=&\left\{
\begin{array}{cc}
\sum_{y\in
\mathbb{Z}
\backslash \left\{ 0\right\} }y^{j}\hat{\alpha}_{y}\hat{\beta}_{0} & ,\
\text{for }j\text{ odd} \\
\dfrac{\partial \widehat{\sigma _{\delta }^{2}}/\partial \delta }{\sum_{y\in
\mathbb{Z}
\backslash \left\{ 0\right\} }y^{2}\hat{\alpha}_{y}\hat{\beta}_{0}}%
\sum_{y\in
\mathbb{Z}
\backslash \left\{ 0\right\} }y^{j}\hat{\alpha}_{y}\hat{\beta}_{0} & ,\
\text{for }j\text{ even}%
\end{array}%
\right. ,
\end{eqnarray*}%
which is still again independent of $b$.

\end{document}